\newif\iflipics
\newcommand{\AdapRound}[1]{\Statex // Query round #1}
\patchcmd{\thebibliography}{#1}{A99}{}{}   %
\newcommand{\cut}{\mathcal{C}}
\newcommand{\mintcut}{\mathrm{cut}}
\newcommand{\R}{\mathbb{R}}
\newcommand{\C}{\mathcal{C}}
\newcommand{\Exp}[2]{\mathbb{E}_{ #2}\left[ #1 \right]}
\newcommand{\Probability}[1]{\Pr\left[ #1 \right]}
\newcommand{\tO}{\tilde{O}}
\DeclareMathOperator{\polylog}{polylog}
\providecommand{\set}[1]{{\{#1\}}}
\newtheorem{question}[theorem]{Question}
\authorrunning{Y. Kenneth-Mordoch and R. Krauthgamer}
\author{Yotam Kenneth-Mordoch}{Weizmann Institute of Science, Rehovot, Israel}{yotam.kenneth@weizmann.ac.il}{https://orcid.org/0000-0002-9212-9172}{}
\author{Robert Krauthgamer}{Weizmann Institute of Science, Rehovot, Israel}{robert.krauthgamer@weizmann.ac.il}{https://orcid.org/0009-0003-8154-3735}{}
\keywords{Cut Queries, Round Complexity, Submodular Optimization}
\newtheorem{theorem}{Theorem}[section]
\newtheorem{question}[theorem]{Question}
\newtheorem{claim}[theorem]{Claim}
\newtheorem{proposition}[theorem]{Proposition}
\newtheorem{lemma}[theorem]{Lemma}
\newtheorem{corollary}[theorem]{Corollary}
\newtheorem{definition}[theorem]{Definition}
\author{
  Yotam Kenneth-Mordoch
  \qquad
  Robert Krauthgamer%
  \thanks{The Harry Weinrebe Professorial Chair of Computer Science.
    Work partially supported by the Israel Science Foundation grant \#1336/23,
    by the Israeli Council for Higher Education (CHE) via the Weizmann Data Science Research Center,
    and by a research grant from the Estate of Harry Schutzman.
  }
  \\ Weizmann Institute of Science
  \\ \texttt{\{yotam.kenneth,robert.krauthgamer\}@weizmann.ac.il}
}
\title{Cut-Query Algorithms with Few Rounds}
\begin{document}

\maketitle

\begin{abstract}
In the cut-query model, the algorithm can access the input graph $G=(V,E)$
only via cut queries that report, given a set $S\subseteq V$,
the total weight of edges crossing the cut between $S$ and $V\setminus S$.
This model was introduced by Rubinstein, Schramm and Weinberg [ITCS'18]
and its investigation has so far focused on the number of queries
needed to solve optimization problems, such as global minimum cut.
We turn attention to the round complexity of cut-query algorithms,
and show that several classical problems can be solved in this model
with only a constant number of rounds.

Our main results are algorithms for finding a minimum cut in a graph,
that offer different tradeoffs between round complexity and query complexity,
where $n=|V|$ and $\delta(G)$ denotes the minimum degree of $G$:
(i) $\tilde{O}(n^{4/3})$ cut queries in two rounds in unweighted graphs;
(ii) $\tilde{O}(rn^{1+1/r}/\delta(G)^{1/r})$ queries in $2r+1$ rounds for any integer $r\ge 1$ again in unweighted graphs; and
(iii) $\tilde{O}(rn^{1+(1+\log_n W)/r})$ queries in $4r+3$ rounds for any $r\ge1$ in weighted graphs.
We also provide algorithms that find a minimum $(s,t)$-cut
and approximate the maximum cut in a few rounds.
\end{abstract}

\section{Introduction}
\label{sec:introduction}

Graph cuts are a fundamental object in computer science with numerous applications,
in part because they are a basic example of submodular functions.
Recall that a set function $f:2^V\to \R$ is \emph{submodular}
if it satisfies the diminishing-returns property
\begin{equation*}
  \forall A\subset B\subset V, \forall v\not\in B,
  \quad
  f(A\cup \set{v}) - f(A) \ge f(B\cup \set{v}) - f(B) .
\end{equation*}
From the viewpoint of submodular optimization,
it is natural to study the complexity of graph algorithms in the \emph{cut-query model},
which correspond to value queries to the submodular function.
Here, the input graph $G=(V,E,w)$ can be accessed only via oracle queries to its cut function,
namely to $\mintcut_G: 2^V\to \R$, given by 
\begin{equation*}
    \forall S\subseteq V,
    \qquad
    \mintcut_G(S) 
    := \sum_{e\in E:\ |e\cap S|=1} w(e) .
\end{equation*}
Here and throughout $w(e)>0$ is the weight of the edge $e\in E$,
precluding edges of zero weight as they cannot be detected by the algorithm.
As usual, an unweighted graph models unit weights, i.e., $w(e)=1$ for all $e$. 

The \emph{query complexity} of an algorithm is the number of queries
it makes in the worst-case.
While this is often the primary performance measure, 
it is highly desirable that algorithms can be parallelized,
as such algorithms can utilize better the available computing resources. 
The literature on the parallelization of submodular optimization \cite{BS18,EN19,FMZ19, CGJS23},
relies on a measure called \emph{round complexity}, which count the number of sequential rounds of queries the algorithm makes.\footnote{The aforementioned references use the term \emph{adaptivity}, which is the number of rounds minus one.}

\begin{definition}[Round Complexity]
An algorithm has \emph{round complexity} $r$ if it performs $r$ rounds of queries,
meaning that queries in round $t\in [r]$ can depend only on
the answers to queries in previous rounds $1,\ldots,t-1$.
An algorithm is called \emph{non-adaptive} if it uses a single round of queries.
\end{definition}

We study the cut-query model,
from the perspective of round complexity focusing on the problem of finding a global minimum cut.
Previous work~\cite{RSW18,MN20,AEGLMN22,ASW25} shows that
a minimum cut of a graph can be found using
$O(n)$ queries in unweighted graphs and $\tO(n)$ queries in weighted graphs. 
However, it paid no attention to the algorithms' round complexity,
and we fill this gap by studying the tradeoff between round complexity and query complexity.

An analysis of the round complexity of existing algorithms, with some simple modifications, shows that it is possible to find a global minimum cut of a graph using $\tO(n)$ queries with round complexity $\tO(\log^2 n)$ \cite{RSW18,MN20}.
It should be noted that other algorithms do require round complexity $\Omega(n)$ \cite{AEGLMN22,ASW25}.
As another baseline, it is possible to recover the entire graph using $O(n^2)$ deterministic non-adaptive queries \cite{BM11},\footnote{$\tO(|E|)$ queries suffice, however without some bound on the size of the graph the algorithm must make $O(n^2)$ queries in the worst case.} which clearly suffices to find a minimum cut.
This motivates our main question
\begin{question}
    What is the tradeoff between the round complexity and the query complexity of finding a global minimum cut in the cut-query model?
\end{question}

\subsection{Main Results}
We show that the polylogarithmic round complexity of previous work is not necessary for finding a global minimum cut.
We begin by presenting two results for unweighted graphs.
The first one shows that using even two rounds, we can substantially improve on the $O(n^2)$ query complexity of the naive algorithm.
All of our algorithms return a vertex set $S\subseteq V$ and the value of the minimum cut.
In the case of unweighted graphs the algorithms can also report the edges of the cut using one additional round of queries and with the same asymptotic query complexity.
\begin{theorem}[Unweighted Minimum Cut with $2$ rounds]
    \label{theorem:global-minimum-cut-2-rounds}
    Given an unweighted graph $G$ on $n$ vertices, it is possible to find a minimum cut of $G$ using $\tO(n^{4/3})$ cut queries in $2$ rounds.
    The algorithm is randomized and succeeds with probability $1-n^{-2}$.
\end{theorem}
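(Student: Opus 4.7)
The plan is a dichotomy on the minimum degree $\delta=\delta(G)$, which satisfies $\mincut(G)\le \delta$. In round~1, spend $n$ queries on the singletons $\set{v}$ to learn every degree and thus $\delta$; in parallel issue a non-adaptive batch of $\tO(n^{4/3})$ cut queries on random vertex subsets $S$, designed to be reinterpreted in round~2 depending on the value of $\delta$. A useful primitive throughout is that for $v\notin S$,
$|N(v)\cap S|=\tfrac12\bigl(\mintcut(\set{v})+\mintcut(S)-\mintcut(\set{v}\cup S)\bigr)$,
so a query of the form $\set{v}\cup S$ on a random $S$ encodes the number of neighbors of $v$ inside $S$.

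Case~A, $\delta\le n^{1/3}$: since $\mincut(G)\le n^{1/3}$, it suffices to extract a sparse $\delta$-edge-connectivity certificate of $G$ --- a subgraph with $\tO(n\delta)\le \tO(n^{4/3})$ edges that preserves every cut of value at most $\delta$, for instance a union of $\delta$ edge-disjoint spanning forests \emph{\`a la} Nagamochi--Ibaraki. Each such forest can be assembled from $\tO(n)$ cut queries by binary-searching over the round-1 random subsets to locate edges between growing components; the $\delta$ forests are built in parallel in round~2, within the query budget, and the min cut of the certificate equals $\mincut(G)$.

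Case~B, $\delta>n^{1/3}$: the graph now has at least $\tfrac12 n^{4/3}$ edges, but $\mincut(G)$ can still be small. Here I would apply Karger-style random sparsification: sampling each edge with probability $p=\Theta(\log n/\delta)$ yields a subgraph of $\tO(n)$ expected edges whose scaled min cut equals $(1\pm o(1))\mincut(G)$, provided $\mincut(G)$ is not too small. The sampling is simulated by round-2 random-subset queries through the neighbor-count primitive above; when $\mincut(G)$ is too small for Karger concentration to apply, one recovers the small cut by enumerating the $O(n^2)$ near-minimum cut candidates predicted by Karger's counting theorem and testing each against the round-1 information.

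The main obstacle is Case~B: simulating edge-level Bernoulli sampling from subset-cut queries without extra rounds, and smoothly interpolating between the small- and large-$\mincut(G)$ regimes within the $\tO(n^{4/3})$ round-2 budget. The $1-n^{-2}$ success probability comes from union-bounding Karger's sampling and counting inequalities over the $O(n^2)$ near-minimum cuts, together with standard concentration for the random-subset probes in round~1.
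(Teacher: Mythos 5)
Your dichotomy on $\delta(G)$ is a reasonable starting point, but both branches have genuine gaps, and the approach diverges substantially from the paper's (which uses a single $\tau$-star contraction with $\tau=\max\{n^{1/3},\delta(G)\}$ to produce a graph with $\tO(n^{4/3})$ edges preserving the minimum cut, then recovers that graph non-adaptively via additive-query simulation). In Case~A, the claim that $\delta$ edge-disjoint maximal spanning forests can be "assembled by binary-searching" and "built in parallel in round~2" is exactly the hard part, not a routine step: binary search is inherently adaptive (it needs $\Omega(\log n)$ rounds unless you precompute exponentially many branches), and ensuring edge-disjointness across forests packed in parallel is the central difficulty that the paper only resolves in its $(2r)$-round algorithm via a connection to cut sparsification (\Cref{lemma:k-tree-packing}); even that machinery costs $\tO(krn^{1+1/r})$ queries in $2r$ rounds, which does not fit in one remaining round within an $\tO(n^{4/3})$ budget. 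Note also that cut preservation requires the forest packing to be \emph{maximal}, which is a global greedy property you cannot certify from a single batch of non-adaptive probes.

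Case~B has three problems. First, Karger-style sampling at rate $p=\Theta(\log n/\delta)$ concentrates only when the minimum cut $\lambda$ is $\Omega(\delta/\polylog n)$; when $\lambda\ll\delta$ (which is exactly when the problem is interesting) the sampled graph does not preserve the minimum cut. Second, your fallback of "enumerating the $O(n^2)$ near-minimum cut candidates predicted by Karger's counting theorem" is not an algorithm: the counting theorem bounds \emph{how many} such cuts exist but gives no way to list them from cut-query access — identifying them is essentially the problem you are trying to solve. Third, even in the favorable regime, sampling yields only a $(1\pm o(1))$ approximation, whereas the theorem demands the exact minimum cut; converting a sparsifier into an exact answer requires further rounds (e.g., contracting edges outside all near-minimum cuts and recovering the residual graph, as in the paper's \Cref{sec:min-cut-sparsifier-k-rounds}, which costs $3r+4$ rounds). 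Finally, the neighbor-count primitive $|N(v)\cap S|$ reveals counts, not edge identities, so it cannot by itself simulate independent Bernoulli edge sampling; recovering which edges were sampled needs a sparse-recovery or edge-sampling primitive such as the paper's \Cref{lemma:edge-sampling-1-round}. The missing idea, relative to the paper, is a contraction that directly bounds the number of \emph{edges} (not vertices) of the contracted graph by $\tO(n^{4/3})$ while preserving the minimum cut with constant probability, after which full graph recovery in one non-adaptive round finishes the job.
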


We further provide a smooth tradeoff between the number of queries and the number of rounds, namely $\tO(n^{1+1/r}/\delta(G)^{1/r})$ cut queries with in $2r$ rounds, where $\delta(G)\ge1$ is the minimum degree of $G$.\footnote{We can assume that the graph is connected since connectivity can be checked using $\tO(n)$ cut queries in $1$ round (non-adaptive) \cite{ACK21}.}
\begin{theorem}[Unweighted Minimum Cut with $O(r)$ rounds]
    \label{theorem:min-cut-k-rounds}
    Given an unweighted graph $G$ on $n$ vertices and a parameter $r\in \{1,2,\ldots,\log n\}$, it is possible to find a minimum cut of $G$ using $\tO(rn^{1+1/r}\delta(G)^{-1/r})$ cut queries in $2r+1$ rounds.
    The algorithm is randomized and succeeds with probability $1-n^{-1}$.
\end{theorem}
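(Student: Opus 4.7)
The plan is to extend the two-round algorithm of Theorem~\ref{theorem:global-minimum-cut-2-rounds} to $2r+1$ rounds via an $r$-stage recursive sparsification. A single first round will learn all degrees, and each of the subsequent $r$ two-round stages will coarsen a partition of $V$ by a factor of $\tilde{\Theta}((n/\delta)^{1/r})$, until we have identified a sparse subgraph $H\subseteq G$ that is guaranteed to contain a global minimum cut of $G$.

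In the first round, I query $\mintcut_G(\{v\})$ for every $v\in V$, using $n$ queries; this determines every degree and, in particular, the upper bound $\lambda(G)\le \delta$, where $\delta=\delta(G)$. I then maintain, across the $r$ stages, a partition $\mathcal{P}$ of $V$ into clusters together with the discovered subgraph $H\subseteq G$, under the invariant that inside each cluster $H$ is already $\delta$-edge-connected (so no cut of $G$ of value $\le \delta$ can split a cluster). Each two-round stage picks a fan-out parameter $k=\tilde{\Theta}((n/\delta)^{1/r})$ and works as follows. In the first round of the stage, I sample $\tO(k)$ random subsets of clusters and query $\mintcut$ on each; by combining these with the already-known degrees via inclusion--exclusion, I identify candidate inter-cluster edges. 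In the second round of the stage, I verify and locate these candidates individually, add the confirmed edges to $H$, and merge any clusters that have become $\delta$-connected inside the updated $H$. Each stage thus spends $\tO(nk)=\tO(n\cdot(n/\delta)^{1/r})$ queries, and the total across $r$ stages is $\tO(rn^{1+1/r}\delta^{-1/r})$. After at most $r$ such stages the partition stabilises, $H$ has $O(n\delta)$ edges, and $H$ preserves every cut of $G$ of value at most $\delta$ in the sense of Nagamochi and Ibaraki; hence the global minimum cut of $G$ can be read off from $H$ with no further queries.

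The main obstacle will be proving that each two-round stage reduces the number of remaining clusters by the intended factor of $k$ with high probability, since this is what bounds the number of stages by $r$ and therefore the round count by $2r+1$. A Chernoff-style argument on the random subset samples should show that $\Omega(k)$ new inter-cluster edges per cluster are discovered per stage, and a union bound over the at most $n$ clusters and the $r\le \log n$ stages then yields the stated $1-n^{-1}$ success probability. A further subtlety is that the inclusion--exclusion decoding of individual edges from a handful of cut queries must avoid over-counting when several candidate edges collide in the same sample, which will constrain how the subset sizes scale jointly with $\delta$ and $k$.
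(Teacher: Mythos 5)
There is a genuine gap, and it is a counting obstruction rather than a fixable technicality. Your invariant requires that inside each cluster the discovered subgraph $H$ is $\delta$-edge-connected, and you certify mergers by checking $\delta$-connectivity \emph{inside the explicitly recovered} $H$. A $\delta$-edge-connected graph on $s$ vertices has at least $s\delta/2$ edges, so once the partition has coarsened so that most of the $n$ vertices lie in non-singleton clusters, $H$ must contain $\Omega(n\delta)$ edges, each of which you ``verify and locate individually'' at a cost of at least one query. This already exceeds your total budget of $\tO(rn^{1+1/r}\delta^{-1/r})$ whenever $\delta$ is large (e.g.\ $\delta=\sqrt{n}$ and $r=2$ gives a budget of $\tO(n^{5/4})$ against $\Omega(n^{3/2})$ edges to discover). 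The same objection applies at the end: ``reading off'' the minimum cut from an $H$ with $O(n\delta)$ edges presupposes that you know those edges. No amount of cleverness in the inclusion--exclusion decoding can evade this, since recovering $m$ edges from cut queries costs $\tOmega(m)$ queries information-theoretically (cf.\ \Cref{corollary:graph-recovery-cut}).

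The paper's proof avoids this by never certifying $\delta$-connectivity explicitly on the original vertex set. It first applies a $2$-out contraction (\Cref{theorem:2-out-contraction}): sample two uniformly random incident edges per vertex in a single round and contract them, which with constant probability preserves every non-trivial minimum cut and reduces the vertex count to $\tO(n/\delta)$ \emph{probabilistically}, at a cost of only $\tO(n)$ queries. Only then does it pack $\delta$ edge-disjoint maximal forests (\Cref{lemma:k-tree-packing}) in the contracted graph, where the Nagamochi--Ibaraki certificate has just $\tO(n)$ edges rather than $\Omega(n\delta)$; the per-stage edge-count reduction there is driven by contracting strong components after uniform edge sampling, with the $k$-fold oversampling guaranteeing $k$ disjoint forests via a cut-counting argument. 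If you want to salvage your outline, the missing ingredient is precisely this cheap randomized vertex-count reduction to $\tO(n/\delta)$ before any explicit connectivity certification begins; the factor $\delta^{-1/r}$ in the query bound comes entirely from running the $r$-stage procedure on $\tO(n/\delta)$ vertices instead of $n$.
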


We also obtain a similar tradeoff also for weighted graphs.
\begin{theorem}[Weighted Minimum Cut Graphs with $O(r)$ rounds]
    \label{theorem:weighted-min-cut}
    Given a weighted graph $G$ on $n$ vertices with integer edge weights bounded by $W$, and a parameter $r\in \{1,2,\ldots,\log n\}$,
    it is possible to find a minimum cut of $G$ using $\tO(rn^{1+(1+\log_n W)/r})$ cut queries in $4r+3$ rounds.
    The algorithm is randomized and succeeds with probability $1-n^{-2}$.
\end{theorem}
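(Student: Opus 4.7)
The plan is to reduce the weighted minimum-cut problem to the unweighted one by constructing a Karger-style skeleton on the fly via simulated sampling, and then invoking the unweighted algorithm of Theorem~\ref{theorem:min-cut-k-rounds}. Observe that the claimed query complexity factors as $\tO(rn\cdot(nW)^{1/r})$, which is exactly the unweighted bound $\tO(rn\cdot n^{1/r})$ with $n$ replaced by the \emph{effective multigraph size} $nW$. This matches the natural multigraph view of $G$: replace each edge of integer weight $w$ by $w$ parallel unit edges to obtain a multigraph with at most $n^2W$ unit edges whose cut function coincides with $\vcut_G$.

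Stage~1 approximates the min-cut value $\lambda$ within a constant factor. One round of $n$ singleton queries $S=\set{v}$ delivers all weighted degrees and hence $\lambda\le \min_v d_w(v)$. A constant-factor estimate $\tilde\lambda$ can then be obtained in $O(1)$ additional rounds using $\tO(n)$ queries, via standard cut-query techniques.

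Stage~2 constructs an implicit Karger skeleton: sample each unit edge of the multigraph interpretation with probability $p=\Theta(\log n/(\eps^2\tilde\lambda))$ via shared randomness, obtaining a skeleton $H$ whose scaled cut function approximates $\vcut_G$ within $(1\pm\eps)$ by Karger's sampling theorem. By the choice of $p$, the skeleton has minimum cut $\tO(\log n/\eps^2)$ and effective unit-edge count $\tO(nW)$ when $\lambda=\Theta(1)$, matching the target size. A cut query on $H$ is answered using a cut query on $G$ combined with a weight-class decoding step, where edges are bucketed by magnitude $2^i$ for $i=0,\ldots,\lceil\log W\rceil$; this is the source of the $\log W$ in the exponent of the final bound.

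Stage~3 invokes Theorem~\ref{theorem:min-cut-k-rounds} on $H$, treated as an unweighted multigraph. Its query complexity with effective edge count $\tO(nW)$ becomes $\tO(rn\cdot(nW)^{1/r})$, matching the claim. Each round of the unweighted algorithm corresponds to two rounds of the weighted simulation (one to read cut values in $G$, one to decode consistently into cuts of $H$), giving $2(2r+1)$ rounds, plus $O(1)$ from Stage~1 and a final verification round, totaling $4r+3$.

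The main obstacle is Stage~2: simulating cut queries on the random skeleton using cut queries on $G$ in a query-efficient, low-round way. A naive simulation either reveals individual edges (too expensive) or returns only noisy estimates that can spoil the exact minimum cut. Resolving this requires shared randomness fixed before each batch, weight-class bucketing, and concentration bounds guaranteeing that all simulated cut values within a round are mutually consistent with a single random sample of $H$. A secondary subtlety is converting an approximate minimum cut on $H$ into the exact minimum cut of $G$, which is handled by a final verification round that tests the small number of near-minimum candidate cuts directly with $\tO(n)$ cut queries each.
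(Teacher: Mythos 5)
Your proposal takes a genuinely different route from the paper, but it has two gaps that I do not see how to close. The paper proves \Cref{theorem:weighted-min-cut} by combining \Cref{lemma:mn20-summary} (the reduction of \cite{MN20} through tree packing, $2$-respecting cuts and the monotone matrix problem), \Cref{lemma:sparsifier-weighted-k-rounds} (an explicit cut-sparsifier construction in $3r+3$ rounds built on the two-round weight-proportional sampling primitive), and \Cref{lemma:monotone-cost-matrix}. Crucially, the sparsifier there is recovered \emph{explicitly} as a list of edges and is used only to determine which candidate cuts to evaluate; the candidate cuts themselves are then evaluated exactly by cut queries on $G$. Nothing in the paper ever requires answering a cut query on a randomly subsampled graph.

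The first gap in your argument is Stage~2. A cut query on $G$ returns $\sum_{e\in E(S,\overline S)} w(e)$, whereas a cut query on the skeleton $H$ must return $\sum_{e\in E(S,\overline S)} X_e$ where $X_e\sim\mathrm{Bin}(w(e),p)$ is the realized sample count of $e$. These are different random linear functionals of the unknown edge multiset crossing the cut, and no combination of cut queries on $G$ reveals the latter without identifying the individual edges crossing the cut (which is exactly what cut queries do not provide and what would cost up to $\Omega(n^2)$ queries to recover in general). Bucketing by weight class does not help: even knowing the total weight of the cut restricted to each weight class, you still cannot determine how many distinct edges of each class cross the cut, which is what the binomial sampling depends on. This is precisely the obstruction that forces the paper to recover the sparsifier edges explicitly (via weight-proportional sampling and strength estimation) rather than query a virtual subsample. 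The second gap is the passage from an approximate to an exact minimum cut. A Karger skeleton with constant $\eps$ only locates cuts within a constant factor of $\lambda$; the set of such cuts has size polynomial in $n$ (by the cut-counting bound, $O(n^{2\alpha})$ cuts of value $\alpha\lambda$), and enumerating or verifying them is not a single cheap round. The standard resolution, and the one the paper uses, is the $2$-respecting framework: pack trees in the sparsifier, reduce to $O(n^2)$ structured candidates, and evaluate them with the monotone-matrix machinery in $r$ rounds. Your ``final verification round'' silently assumes this entire machinery. Finally, the substitution $n\mapsto nW$ in the bound of \Cref{theorem:min-cut-k-rounds} is not justified: that theorem's complexity is parameterized by the number of \emph{vertices} (via $2$-out contraction and tree packing on a vertex-reduced graph), not by the number of unit edges of a multigraph interpretation, so the claimed bound $\tO(rn(nW)^{1/r})$ does not follow from it.
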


Finally, we note that our techniques can also be applied to approximating the maximum cut and finding a minimum $(s,t)$-cut with low adaptivity.
Furthermore, our result for weighted minimum cuts can also be applied to finding a minimum cut in dynamic streams, providing a smooth tradeoff between the number of passes and the storage complexity of the algorithm.
This holds since each cut query in a given round can be computed using $O(\log n)$ storage in a single pass by simply counting the edges in the cut.
Therefore, the round complexity of our algorithm translates immediately to the number of passes required by a streaming algorithm.
These results are detailed in \Cref{sec:additional-results}.

\subsection{Related Work}

\subparagraph*{Algorithms in the cut query model}
In recent years there have been several works on cut problems in the cut-query model, most of which focused on query complexity and not round complexity.
We will discuss the known results for several problems in the cut-query model.

Beginning with finding a minimum cut,  $O(n)$ randomized queries suffice for simple graphs \cite{RSW18,AEGLMN22}.
Meanwhile, finding a minimum cut in a weighted graph requires $\tO(n)$ randomized queries \cite{MN20}.
Finally, in a recent work it was shown that non-trivial deterministic algorithms exist for finding the minimum cut of a simple graph using $\tO(n^{5/3})$ queries \cite{ASW25}.
For the easier problem of determining whether a graph is connected, it is known that $O(n)$ randomized queries suffice \cite{CL23,LC24}.
Additionally, there exists a one round randomized algorithm for connectivity that uses $\tO(n)$ queries \cite{ACK21}.

Moving to the problem of finding a minimum $(s,t)$-cut.
In \cite{RSW18} it was shown that it is possible to compute the minimum $(s,t)$-cut of a simple graph using $O(n^{5/3})$ queries, by showing that using a cut sparsifier it is possible to find a small set of edges that contains the minimum $(s,t)$-cut.
This query complexity was matched in the deterministic setting by \cite{ASW25}.

We also mention the problem of finding a $1-\epsilon$ approximation of the maximum cut.
One way to achieve this is by constructing a cut sparsifier, this was show in \cite{RSW18} for simple graphs and in \cite{PRW24} for weighted graphs, both using $\tO(\epsilon^{-2} n)$ cut queries.
Furthermore, an algorithm for $1/2$-approximation using $O(\log n)$ cut queries in one round and a lower bound  of $\tilde{\Omega}(n)$ on the query complexity needed to find a $1-\epsilon$ approximation of the maximum cut are also known \cite{PRW24}.

Finally, there have been several works on graph problems in different query models, such as additive queries \cite{CK08,BM11,BM12}, quantum cut queries \cite{LSZ21,AL21,AEGLMN22}, matrix multiplication queries \cite{SWYZ19,AL21,AEGLMN22}, Bipartite Independent Set queries \cite{BHRRS20,ACK21,AMM22} and more.

\subparagraph*{Adaptivity in submodular optimization}
In recent years, there has been a growing interest in the adaptivity of algorithms for submodular optimization.
This line of work was initiated by the work of \cite{BS18}, which showed that constrained submodular maximization can be solved within a constant approximation factor in $O(\log n)$ rounds with a polynomial number of queries, the work also provides a matching lower bound.
A string of later works improved on the query complexity and the approximation guarantee, culminating in an algorithm with an optimal $1-1/e$ approximation using $\tO(n)$ queries  \cite{BS18,BBS18,BRS19,EN19,FMZ19,CDK21}.
The round complexity of unconstrained submodular maximization was also studied in \cite{CFK19}.

The adaptivity of algorithms for submodular minimization was also studied in several works.
It was shown that solving the minimum $(s,t)$-cut problem in weighted graphs using $O(\log n)$ rounds of queries requires $\tilde{\Omega}(n^{2})$ queries \cite{ACK19}.
For more general submodular functions, a bound of $\Omega(n/\log n)$ rounds for algorithms with a polynomial number of queries is known \cite{CCK21,BS20,CGJS22}.
Note that all these hard instances are non symmetric, and hence do not apply to the minimum cut problem.
It remains open to study the round complexity of symmetric submodular minimization problems in future work.

\section{Technical Overview}
\label{sec:technical-overview}
Each of our three algorithms for minimum cut is based on a different insight.
Our two-round algorithm uses contractions in the style of~\cite{KT19}
to create a smaller graph that preserves the minimum cut.
Similar contractions have been used in several minimum cut algorithm~\cite{KT19,GNT20,AEGLMN22},
however the existing algorithms aim to reduce the number of vertices in the graph,
while our variant, called $\tau$-star contraction, reduces the number of edges. 

Our $(2r)$-round algorithm for unweighted graphs employs a known approach
of applying a contraction procedure
and then packing (in the contracted graph) edge-disjoint forests.
The main challenge to ensure that forests are edge disjoint when packing them in parallel,
and we overcome it by leveraging an elegant connection to cut sparsification,
which shows that amplifying the sampling probabilities of the sparsifier construction
by a factor of $k$ ensures finding $k$ edge-disjoint forests.

Our algorithm for weighted minimum cut follows the approach of~\cite{MN20},
which constructs a $(1\pm\epsilon)$-cut-sparsifier
and then solves a monotone‑matrix problem that is easily parallelizable. 
Prior work in the cut-query model provided cut sparsifiers \cite{RSW18, PRW24},
but their core step of weight‑proportional edge sampling requires $O(\log n)$ rounds. 
We eliminate this bottleneck with our two-round weight-proportional edge sampling primitive (\Cref{lemma:weighted-edge-sampling}),
and thereby build a $(1\pm \epsilon)$-cut-sparsifier in $3 r$ rounds,
and complete the entire algorithm in $O(r)$ rounds.

A common thread to all our algorithms is the use of edge sampling,
both directly as part of the algorithms
and to construct complex primitives such as forest packings and cut sparsifiers.
In edge sampling, the input is a source vertex $s$ and a target vertex set $T\subseteq V\setminus\{s\}$,
and the goal is to sample an edge from $E(s,T) \coloneqq E\cap (\{s\}\times T)$.
We need two types of edge sampling,
both returning a sampled edge $e\in E(s,T)$ and its weight $w(e)$.
The first one is \emph{uniform edge sampling},
which picks each $e\in E(s,T)$ with equal probability $1/|E(v,T)|$
regardless of its weight, similarly to $l_0$ sampling;
the second one is \emph{weight-proportional edge sampling},
which picks each $e\in E(s,T)$ with probability $w(e)/w(E(v,T))$,
similarly to $l_1$ sampling.
In unweighted graphs these two definitions clearly coincide. 

Weight-proportional edge sampling was used extensively in previous algorithms
for the cut-query model \cite{RSW18,MN20,AEGLMN22,PRW24},
although it had a naive implementation that requires $O(\log n)$ rounds.
This naive implementation, which was proposed in \cite{RSW18},
follows a binary search approach.
Each step partitions the set $T$ at random into two sets $T_1,T_2$ of equal cardinality,
and finds the total weight of the edges in $E(s,T_1)$ and $E(s,T_2)$.
The algorithm then decides whether to continue with $T_1$ or $T_2$
randomly in proportion to their weights.
This approach clearly requires $\Omega(\log n)$ rounds,
which exceeds the $O(1)$ round complexity we aim for,
and furthermore yields only weight-proportional sampling,
while some of our algorithms require uniform sampling in weighted graphs.

\subsection{Edge Sampling Primitives}
Our first technical contribution is a procedure for non-adaptive uniform edge sampling using $O(\log^3 n/\log\log n)$ queries that returns the weight of the sampled edge.
An algorithm for uniform edge sampling using $O(\log^3 n)$ cut queries in one round (i.e. non-adaptively) was proposed in \cite{ACK21}.%
\footnote{The algorithm requires $O(\log^2n \log (1/\delta))$ cut queries and succeeds with probability $1-\delta$, setting $\delta=n^{-2}$ yields the bound.
}
However, their algorithm is based on group testing principles using BIS queries\footnote{A BIS query is given two disjoint sets $A,B\subseteq V$ and returns true if there exists some edge connecting $A,B$ and false otherwise.
}
and returns only the endpoints of the edge, but not its weight, though it is probably not difficult to adapt their algorithm to return the weight as well.
Our algorithm slightly improves on the query complexity of their algorithm, while also returning the edge's weight in the same round.
\begin{lemma}[Uniform Edge Sampling]
    \label{lemma:edge-sampling-1-round}
    Given a (possibly weighted) graph $G$ on $n$ vertices, a source vertex $s\in V$, and a target vertex set $T\subseteq V\setminus\{s\}$, one can return an edge $e$ uniformly sampled from $E(s,T)$, along with its weight $w(e)$, using $O(\log^3 n/\log\log n)$ cut queries in one round, i.e, non-adaptively.
    The algorithm succeeds with probability $1-n^{-4}$.
\end{lemma}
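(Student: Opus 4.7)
The plan is to implement uniform edge sampling by pairing geometric Bernoulli subsampling of $T$ at $O(\log n)$ rates with a parallel bit-label recovery primitive that extracts the unique surviving edge from an isolated subsample using only $O(\log n)$ cut queries. The whole scheme is non-adaptive because the subsamples and the random labels can be drawn in advance.

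First I would reduce ``weight of edges from $s$ into $X$'' to cut queries through the identity
\begin{equation*}
w(E(s,X)) \;=\; \tfrac{1}{2}\bigl(\mathrm{cut}(\{s\}) + \mathrm{cut}(X) - \mathrm{cut}(\{s\}\cup X)\bigr),
\end{equation*}
which costs $O(1)$ cut queries per set $X$ and is the only way cut queries enter the algorithm. Next, for each level $i\in\{0,\ldots,\lceil\log|T|\rceil\}$ and each repetition $k\in[K]$, I would independently form $T_{i,k}\subseteq T$ by including each vertex with probability $2^{-i}$. Independently, draw one random label $\ell(v)\in\{0,1\}^{L}$ per $v\in T$ with $L=\Theta(\log n)$, and set $T^b=\{v:\ell(v)_b=1\}$ for each $b\in[L]$. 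The algorithm requests, all in one round, the weights $w(E(s,T_{i,k}))$ and $w(E(s,T_{i,k}\cap T^b))$ for every $(i,k,b)$.

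The recovery step examines each $(i,k)$ in parallel. If $E(s,T_{i,k})=\{(s,v^*)\}$ is a singleton then $w(E(s,T_{i,k}\cap T^b))$ equals either $w(s,v^*)$ or $0$ depending on $\ell(v^*)_b$; reading off the pattern recovers both $v^*$ and its weight. I would flag $(i,k)$ as \emph{successful} only if all $L$ bit-weights lie in $\{0,w(E(s,T_{i,k}))\}$ and the decoded bit-string matches a unique $v\in T$; a Chernoff argument shows that the probability of a spurious success at any trial with $|E(s,T_{i,k})|\ge 2$ is $2^{-\Omega(L)}=n^{-\Omega(1)}$, harmless after a union bound. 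The algorithm returns the edge decoded from the lexicographically first successful $(i,k)$. Uniformity follows from symmetry: for every fixed $(i,k)$ and every edge $e\in E(s,T)$, the probability that $T_{i,k}$ isolates exactly $e$ equals $2^{-i}(1-2^{-i})^{m-1}$ with $m\coloneqq|E(s,T)|$, and this quantity is independent of $e$; conditioning on success at any particular trial, each edge of $E(s,T)$ is equally likely, and outputting the first successful trial preserves this uniformity.

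For the failure bound, let $i^*$ satisfy $m\in[2^{i^*-1},2^{i^*}]$; the per-trial isolation probability at level $i^*$ is at least $1/(2e)$. A naive analysis with $K=\Theta(\log n)$ repetitions per level gives total cost $O(\log^3 n)$, and the main obstacle is shaving the $\log\log n$ factor. I would address it by pooling trials from the $\Theta(\log\log n)$ levels surrounding $i^*$: each of these levels still enjoys $\Omega(1)$ per-trial isolation probability, so the effective number of successful trials grows by a $\log\log n$ factor, and $K=O(\log n/\log\log n)$ repetitions per level suffice to drive the failure probability below $n^{-4}$. This yields $(\log n+1)\cdot K\cdot O(\log n) = O(\log^3 n/\log\log n)$ cut queries non-adaptively, matching the claim. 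The most delicate step will be controlling correlations among the pooled trials---either verifying near-independence or invoking a Chernoff-type bound under negative association---so that the boosted success probability actually goes through without hiding extra logarithmic factors.
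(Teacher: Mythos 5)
Your overall architecture (geometric subsampling of $T$ plus per-trial vertex identification, all queries issued in one round via the identity $w(E(s,X))=\tfrac12(\mathrm{cut}(\{s\})+\mathrm{cut}(X)-\mathrm{cut}(\{s\}\cup X))$) is sound, and your uniformity argument via isolation symmetry and the label-collision union bound is essentially correct (modulo declaring trials with $w(E(s,T_{i,k}))=0$ unsuccessful). But there is a genuine gap exactly where you flagged the difficulty: the $\log\log n$ saving. Your pooling step rests on the claim that each of the $\Theta(\log\log n)$ levels around $i^*$ has $\Omega(1)$ per-trial isolation probability, and this is false. With $m=|E(s,T)|$ edges, the probability that a trial at level $i^*+j$ isolates exactly one edge is $m\cdot 2^{-(i^*+j)}(1-2^{-(i^*+j)})^{m-1}\approx 2^{-j}e^{-2^{-j}}$, which decays geometrically in $j$ (and like $2^{j}e^{-2^{j}}$ for levels below $i^*$). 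So only $O(1)$ levels contribute constant success probability, the pooled success mass over all levels is $O(K)$ rather than $\Omega(K\log\log n)$, and with $K=O(\log n/\log\log n)$ repetitions per level the failure probability is only $\exp(-O(\log n/\log\log n))=n^{-O(1/\log\log n)}$, nowhere near $n^{-4}$. To reach $n^{-4}$ your scheme needs $\Omega(\log n)$ trials at the good level, each costing $\Theta(\log n)$ label queries, i.e., $\Theta(\log^3 n)$ queries --- your own ``naive'' bound, not the lemma's.

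The paper avoids per-edge isolation altogether, which is where its $\log\log n$ factor really comes from. It uses a single nested chain $T=T_0\supseteq T_1\supseteq\cdots\supseteq T_{\log n}$, observes that the last level $j^*$ with $w(E(s,T_{j^*}))>0$ has, with high probability, between $1$ and $O(\log n)$ edges, and then recovers \emph{all} edges of $E(s,T_i)$ at every level in parallel using a compressed-sensing primitive: a random $\{\pm1\}$ measurement matrix that recovers any $O(\log n)$-sparse weight vector from only $O(\log^2 n/\log\log n)$ linear measurements (each measurement being two set-weights, hence $O(1)$ cut queries). The total is $O(\log n)$ levels times $O(\log^2 n/\log\log n)$ queries, and uniformity is immediate by sampling uniformly from the recovered edge set at level $j^*$. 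If you want to rescue your isolation-based route, you would need either a sub-logarithmic non-adaptive identification cost per trial or a comparable sparse-recovery ingredient; the amplification-by-pooling argument as stated does not close the gap.
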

The algorithm works by subsampling the vertices in $T$ into sets $T=T_0\supseteq T_1\supseteq \ldots \supseteq T_{\log n}$, and finds a level $i$ such that $0<|E(s,T_i)| \le O(\log n)$.
It then leverages sparse recovery techniques to recover all the edges in $E(s,T_i)$ and return one of them at random.
To perform this algorithm in one round the algorithm recovers $w(E(s,T_i))$ and performs the queries needed for sparse recovery for all levels in parallel.
It then applies the sparse recovery algorithm to the last level with non-zero number of edges, i.e. the last level with $w(E(s,T_i))>0$.
We note that it is easy to decrease the query complexity of \Cref{lemma:edge-sampling-1-round} by a factor of $O(\log n)$ by adding a round of queries and performing the sparse recovery only on the relevant level.

Our second procedure performs weight-proportional edge sampling in two rounds.
\begin{lemma}[Weight-Proportional Edge Sampling]
    \label{lemma:weighted-edge-sampling}
    Given a graph $G$ on $n$ vertices with integer edge weights bounded by $W$, a source vertex $s\in V$, and a target vertex set $T\subseteq V\setminus\{s\}$, one can return an edge $e$ from $E(s,T)$, along with its weight $w(e)$, sampled with probability $w(e)/w(E(s,T))$ using $O(\log^2n \log^2 (nW))$ queries in two rounds.
    The algorithm succeeds with probability $1-n^{-4}$.
\end{lemma}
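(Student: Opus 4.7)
The plan is to parallelize the classical $O(\log n)$-round binary-search-based weight-proportional sampler by combining geometric subsampling of the target set $T$ with the uniform edge sampling primitive of \Cref{lemma:edge-sampling-1-round}. The key observation is that subsampling $T$ independently of the edge weights preserves the weight-proportional distribution in expectation: if $T' \subseteq T$ includes each vertex independently with probability $p$, and one outputs an edge $e \in E(s, T')$ with probability $w(e)/w(E(s, T'))$, then the marginal probability of $e$ equals $p \cdot \mathbb{E}[w(e)/w(E(s, T')) \mid e \in E(s, T')]$, which concentrates around $w(e)/w(E(s, T))$ whenever $p \cdot w(E(s, T))$ dominates the maximum edge weight by a $\log n$ factor, via a weighted Bernstein bound on $w(E(s, T'))$.

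In the first round, for each subsampling level $\ell \in \{0, 1, \ldots, L\}$ with $L = O(\log(nW))$, I would independently include each vertex of $T$ in $T_\ell$ with probability $p_\ell = 2^{-\ell}$, using coupled randomness so that $T_0 \supseteq T_1 \supseteq \cdots$. For each level I query $W_\ell := w(E(s, T_\ell))$, and in parallel invoke a sparse-recovery subroutine (in the spirit of the one underlying \Cref{lemma:edge-sampling-1-round}) that either enumerates all of $E(s, T_\ell)$ together with its edge weights, whenever the sparsity is at most some threshold $K = \tO(\log(nW))$, or reports that the threshold has been exceeded. This uses $\tO(\log n \log(nW))$ cut queries per level, for a first-round total of $\tO(\log^2 n \log^2(nW))$. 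From these outputs I identify the smallest level $\ell^*$ at which the sparse recovery succeeded and $W_{\ell^*}$ still dominates the largest edge weight observed so far by a $\log n$ factor; such a level exists because $W_\ell$ decays geometrically in $\ell$. In the second round, the algorithm outputs an edge $e^* \in E(s, T_{\ell^*})$ with probability $w(e^*)/W_{\ell^*}$, using a small number of additional cut queries to apply a Metropolis-style correction that turns the approximately-weight-proportional sample into the exact distribution $w(e)/w(E(s, T))$.

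The main obstacle is to guarantee that a valid level $\ell^*$ actually exists: we need $|E(s, T_{\ell^*})|$ to be small enough for sparse recovery (at most $K$ edges) while $W_{\ell^*}$ stays large enough for the weighted Bernstein bound to yield concentration. This is delicate when a single dominant edge carries most of $w(E(s, T))$, since then concentration essentially forces no subsampling. I would overcome this by first identifying all \emph{heavy} edges (those of weight $> w(E(s, T))/\polylog(nW)$) via the sparse recovery at the lowest subsampling levels and sampling them directly with probability proportional to their weight, and then running the main subsampling loop only on the residual set of \emph{light} edges, for which the Bernstein bound holds at any level that retains enough surviving edge weight. A union bound over the $O(\log(nW))$ subsampling levels keeps the overall failure probability within the $n^{-4}$ budget.
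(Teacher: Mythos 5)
There is a genuine gap, and it sits exactly at the point you flag as "delicate." Your plan hinges on the existence of a single level $\ell^*$ that is simultaneously (a) sparse enough that $E(s,T_{\ell^*})$ can be fully enumerated ($|E(s,T_{\ell^*})|\le K=\tO(\log(nW))$) and (b) weight-concentrated enough that $W_{\ell^*}$ dominates the largest surviving edge weight by a $\log n$ factor. These two requirements can be incompatible even after you peel off heavy edges. Take $E(s,T)$ to consist of $n/2$ edges of weight $1$ and $\sqrt{n}$ edges of weight $\sqrt{n}$, so $w(E(s,T))=\Theta(n)$; none of the weight-$\sqrt{n}$ edges exceeds your heaviness threshold $w(E(s,T))/\polylog(nW)$, so the residual instance is unchanged. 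To reach sparsity $K$ you must subsample down to $\ell^*\approx\log(n/K)$, where the expected surviving weight is $\Theta(K)=\polylog(n)$ but a single surviving weight-$\sqrt{n}$ edge swamps it; no Bernstein bound applies, and at every shallower level the sparse recovery fails. The deeper issue is that sparsity is an $\ell_0$ condition while concentration of $W_\ell$ is an $\ell_1$ condition, and in weighted graphs one level cannot serve all weight classes. The paper's proof avoids choosing a level at all: at \emph{every} level $i$ it runs a Count-Min heavy-hitters sketch (\Cref{lemma:heavy-hitters-non-adaptive}) that recovers only the edges of weight roughly $w(E(s,T))/2^{i}$, i.e., each edge is recovered at the level matched to its own weight class, where it is guaranteed to be a heavy hitter relative to the surviving mass (controlled merely by Markov, not concentration).

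The second gap is the "Metropolis-style correction." To turn an approximately proportional sample into an exact one by rejection you must know the exact proposal probability of the returned edge, but in your scheme that probability involves $\mathbb{E}\bigl[1/W_{\ell^*}\mid e\in T_{\ell^*}\bigr]$ with $\ell^*$ itself random, which the algorithm cannot evaluate from its queries. The paper's trick is that for an edge $e$ with $j=\lfloor\log_2(w(E(s,T))/w(e))\rfloor$, the only randomness in whether $e$ is recovered is the event $e\in T_j$, whose probability is exactly $2^{-j}$; accepting $e$ with probability $p_e=2^{j-2}w(e)/w(E(s,T))$ (computable exactly after round two reveals $w(e)$ and $w(E(s,T))$) makes the output distribution exactly proportional to $w(e)$, at the cost of a constant FAIL probability that is then beaten down by $O(\log n)$ parallel repetitions. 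If you restructure your argument around per-weight-class recovery and this exact survival-probability compensation, the rest of your outline (nested subsampling, two rounds, union bound over levels) goes through.
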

The algorithm follows the same basic approach as the uniform edge sampling primitive, i.e. creating nestled subsets $T=T_0\supseteq T_1,\supseteq \ldots \supseteq T_{\log(nW)}$ by subsampling.
However, instead of recovering only the edges that were sampled into the last level with non-zero weight, the algorithm recovers in each level $i$ all edges $e\in E(s,T_i)$ such that $w(e)\approx w(E(s,T))/2^i$.
The recovery procedure is based on a Count-Min data structure \cite{CM05}, which we show can be implemented using few non-adaptive cut queries.
The algorithm then uses another round of queries to learn the exact weights of all edges that were recovered in all the levels.
It then concludes by sampling an edge from the distribution $\{p_e\}$, where for every edge $e$ recovered in the $i$-th level, it sets $p_e \propto 2^i w(e)/w(E(s,T))$;
with probability $1-\sum_e p_e$, the algorithm fails.   
Since the probability of being sampled into the $i$-th level is $2^{-i}$, these probabilities yield the desired distribution over the recovered edges.

To prove that the algorithm returns an edge with high probability, we show that there is a constant probability that the outlined procedure returns some edge $e\in E(s,T)$, and hence repeating the algorithm $O(\log n)$ times yields with high probability an edge sampled from $E(s,T)$ with the desired distribution.
The query complexity of the algorithm is dominated by the Count-Min data structure, which requires $O(\log n \log (nW))$ cut queries in each level.
Applying the recovery procedure on $O(\log (nW))$ levels, and repeating the procedure $O(\log n)$ times yields the stated query complexity.

\subsection{Unweighted Minimum Cut in Two Rounds}
The key idea of our algorithm is to apply an edge contraction technique that produces a graph with substantially fewer edges while preserving each minimum cut with constant probability. 
We then recover the entire contracted graph using a limited number of cut queries by applying existing graph recovery methods using $\tO(m)$ additive queries \cite{CK08,BM11}.
We show that it is possible to simulate $k$ additive queries using $O(n+k)$ cut queries, which allows us to recover a graph with $m\ge \tO(n)$ edges using $O(m)$ cut queries.

\begin{restatable}{corollary}{graphrecoverycorollary}
    \label{corollary:graph-recovery-cut}
    Given a weighted graph $G$ on $n$ vertices and $m$ edges, one can recover the graph $G$ using $O(m + n)$ non-adaptive cut queries.
\end{restatable}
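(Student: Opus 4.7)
The plan is to reduce the problem to non-adaptive graph recovery via \emph{additive} queries, where a query on a set $S\subseteq V$ returns $\sum_{e\subseteq S}w(e)$, the total weight of edges lying inside $S$. I would then invoke the classical recovery algorithms of Choi--Kim~\cite{CK08} and Bshouty--Mazzawi~\cite{BM11}, which non-adaptively reconstruct a weighted graph on $n$ vertices and $m$ edges using $O(m)$ additive queries on prescribed, answer-independent sets. It will suffice to simulate each additive query by $O(1)$ cut queries in a non-adaptive manner, and to do so simultaneously for all queries in a single round.

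First, I would issue, in parallel, the $n$ singleton cut queries $\mintcut_G(\{v\})$ for $v\in V$; their answers are the weighted degrees $d(v)$. In the same round I would also issue the $O(m)$ cut queries $\mintcut_G(S_1),\dots,\mintcut_G(S_k)$ on the fixed sets prescribed by the additive recovery algorithm. The key identity is that for every $S\subseteq V$,
\begin{equation*}
    \sum_{e\subseteq S} w(e) \;=\; \tfrac{1}{2}\Bigl(\sum_{v\in S} d(v) \;-\; \mintcut_G(S)\Bigr),
\end{equation*}
since each edge with both endpoints in $S$ is counted twice in $\sum_{v\in S} d(v)$ and contributes $0$ to $\mintcut_G(S)$, while each crossing edge is counted exactly once in both terms and cancels. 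Once the round completes, I can compute every additive-query answer from the cut answers, feed them into the recovery algorithm, and reconstruct $G$. The total query count is $n+O(m)=O(m+n)$, and since all query sets were fixed in advance the protocol is non-adaptive.

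The main (and essentially only) thing to verify is the identity above together with the non-adaptivity of the underlying additive-query recovery algorithm, which is exactly what~\cite{CK08,BM11} provide. I do not expect a substantive technical obstacle here: the argument is a clean reduction, and the improvement over the trivial $O(n^2)$ bound comes entirely from plugging in the sparsity-aware additive-query recovery of~\cite{CK08,BM11}.
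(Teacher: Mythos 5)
Your proposal is correct and matches the paper's own proof: the paper likewise invokes the non-adaptive additive-query recovery of \cite{BM11} and simulates each additive query from the $n$ singleton cut queries (giving weighted degrees) plus one fixed cut query per set, via exactly the degree-sum identity you state. The only cosmetic difference is that the paper routes the crossing-weight computation through its generic $w(E(S,T))$ primitive (three cut queries) rather than using $\mintcut_G(S)$ directly as you do, which changes nothing asymptotically.
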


While there has been frequent use of contraction procedures for finding minimum cuts, existing algorithms focused on reducing the number of vertices in the graph \cite{KT19,GNT20,AEGLMN22}, typically to $\tO(n/\delta(G))$.
This does not suffice to guarantee a graph with few edges, for example, applying such a contraction on a graph $G$ with minimum degree $\delta(G)=\log^2 n$ and $\Omega(n)$ vertices of degree $n$, might yield a graph with $\tilde{\Omega}(n^2)$ edges.
In contrast, our goal is to reduce the number of edges in the graph, allowing efficient recovery of the contracted graph using \Cref{corollary:graph-recovery-cut}.
The main technical contribution of this section is a new contraction algorithm, which we call $\tau$-star contraction, that reduces the number of edges to $\tO((n/\tau)^2 +n\tau)$.

Our contraction algorithm is a thresholded version of the star-contraction algorithm introduced in \cite{AEGLMN22}.
The original version of the star-contraction algorithm of \cite{AEGLMN22} can be roughly described as follows.
Sample a subset of \emph{center vertices} $R\subseteq V$ uniformly at random with probability $p=O(\log n/\delta(G))$.
For every $v\in V\setminus R$, choose uniformly a vertex in $r\in N_G(v)\cap R$ and contract the edge $(v,r)$, keeping parallel edges, which yields a contracted multigraph $G'$.
The main guarantee of the algorithm is that if the minimum cut of $G$ is non-trivial, i.e. not composed of a single vertex, then $\lambda(G)=\lambda(G')$ with constant probability, where $\lambda(G)$ denotes the value of a minimum cut in a graph.
The following theorem states this formally.
\begin{theorem}[Theorem 2.2 in \cite{AEGLMN22}]
    \label{theorem:original-star-contraction}
    Let $G=(V,E)$ be an unweighted graph on $n$ vertices with a non-trivial minimum-cut value $\lambda(G)$.
    Then, the star-contraction algorithm yields a contracted graph $G'$ that,
    \begin{enumerate}
        \item has at most $O(n \log n /\delta(G))$ vertices with probability at least $1-1/n^4$, and
        \item has $\lambda(G')\ge \lambda(G)$ always, equality is achieved with probability at least $2\cdot 3^{-13}$.
    \end{enumerate}
\end{theorem}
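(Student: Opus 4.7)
My proof plan separates the two claims cleanly, since they rely on different arguments.

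For the size bound, I would use standard concentration. Each vertex lies in $R$ independently with probability $p=O(\log n/\delta(G))$, so $|R|$ is binomial with mean $\mu=np=O(n\log n/\delta(G))$, and a multiplicative Chernoff bound gives $|R|\le 2\mu$ with probability at least $1-n^{-c}$ for any constant $c$, by choosing the hidden constant in $p$ large enough. In parallel, for every $v\notin R$ the probability that $v$ has no neighbor in $R$ is at most $(1-p)^{\deg(v)}\le e^{-p\delta(G)}\le n^{-c}$; a union bound over the $n$ vertices then ensures that every non-center has at least one center neighbor with probability $\ge 1-n^{-4}$. Combining these two events, every non-center is contracted into some center and $|V(G')|\le |R|=O(n\log n/\delta(G))$.

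The inequality $\lambda(G')\ge \lambda(G)$ is structural: cuts of $G'$ are in bijection with cuts of $G$ that do not separate any contracted pair, and corresponding cuts have equal weight, so the minimum cut in $G'$ is at least the minimum cut in $G$.

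The interesting claim is that equality holds with probability at least $2\cdot 3^{-13}$, and this is where I expect the bulk of the work. Fix a non-trivial minimum cut $(S,\bar S)$, so $|S|,|\bar S|\ge 2$; I must show that no cut edge is contracted with constant probability. The key structural observation I would exploit is that, by the minimality of $\lambda(G)$, every $v\in S$ satisfies $d_S(v)\ge d_{\bar S}(v)$, because the cut $(S\setminus\{v\},\bar S\cup\{v\})$ is valid and has value $\lambda(G)-d_{\bar S}(v)+d_S(v)\ge \lambda(G)$. Thus each cut-side vertex has a ``local majority'' of neighbors on its own side, which is precisely the property the random star contraction should respect.

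From here the plan is to decompose the preservation event into ``for every $v\notin R$, $v$ has a center neighbor on its own side \emph{and} the uniform choice among center neighbors lands on that side''. Conditioning on $R$, the choices of distinct non-center vertices are independent, which lets me multiply per-vertex probabilities. I would first condition on a good event about $R$ (each $v\notin R$ has $\Omega(\log n)$ center neighbors on its own cut side, which holds by Chernoff given the conditional expectation $p\cdot d_S(v)=\Omega(\log n)$), and then for the random choice step I would split the cut edges by the status of their endpoints in $R$ (both in $R$, exactly one in $R$, neither in $R$) and bound the contraction probability in each case using the majority property $d_S(v)\ge d_{\bar S}(v)$. The main obstacle I anticipate is producing the absolute constant $2\cdot 3^{-13}$: this specific value clearly arises from a careful case-by-case accounting of these three configurations rather than from any generic concentration inequality, and matching the constant will require tracking the combinatorics of those cases exactly rather than absorbing lower-order losses into big-$O$ notation.
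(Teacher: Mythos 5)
Your arguments for the size bound and for the one-sided inequality $\lambda(G')\ge\lambda(G)$ are correct and match the standard route (note that this theorem is quoted from \cite{AEGLMN22}; the closest in-paper proof is that of \Cref{lemma:tau-star-contraction} in \Cref{sec:tau-contraction}, which has exactly the structure discussed below). For part 3 you also correctly identify the two starting points: the local majority property $c(v)=d_{\bar S}(v)\le d(v)/2$ obtained by moving $v$ across the cut, and the conditional product form $\Pr[\C\text{ survives}\mid R]=\prod_{v\notin R}\bigl(1-c_R(v)/d_R(v)\bigr)$.

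However, the plan as stated cannot deliver a constant lower bound on that product, because it is missing the global budget that makes the argument work. The majority property only bounds each factor individually (each $c_R(v)/d_R(v)$ is roughly at most $1/2$), and a product of $n$ factors each bounded by $1/2$ from the wrong side can be as small as $2^{-n}$. The load-bearing ingredient is the \emph{sum} bound
\begin{equation*}
  \sum_{v} \frac{c(v)}{d(v)} \;\le\; \frac{2|\C|}{\delta(G)} \;\le\; 2,
\end{equation*}
which crucially uses $\lambda(G)\le\delta(G)$, together with the transfer $\Exp{c_R(v)/d_R(v)\mid d_R(v)>0}{}=c(v)/d(v)$ so that Markov gives $\sum_v c_R(v)/d_R(v)\le 8$ with probability $3/4$. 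One then needs an extremal inequality (\Cref{proposition:probability-optimization}: if $\sum_i x_i\le b$ and $\max_i x_i\le a$ then $\prod_i(1-x_i)\ge(1-a)^{\lceil b/a\rceil}$) to convert ``sum $\le 8$, max $\le 2/3$'' into $\prod(1-x_i)\ge 3^{-12}$; multiplying by $\Pr[R\text{ good}]\ge 2/3$ yields exactly $2\cdot3^{-13}$. So the constant does not come from a case analysis of cut edges by how many endpoints lie in $R$ (that split is largely a red herring: only edges with exactly one endpoint in $R$ can ever be contracted, and the per-vertex product already accounts for this); it comes from the sum/max optimization. Your ``good event'' about $R$ is also aimed at the wrong quantity: having $\Omega(\log n)$ center neighbors on one's own side controls neither the ratio $c_R(v)/d_R(v)$ (which needs upper concentration of $c_R(v)$, handled in two cases depending on whether $\Exp{c_R(v)}{}$ is large or small) nor the sum. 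Without the budget $\sum_v c(v)/d(v)\le 2$ and the product-versus-sum lemma, the constant-probability claim does not follow.
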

Note that in the cut query model, the algorithm cannot contract an edge because the underlying cut function remains constant.
However, it can simulate the contraction of $e=(u,v)$ by merging the vertices $u$ and $v$ into a supervertex in all subsequent queries, which yields a contracted cut function.
We now define our \emph{$\tau$-star contraction} algorithm.
\begin{definition}[$\tau$-star contraction]
    Given an unweighted graph $G=(V,E)$ on $n$ vertices and a threshold $\tau\in [n]$, $\tau$-star contraction is the following operation.
    \begin{enumerate}
        \item Let $H=\{v \in V \mid d(v)\ge \tau\}$ be the set of high-degree vertices.
        \item Form the set $R\subseteq V$ by sampling each $v\in V$ independently with probability $p=\Theta(\log n/\tau)$.
        \item For each $u\in H\setminus R$, pick an edge from $E(u,R)$, if non-empty, uniformly at random and contract it (keeping parallel edges).
    \end{enumerate}
\end{definition}
Notice that $\tau$-star contraction depends on uniform edge sampling, since each vertex $u\in H\setminus R$ samples a neighbor in $R$.
In our case, we can use the uniform edge sampling procedure described above to sample edges uniformly from $E(u,R)$.

We now sketch the proof that $\tau$-star contraction yields a graph with few edges.
To bound the number of edges, partition the vertices remaining after the contraction into two sets, $R$, i.e. the center vertices sampled in the sampling process, and $L = V\setminus H$, i.e. the low degree vertices.
This analysis omits vertices in $H\setminus R$ that were not sampled into $R$ or contracted, because every $v\in H\setminus R$ has in expectation $d_G(v)\cdot \log n/ \tau \in \Omega(\log n)$ neighbors in $R$; hence with high probability it will have some neighbor in $R$ and be contracted.

To bound the number of edges observe that $|R|\le \tO(n/\tau)$ with high probability, and hence the number of edges between vertices in $R$ is $O(|R|^2)\le \tO(n^2/\tau^2)$.
In addition, since the degree of the vertices of $L$ is bounded by $\tau$, the total number of edges incident on $L$, even before the contraction, is $O(n\tau)$.
Hence, the total number of edges in $G'$ is with high probability $\tO(n^2/\tau^2 + n\tau)$.
The following lemma, proved in \Cref{sec:tau-contraction}, formalizes the guarantees of $\tau$-star contraction.
\begin{lemma}
    \label{lemma:tau-star-contraction}
    Let $G=(V,E)$ be an unweighted graph on $n$ vertices with a non-trivial minimum-cut $\lambda(G)$ and let $\tau\in [n]$.
    A $\tau$-star contraction of $G$ gives $G'$ that,
    \begin{enumerate}
        \item has at most $\tO(n^2/\tau^2+n\tau)$ edges with probability at least $1-1/n^4$, and
        \item has $\lambda(G')\ge \lambda(G)$ always, equality is achieved with probability at least $2\cdot 3^{-13}$.
    \end{enumerate}
\end{lemma}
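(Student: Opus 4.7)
The plan is to prove the two claims of the lemma separately. Part~1 is a counting argument after establishing two high-probability structural events about $R$, and Part~2 is a reduction to Theorem~\ref{theorem:original-star-contraction}.

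For the edge bound (Part~1), I would first establish by Chernoff bounds the following two events, each with probability at least $1-n^{-5}$: (E1) $|R|\le \tO(n/\tau)$, which is immediate since $|R|$ is Binomial with mean $np=\Theta(n\log n/\tau)$; and (E2) every vertex $v\in H$ has $|N(v)\cap R|\ge 1$, since its expectation is $d(v)\cdot p\ge \tau\cdot\Theta(\log n/\tau)=\Theta(\log n)$, and a union bound over the at most $n$ high-degree vertices goes through. On (E1)$\cap$(E2), every $v\in H\setminus R$ is successfully contracted into some supervertex of $R$, so the supervertex set of $G'$ is exactly $R\cup L$ with $L=V\setminus H$. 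Viewing $G'$ as a weighted simple graph (multiplicities folded into edge weights), I split its edges by endpoint type: edges with at least one endpoint in $L$ number at most $\sum_{v\in L}d_G(v)\le n\tau$, since $d_G(v)<\tau$ there; and edges between two $R$-supervertices number at most $\binom{|R|}{2}=\tO(n^2/\tau^2)$. Summing yields the claimed bound $\tO(n^2/\tau^2+n\tau)$.

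For the cut preservation (Part~2), the inequality $\lambda(G')\ge \lambda(G)$ is immediate because every cut in $G'$ lifts to a cut in $G$ (one that respects the contraction groups) of the same value, so the minimum over cuts of $G'$ is taken over a subfamily. For equality, I would fix an arbitrary non-trivial minimum cut $(S,\bar S)$ of $G$ and argue it survives with probability at least $2\cdot 3^{-13}$. Such a cut is destroyed only when some contracted edge crosses it, i.e.\ when some $u\in H\setminus R$ picks a uniformly random $r\in N(u)\cap R$ lying on the opposite side of $(S,\bar S)$. I would reuse the union-bound argument in the proof of Theorem~\ref{theorem:original-star-contraction}, which relies on two properties of every contracted vertex $u$: (a) $d(u)\cdot p=\Omega(\log n)$, so $|N(u)\cap R|$ concentrates; and (b) $d(u)\ge \lambda(G)$, so the fraction of $u$'s neighbors across the cut is at most $\lambda(G)/d(u)$. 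Both hold here: (a) follows from $u\in H$ (so $d(u)\ge \tau$) together with $p=\Theta(\log n/\tau)$, and (b) follows from $d(u)\ge \delta(G)\ge \lambda(G)$, which is true for every vertex.

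The main obstacle is verifying that nothing in the proof of Theorem~\ref{theorem:original-star-contraction} requires \emph{all} non-center vertices to be contracted, rather than just the high-degree ones as in our variant. Its argument should be agnostic to this, since the failure probability is bounded by a sum over the contracted vertices; restricting that sum to $H\setminus R$ only decreases the failure probability. A careful re-reading is needed to confirm the specific constant $2\cdot 3^{-13}$ transfers without loss, but at a high level the reduction is direct.
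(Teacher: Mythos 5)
Your proposal matches the paper's proof: Part~1 is essentially verbatim the paper's counting argument (with high probability $|R|=\tO(n/\tau)$ and every $v\in H\setminus R$ has a neighbor in $R$ and is contracted, then count $R$--$R$ pairs and $L$-incident edges), and Part~2 follows the paper's strategy of rerunning the star-contraction analysis restricted to $H\setminus R$, using exactly the two properties you identify. One caution for the write-up: the survival probability is \emph{not} obtained by a union bound --- the sum $\sum_u c_R(u)/d_R(u)$ can exceed $1$, so that bound would be vacuous; the paper instead conditions on $R$ being ``good'' (sum at most $8$, max at most $2/3$) and lower-bounds the product $\prod_u\bigl(1-c_R(u)/d_R(u)\bigr)$ via \Cref{proposition:probability-optimization}, and it is your monotonicity observation (dropping factors from this product only increases it) that makes the constant $2\cdot 3^{-13}$ transfer.
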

We present in \Cref{algorithm:improved-min-cut} a procedure for finding the global minimum cut in two rounds using $\tau$-star contraction.
We will use it to prove \Cref{theorem:global-minimum-cut-2-rounds}, but first we need the following claim.
\begin{algorithm}[htbp]
    \caption{Unweighted Minimum Cut in $2$ Rounds}
    \label{algorithm:improved-min-cut}
    \begin{algorithmic}[1]
      \State \textbf{Input:} Graph $G=(V,E)$
      \State \textbf{Output:} A minimum cut $\cut \subseteq V$
      \State sample a set $R\subseteq V$ uniformly with probability $p=400\cdot \log n/\tau$
      \AdapRound{1}
        \State query $\mintcut_G(v)$ for all $v\in V$ \Comment{find $d_G(v)$ for all $v\in V$}
        \State for each $v\in V\setminus R$, let $n(v)\gets $ a random neighbor in $N_G(v)\cap R$ \label{lst:line:sample-edges} \Comment{use \Cref{lemma:edge-sampling-1-round}}
      \State $\tau \gets \max \{n^{1/3}, \delta(G)\}$
      \State $V'\gets V$
      \For{$v\in \{ V\setminus R \mid \mintcut_G(v) \ge \tau \}$}
        \State update $V'$ by contracting the edge $(v,n(v))$ \label{lst:line:tau-contraction}
      \EndFor
      \AdapRound{2}
          \State recover $G'=(V',E')$ with $\tO(n^{2-2/3}+n^{1+1/3})$ queries \Comment{use \Cref{corollary:graph-recovery-cut}, assuming $|E'|=\tO(n^{2-2/3}+n^{1+1/3})$}
      \State \Return $\min(\lambda(G'),\delta(G))$ \Comment{can return also a set $S\subseteq V$ achieving the cut}
    \end{algorithmic}
  \end{algorithm}
    
\begin{claim}
    \label{claim:implement-tau-star-contraction}
    With probability $1-n^{-3}$, line~\ref{lst:line:tau-contraction} of \Cref{algorithm:improved-min-cut} implements $\tau$-star contraction with $\tau=\max\set{n^{1/3},\delta(G)}$.
\end{claim}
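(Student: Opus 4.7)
The plan is to verify step-by-step that \Cref{algorithm:improved-min-cut} through line~\ref{lst:line:tau-contraction} realizes the three-item definition of $\tau$-star contraction, and then to bound via a union bound the two ways this matching can break. Set $\tau = \max\{n^{1/3}, \delta(G)\}$ and $p = 400\log n/\tau = \Theta(\log n/\tau)$. Item~2 of the definition is matched by line~3, which samples each vertex into $R$ independently with probability $p$. Item~1 is matched because the degree queries in line~5 recover $d_G(v) = \mintcut_G(\{v\})$, so the set iterated over in line~10, namely $\{v\in V\setminus R : \mintcut_G(v)\ge \tau\}$, equals $H\setminus R$. Item~3 requires, for every $u\in H\setminus R$ with $E(u,R)\neq \emptyset$, that the contracted edge $(u, n(u))$ be uniform on $E(u,R)$ and independent across~$u$; this is exactly what \Cref{lemma:edge-sampling-1-round} provides when invoked with source $u$ and target $R$ as in line~\ref{lst:line:sample-edges}.

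Next I would handle the two failure events. Let $\mathcal{E}_1$ be the event that some call to \Cref{lemma:edge-sampling-1-round} in line~\ref{lst:line:sample-edges} fails, and $\mathcal{E}_2$ the event that some $u\in H\setminus R$ has $N_G(u)\cap R = \emptyset$ (so that $n(u)$ is undefined and line~\ref{lst:line:tau-contraction} would try to contract a non-existent edge). A union bound over the at most $n$ calls to \Cref{lemma:edge-sampling-1-round} gives $\Pr[\mathcal{E}_1]\le n\cdot n^{-4} = n^{-3}$. For $\mathcal{E}_2$, every $u\in H$ has $d_G(u)\ge \tau$, so the expected number of $R$-neighbors of $u$ is at least $p\tau = 400\log n$; a standard Chernoff bound gives $\Pr[|N_G(u)\cap R| = 0] \le e^{-400 \log n}$, and a union bound over $|H|\le n$ shows $\Pr[\mathcal{E}_2]\le n^{-\Omega(1)}$, dominated by the $n^{-3}$ bound on $\mathcal{E}_1$.

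On the complement of $\mathcal{E}_1 \cup \mathcal{E}_2$, every $n(v)$ for $v\in H\setminus R$ is uniformly distributed in a non-empty $E(v,R)$ and independently across $v$, which exactly matches item~3 of the definition; hence line~\ref{lst:line:tau-contraction} realizes a genuine $\tau$-star contraction. A small boost to the constants (in $p$ or in the success probability of \Cref{lemma:edge-sampling-1-round}) absorbs the $n^{-\Omega(1)}$ Chernoff term into $n^{-3}$ on the nose. I expect no real obstacle in the proof beyond organizing this union bound; the matching between algorithm lines and definition items is otherwise a direct syntactic check, and the main bookkeeping step is noticing that \Cref{lemma:edge-sampling-1-round} is invoked once per vertex in $V\setminus R$ rather than only for high-degree ones, so that the $n$ in the union bound is the total number of invocations regardless of whether the threshold $\tau$ is known in advance of the first query round.
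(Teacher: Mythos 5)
Your proposal is correct and follows essentially the same route as the paper: a union bound over the at most $n$ invocations of \Cref{lemma:edge-sampling-1-round}, each failing with probability $n^{-4}$, yields the $n^{-3}$ bound. Your extra event $\mathcal{E}_2$ is harmless but unnecessary for this claim, since the definition of $\tau$-star contraction already says to contract an edge of $E(u,R)$ only \emph{if non-empty} (the possibility of uncontracted high-degree vertices is instead handled later, in the size analysis of \Cref{lemma:tau-star-contraction}).
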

\begin{proof}
    Notice that $R$ is sampled uniformly from $V$ and that line~\ref{lst:line:tau-contraction} is executed only for vertices in $H\setminus R$.
    It remains to show that the edge $(v,n(v))$ is sampled uniformly from $N_G(v)\cap R$, which indeed follows from \Cref{lemma:edge-sampling-1-round}.
    To analyze the success probability, note that the algorithm samples at most $n$ edges in line~\ref{lst:line:sample-edges} and by the guarantee of \Cref{lemma:edge-sampling-1-round} this fails with probability at most $n\cdot n^{-4}=n^{-3}$.
\end{proof}
\begin{proof}[Proof of \Cref{theorem:global-minimum-cut-2-rounds}]
    The main argument is that \Cref{algorithm:improved-min-cut} finds a minimum cut of $G$ with constant probability.
    Then, running this algorithm $O(\log n)$ times in parallel and returning the minimum output obtained gives the desired result.
    Notice that by \Cref{lemma:tau-star-contraction}, the connectivity of $G'$ can only increase, hence, if the algorithm finds a minimum cut of $G$ in some execution it will be returned.
    We assume throughout the proof that in all executions of the algorithm the size guarantee of \Cref{lemma:tau-star-contraction} holds, i.e. $|E'|=\tO(n^{2-2/3}+n^{1+1/3})$, note that this holds with probability $1-\tO(n^{-4})$.
    Using a union bound with \Cref{claim:implement-tau-star-contraction} above, we find that the algorithm succeeds with probability at least $1-n^{-2}$.

    By \Cref{claim:implement-tau-star-contraction} the graph $G'$ is the result of the $\tau$-star contraction with $\tau=n^{1/3}$, and thus $\lambda(G')=\lambda(G)$ with constant probability by \Cref{lemma:tau-star-contraction}.
    Therefore, recovering the graph $G'$ and returning the minimum of $\lambda(G')$ and $\delta(G)$ gives a minimum cut of $G$ with constant probability.

    To bound the query complexity, observe that the query complexity of the algorithm is dominated by the number of queries in the graph recovery step, which is given by $\tO(n^{2-2/3}+n^{1+1/3})=\tO(n^{4/3})$.
    In the first round, we use $\tO(n)$ queries as finding the degree of each vertex requires $n$ queries, and sampling a neighbor in $E(v,R)$ can be done using $O(\log^3 n/\log \log n)$ queries by \Cref{lemma:edge-sampling-1-round}.
    This concludes the analysis of the query complexity of the algorithm, which is $\tO(n^{4/3})$.

    Finally, notice that the algorithm proceeds in two rounds; 
    in the first round the algorithm only queries the degrees of the vertices and samples an edge in $E(v,R)$ for every $v$, which does not require any prior information on the graph.
    In the second round, it recovers $G'$ round using \Cref{theorem:graph-recovery}.
    This concludes the proof of \Cref{theorem:global-minimum-cut-2-rounds}.
\end{proof}

\subsection{\texorpdfstring{Unweighted Minimum Cut in $2r$ rounds} {Unweighted Minimum Cut in 2r rounds}}
This section provides an algorithm for finding a minimum cut in $2r$ rounds, thus proving \Cref{theorem:min-cut-k-rounds}.
The algorithm uses a known approach of first applying a contraction algorithm, yielding a graph with $\tO(n/\delta(G))$ vertices, and then packing $\delta(G)$ edge-disjoint forests in the contracted graph.
The main challenge in adapting this algorithm to low round complexity, is ensuring that the forests are edge disjoint when packing them in parallel.
This is achieved by leveraging a connection to cut sparsification, which shows that sampling edges with probabilities amplified by a factor of $k$ relative to those needed for sparsifier construction ensures finding $k$ edge disjoint forests within the sampled subset.
We note that we also offer another algorithm for finding a minimum cut, that has slightly worse query complexity of $O(rn^{1+1/r})$, and is based on constructing a cut sparsifier and the results of \cite{RSW18}, see \Cref{sec:min-cut-sparsifier-k-rounds}.

The contraction we use is the \emph{$2$-out contraction} procedure of \cite{GNT20}, which is obtained by sampling uniformly two edges incident to each vertex and contracting all the sampled edges.
The following theorem states the guarantees of the $2$-out contraction.
\begin{theorem}[Theorem 2.4 in \cite{GNT20}]
    \label{theorem:2-out-contraction}
    Let $G=(V,E)$ be an unweighted graph on $n$ vertices with a non-trivial minimum-cut value $\lambda(G)$.
    Then, applying a $2$-out contraction algorithm gives $G'$ where,
    \begin{enumerate}
        \item $G'$ has at most $\tO(n /\delta(G))$ vertices with high probability, and
        \item has $\lambda(G')\ge \lambda(G)$ always, equality is achieved with constant probability.
    \end{enumerate}
\end{theorem}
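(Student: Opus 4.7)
The plan is to analyze the auxiliary graph $H$ on $V$ whose edges are exactly those sampled by the $2$-out procedure; the super-vertices of $G'$ are the connected components of $H$, so item~(1) reduces to bounding the number of components of $H$, while item~(2) reduces to showing that, with constant probability, no edge of $H$ crosses a fixed non-trivial minimum cut $(S,\bar S)$. Item~(2) also uses the standard fact that contractions can only destroy cuts, so $\lambda(G')\geq \lambda(G)$ holds always.

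For item~(1), I would show that every connected component of $H$ has size $\tOmega(\delta(G))$ with high probability, which immediately gives at most $\tO(n/\delta(G))$ super-vertices. Fix any seed $v\in V$ and explore its component by BFS in $H$, exposing the two samples of each vertex only upon first discovery. As long as the current component has size at most $\delta(G)/2$, any newly reached $u$ has at least $\delta(G)/2$ neighbors in $G$ outside the component, so each of its two samples lies outside with constant probability. A Chernoff-type argument on this exposure process shows that the frontier grows by a constant factor per BFS level and reaches size $\tOmega(\delta(G))$ within $O(\log n)$ steps, with probability $1-n^{-c}$; a union bound over starting seeds completes the claim.

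For item~(2), fix a non-trivial minimum cut $(S,\bar S)$, and for each $v\in V$ let $d_v:=d_G(v)$ and let $x_v$ denote the number of neighbors of $v$ on the opposite side of the cut. The cut survives in $G'$ iff no sampled edge crosses it, an event with probability
\[
  \prod_{v\in V} \frac{(d_v-x_v)(d_v-x_v-1)}{d_v(d_v-1)} .
\]
Using $d_v\geq \delta(G)\geq \lambda(G)$, $\sum_{v\in V} x_v = 2\lambda(G)$, and $1-t\geq e^{-2t}$ for $t\leq 1/2$, this product is bounded below by $\exp(-O(1))$, a positive constant. The main obstacle is handling vertices $v$ with $x_v$ close to $d_v$, which send individual factors to zero; since $\sum x_v = 2\lambda(G)\leq 2\delta(G)$ and any such $v$ contributes at least $\delta(G)/2$, there are $O(1)$ bad vertices, and they can be absorbed either by treating their factors separately or by appealing to the family-of-cuts argument of~\cite{GNT20}.
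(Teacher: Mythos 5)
First, note that the paper does not prove this statement at all: it is imported verbatim as Theorem~2.4 of~\cite{GNT20}, so there is no internal proof to compare against. Judged on its own merits, your proposal has a genuine gap in item~(1) and a fixable but real issue in item~(2).

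For item~(1), the intermediate claim you reduce to --- that \emph{every} connected component of the sampled graph $H$ has size $\tilde{\Omega}(\delta(G))$ with high probability --- is false in general, even under the hypothesis that a non-trivial minimum cut exists. Take $\delta=n^{1/10}$, a large blob containing a non-trivial cut of value $\delta$, and $\Theta(n)$ vertex-disjoint triangles whose vertices each have their remaining $\delta-2$ edges going into the blob. A fixed triangle is closed under its own samples with probability $\Theta(\delta^{-6})$ and receives no in-samples except with probability $o(1)$, so in expectation (and w.h.p.) $n^{\Omega(1)}$ triangles survive as components of size $3\ll\delta$. The correct statement in \cite{GNT20} is only the averaged one --- the \emph{number} of components is $O(n/\delta)$ w.h.p. --- and its proof is the technically heavy part of their theorem; it cannot go through a minimum-component-size bound. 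Moreover, even where your claim is true, the exposure argument does not deliver it: your bound guarantees only that each of the two samples of a newly exposed vertex leaves the current component with probability at least $1/2$, i.e.\ an offspring mean of at least $1$. This is a (barely) critical branching process, which dies out early with \emph{constant} probability, not probability $n^{-c}$; consequently the claimed Chernoff-based ``constant-factor growth per level'' does not follow, and the union bound over seeds is unavailable.

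For item~(2), the outline is the standard one and essentially sound, but the treatment of vertices with $x_v$ close to $d_v$ is the wrong fix. If even $O(1)$ vertices have $x_v=d_v-O(1)$, their factors are $O(1/d_v^2)$ (or exactly $0$ for a degree-$2$ vertex under without-replacement sampling), so ``treating their factors separately'' cannot recover a constant lower bound. What actually saves the argument is that such vertices do not exist: since the fixed cut is a \emph{global} minimum cut, moving any vertex $v$ across it yields a cut (possibly trivial) of value $\lambda(G)-(2x_v-d_v)$, so minimality forces $x_v\le d_v/2$ for every $v$; combined with $\sum_v x_v=2\lambda(G)\le 2\delta(G)$ and $d_v\ge\delta(G)$ this gives the $\exp(-O(1))$ bound directly. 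Two residual care points: your product formula presumes two distinct sampled edges per vertex, and in that model a degree-$2$ vertex with one crossing edge still yields a zero factor (e.g.\ a $4$-cycle), so one must use the sampling model of \cite{GNT20} (independent samples) or handle very small degrees explicitly; and the easy half, $\lambda(G')\ge\lambda(G)$, is fine as you state it.
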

The other building block of the algorithm is maximal $k$-packing of forests.
\begin{definition}[Maximal $k$-Packing of Forests]
    Given a (possibly weighted) graph $G$, a maximal $k$-packing of forests is a set of edge-disjoint forests $T_1,\ldots,T_k$ that are maximal, i.e. for every $i\in [k]$ and $e\in E\setminus \cup_j T_j$, the edge set $T_i \cup \{e\}$ contains a cycle.
\end{definition}
\begin{lemma}
    \label{lemma:k-tree-packing}
    Given a (possibly weighted) graph $G$ on $n$ vertices and parameters $r\in\{1,\ldots,\log n\}$ and $k\in [n]$, one can find a maximal $k$-packing of forests of $G$ using $\tO(krn^{1+1/r})$ cut queries in $2r$ rounds.
    The algorithm is randomized and succeeds with probability $1-n^{-4}$.
\end{lemma}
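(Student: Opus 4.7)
I plan to build the packing over $r$ Borůvka-style phases, each taking two rounds of cut queries, for $2r$ rounds total. Throughout, maintain forests $T_1,\ldots,T_k$, initially empty, and let $H_i$ be the multigraph obtained from $G$ at the start of phase $i$ by contracting each connected component of $T_1\cup\cdots\cup T_k$ into a super-vertex. The purpose of phase $i$ is to grow the forests and shrink $|V(H_i)|$, so that after $r$ phases each further edge would close a cycle in every $T_j$.

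In round $1$ of phase $i$, each super-vertex $v$ of $H_i$ draws $q\coloneqq\tilde{\Theta}(k\cdot n^{1/r})$ uniform random edges from its cut in $H_i$, all in parallel across super-vertices, using \Cref{lemma:edge-sampling-1-round}. In round $2$, the pool of sampled edges (with a small number of further cut queries to decode endpoints) feeds a greedy matroid-union step: each sampled edge is added to some $T_j$ whenever this preserves the forest property of $T_j$. The newly added edges define the contraction used in phase $i+1$. Since \Cref{lemma:edge-sampling-1-round} uses $O(\polylog n)$ non-adaptive queries per sample and $|V(H_i)|\le n$, each phase costs at most $\tO(kn^{1+1/r})$ queries, and round $2$ uses no more; over $r$ phases the total is $\tO(krn^{1+1/r})$ queries in $2r$ rounds, matching the target.

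For correctness, let $p_e^{\text{sp}}$ denote the per-edge inclusion probability that suffices to build a $(1\pm 1/2)$-cut sparsifier of $G$ in the \Benczur-Karger scheme. The per-vertex uniform sampling of $q$ edges effectively includes each incident edge with probability at least $k\cdot p_e^{\text{sp}}$, up to constants. Karger's cut-counting argument then implies that, with probability $1-n^{-\Omega(1)}$, the pooled subgraph preserves every cut of $G$ up to a factor $(1\pm 1/2)$ after rescaling by $1/k$; equivalently, every cut of value $c$ retains at least $\min(c,k)$ sampled edges. By Nash-Williams' theorem, this is exactly the condition under which the matroid-union step extracts a $k$-packing matching the Nash-Williams packing number of the current contracted piece, so after $r$ phases the contracted structure stabilizes and $(T_1,\ldots,T_k)$ is a maximal $k$-packing of $G$. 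A union bound over phases and cut sizes (using Karger's $n^{O(\alpha)}$ bound on the number of cuts of value at most $\alpha\lambda$) keeps the overall failure probability below $n^{-4}$.

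The main obstacle is making the ``effective per-edge probability $\ge k\cdot p_e^{\text{sp}}$'' claim rigorous: per-vertex fixed-count sampling is not literally independent per-edge Bernoulli sampling, which is the regime in which the classical sparsifier concentration bounds are stated. I plan to bridge the gap by Poissonizing the per-vertex samples at a constant-factor loss, and then observing that the induced Bernoulli inclusion rate for an edge $(u,v)$ is at least $q/\max(d_{H_i}(u),d_{H_i}(v))$, which dominates $k\cdot p_e^{\text{sp}}$ by the choice of $q$; once this inequality is in place, the standard sparsifier tail bound closes the argument.
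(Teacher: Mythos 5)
Your plan diverges from the paper's, and its central claim does not hold. Everything rests on the assertion that giving every super-vertex an \emph{equal} quota of $q=\tilde\Theta(kn^{1/r})$ uniform incident samples yields per-edge inclusion probability at least $k\cdot p_e^{\mathrm{sp}}$, hence a $(1\pm 1/2)$-sparsifier of the current graph in each phase. Both halves fail. First, $p_e^{\mathrm{sp}}=\Theta(\log n/\kappa_e)$ with $\kappa_e$ the strength, and strength can be $1$ while both endpoint degrees are polynomially large (e.g.\ every edge of a two-level tree whose internal vertices have degree $\sqrt n$); the per-vertex scheme then includes such an edge with probability about $q/\min(d(u),d(v))=\tilde\Theta(kn^{1/r}/\sqrt n)\ll\min(1,k\log n)$, so the inequality ``$\ge k\,p_e^{\mathrm{sp}}$'' is false unless $q=\tilde\Omega(kn)$, which destroys the query bound. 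Second, even the weaker property that is actually needed per phase --- that every cut inside an $\Omega(m_j/n^{1+1/r})$-strong component of the contracted multigraph receives $\Omega(k\log n)$ samples --- fails under equal quotas: a super-vertex can have degree $\Theta(m_j)$ dominated by parallel edges leaving the component, so the crossing edges of such a cut can receive $o(1)$ samples in expectation. This is precisely why the paper's algorithm does not use per-vertex quotas but samples $\tau=\tilde\Theta(kn^{1+1/r})$ edges near-uniformly from the \emph{whole} contracted graph, by first estimating degrees (\Cref{lemma:degree-estimation}) and then choosing a vertex proportionally to its estimated degree and a uniform incident edge via \Cref{lemma:edge-sampling-1-round}; that is also where the two rounds per phase come from. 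Moreover, at this sampling rate no single phase preserves all cuts --- cuts of value below $m_j/n^{1+1/r}$ get essentially no samples --- so your ``every cut keeps $\min(c,k)$ edges, apply Nash--Williams'' step cannot be carried out; the paper instead only shows that strong components contract, invokes \Cref{lemma:edge-weight-connectivity} to get an $n^{1/r}$-factor drop in the edge count per phase, and uses the final phase as a coupon-collector round that recovers \emph{all} $O(n^{1+1/r})$ surviving edges, which is what ultimately gives maximality.

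A second genuine flaw is the contraction rule. You contract each connected component of the union $T_1\cup\cdots\cup T_k$; but $u,v$ may be connected in $T_1$ and not in $T_2$, get merged, and then an edge inside the merged super-vertex is never considered again even though it could still be added to $T_2$ without creating a cycle. The output then need not be a maximal $k$-packing, which is exactly the property the min-cut application relies on ($\delta(G)$ maximal forests must capture every min-cut edge). The paper contracts $u,v$ only when they are connected in \emph{every} one of the $k$ forests, so any edge hidden inside a super-vertex provably closes a cycle in all forests. Finally, ``after $r$ phases the contracted structure stabilizes'' is asserted rather than proved; without the quantitative edge-count decay and the exhaustive last round, there is no reason $r$ phases suffice.
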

We now outline the algorithm of \Cref{lemma:k-tree-packing}, for further details see \Cref{sec:k-tree-packing}.
The algorithm begins by creating a set of empty trees $T_1,\ldots,T_k$.
Then, in each of $r-1$ rounds it uniformly samples $\tO(kn^{1+1/r})$ edges from the graph, augments the trees $\set{T_i}$ by adding all sampled edges that do not form a cycle, and contracts all vertices that are connected in all trees.
We show that the number of edges decreases by an $n^{1/r}$ factor in each round; hence, after $r-1$ rounds the number of remaining edges is $O(n^{1+1/r})$.
Finally, the algorithm recovers all remaining edges by sampling $\tO(n^{1+1/r})$ edges uniformly, which recovers all edges by a standard coupon collector argument, and augments the trees appropriately.

To guarantee the decrease in the number of edges we leverage an elegant connection to cut sparsifier contraction.
The seminal work \cite{BK15} showed that if we sample (and reweight) every edge with probability proportional to its strong connectivity, then the resulting graph approximates all the cuts of the original graph with high probability.
\begin{definition}[Strength of an Edge]
    Let $G=(V,E,w)$ be a weighted graph.
    We say that $S\subseteq V$ is \emph{$\kappa$-strong} if the minimum cut of the induced graph $G[S]$ is at least $\kappa$.
    An edge $e\in E$ is $\kappa$-strong if it is contained within some $\kappa$-strong component $S$.
\end{definition}
We also need the following lemma concerning the existence of $\kappa$-strong components in a graph.
\begin{lemma}[Lemma 3.1 in \cite{BK96}]
    \label{lemma:edge-weight-connectivity}
    Every weighted graph $H$ with $n$ vertices and total edge weight at least $\kappa(n-1)$ must have a $\kappa$-strong component.
\end{lemma}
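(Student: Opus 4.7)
The plan is to prove the contrapositive by induction on the number of vertices $n$: if a weighted graph $H$ on $n$ vertices has no $\kappa$-strong component, then its total edge weight is strictly less than $\kappa(n-1)$.

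The base case $n=1$ is vacuous (a single vertex has no edges, and the bound $\kappa(n-1)=0$ holds trivially; note that by convention a singleton is $\kappa$-strong, so the contrapositive hypothesis is actually impossible for $n=1$, which is fine). For the inductive step, suppose the claim is established for every weighted graph with fewer than $n$ vertices, and let $H$ be a weighted graph on $n$ vertices with no $\kappa$-strong component. In particular $V(H)$ itself is not $\kappa$-strong, so there exists a non-trivial cut $(A,B)$ of $H$ whose total crossing weight satisfies $w(A,B) < \kappa$. Write $a=|A|$ and $b=|B|$, so $a+b=n$ and $a,b\ge 1$.

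The key observation is that any $\kappa$-strong component $S$ of the induced subgraph $H[A]$ would be a $\kappa$-strong component of $H$ as well, since $H[S]$ is the same graph in both settings. Hence $H[A]$ has no $\kappa$-strong component, and likewise for $H[B]$. Applying the inductive hypothesis to $H[A]$ and $H[B]$ gives $w(H[A]) < \kappa(a-1)$ and $w(H[B]) < \kappa(b-1)$. Adding up the three pieces, the total weight of $H$ satisfies
\begin{equation*}
  w(H) \;=\; w(H[A]) + w(H[B]) + w(A,B) \;<\; \kappa(a-1) + \kappa(b-1) + \kappa \;=\; \kappa(n-1),
\end{equation*}
which completes the induction.

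The only subtle point, and therefore the main thing to be careful about, is handling the corner case where one side of the cut is a singleton (so the inductive hypothesis gives only the vacuous bound $w(H[\{v\}])=0 < 0$ is false, but actually $w(H[\{v\}])=0 \le \kappa\cdot 0 = 0$ and one needs the strict inequality only on the non-trivial side). This is resolved by noting that when $a=1$ the term $\kappa(a-1)=0$ matches $w(H[A])=0$ exactly, and the strict inequality $w(A,B)<\kappa$ alone suffices to push the whole sum below $\kappa(n-1)$; the induction formally applies on the side with $b\ge 2$, or one can treat $a=b=1$ separately where the bound $w(H)=w(A,B)<\kappa=\kappa(n-1)$ is immediate. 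Once this boundary bookkeeping is handled, the argument goes through cleanly.
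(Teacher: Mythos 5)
The paper does not prove this lemma at all — it is quoted verbatim from [BK96] with a citation — so there is no in-paper proof to compare against. Your contrapositive induction is correct and is essentially the standard argument behind the cited result (usually phrased as iteratively splitting along cuts of weight less than $\kappa$ rather than as a recursion): since $V(H)$ is not $\kappa$-strong there is a cut $(A,B)$ with $w(A,B)<\kappa$, a $\kappa$-strong set of $H[A]$ or $H[B]$ would be $\kappa$-strong in $H$, and the weights telescope to $w(H)<\kappa(a-1)+\kappa(b-1)+\kappa=\kappa(n-1)$. Your handling of the singleton corner case is the right fix; the cleanest way to state it is that the inductive claim should be $w(H)\le\kappa(n-1)$ for all $n\ge 1$ (with equality only possible at $n=1$), since strictness for $n\ge 2$ is then supplied by the cut term $w(A,B)<\kappa$ alone.
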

We can now detail the process of edge reduction of \Cref{lemma:k-tree-packing}.
Let $G$ be a graph with $m$ edges and $n$ vertices.
Observe that after one round of sampling, the algorithm contracts every $(m/n^{1+1/r})$-strong component of $G$, and thus by \Cref{lemma:edge-weight-connectivity} the number of edges remaining in the graph is at most $m/n^{1/r}$.
Let $C\subseteq V$ be a maximal $\Omega(m/n^{1+1/r})$-strong component of $G$, and assume the algorithm samples every edge of the graph with probability $p_e \approx k/\kappa$.
In expectation the algorithm samples at least $p_e \cdot m/n^{1+1/r} \approx k$ edges crossing each cut $S\subseteq C$, which also holds with high probability by standard concentration and cut counting results.
Therefore, for every $u,v\in C$, the minimum cut in the sampled subgraph has at least $k$ edges, hence they are connected in $k$ edge disjoint forests and will be contracted.
In conclusion, in every iteration the algorithm contracts all $(m/n^{1+1/r})$-strong components of the graph, and the number of edges in the graph decreases by a factor of $n^{1/r}$.

\begin{proof}[Proof of \Cref{theorem:min-cut-k-rounds}]
    The algorithm begins by performing a $2$-out contraction on $G$ to obtain a graph $G'$ with $\tO(n/\delta(G))$ vertices which preserves each non-trivial minimum cut of $G$ with constant probability by \Cref{theorem:2-out-contraction}.
    In parallel, the algorithm learns $\delta(G)$ by querying the degree of each vertex.
    Then, it packs $k=\delta(G)$ trees $\set{T_i}_{i=1}^k$ in $G'$ using \Cref{lemma:k-tree-packing}.
    Notice that since the minimum cut of $G$ is at most $\delta(G)$ and each edge has weight at least $1$ then $\delta(G)$ trees in $G'$ will include all edges in the minimum cut.
    Hence, returning the minimum between $\delta(G)$ and the minimum cut of $H=(V,E_H=\cup_i T_i)$ will yield the minimum cut of $G$ with constant probability.
    To amplify the success probability we repeat the algorithm $O(\log n)$ times in parallel and return the minimum cut found.

    To analyze the query complexity notice that by \Cref{lemma:edge-sampling-1-round}, sampling the edges for the $2$-out contraction takes $\tO(n)$ queries in one round.
    Hence, the algorithm constructs $G'$ in $\tO(n)$ queries in one round.
    Then, it packs $k=\delta(G)$ trees in $G'$, which has only $\tO(n/k)$ vertices, using $\tO(\delta(G)r(n/\delta(G))^{1+1/r})$ queries in $2r$ rounds by \Cref{lemma:k-tree-packing}.
    Hence, overall the algorithm require $\tO(\delta(G)^{-1/r}rn^{1+1/r})$ queries in $2r+1$ rounds.

    We conclude the proof by analyzing the success probability of the algorithm.
    Our implementation of the $2$-out contraction succeeds with probability $1-n\cdot n^{-4}$ since it only requires sampling $O(n)$ edges using \Cref{lemma:edge-sampling-1-round}.
    Then, each tree packing succeeds with probability $1-2n^{-2}$ by \Cref{lemma:k-tree-packing}.
    By a union bound, the entire algorithm succeeds with probability $1-n^{-1}$, which concludes the proof of \Cref{theorem:min-cut-k-rounds}.
\end{proof}

\subsection{\texorpdfstring{Weighted Minimum Cut in $O(r)$ rounds} {Weighted Minimum Cut in O(r) rounds}}
In this section we give an overview of our algorithm for minimum cut in a weighted graph with low adaptivity.
It is an adaptation of the minimum cut algorithm of \cite{MN20}, which reduces the problem to constructing a cut sparsifier and solving a monotone matrix problem which is easily parallelizable.
The main difficulty lies within the construction of a cut sparsifier;
while previous work in the cut-query model already provided cut sparsifiers \cite{RSW18, PRW24}, its core step, weight-proportional edge sampling, uses $O(\log n)$ rounds.
Our main technical contribution here is our two-round weight-proportional edge sampling primitive (\Cref{lemma:weighted-edge-sampling}), which enables the construction of a $(1\pm \epsilon)$-cut-sparsifier in $3r$ rounds and the completion of the entire reduction within $O(r)$ rounds.

The algorithm of \cite{MN20} for finding a minimum cut in a weighted graph is actually based on the $2$-respecting minimum-cut framework of \cite{Karger00};
where a spanning tree $T\subseteq E$ is called \emph{$2$-respecting} for a cut $\C \subseteq E$ if $|\C\cap T| \le 2$.
The main insight here is that a given a spanning tree $T$ that is $2$-respecting for a minimum cut $\C$, there are only $O(n^2)$ cuts that one needs to check to find a minimum cut, and they correspond to all pairs of edges in $T$.
Originally \cite{Karger00}, proposed a clever dynamic algorithm to calculate all these cuts quickly.
In contrast, the algorithm of \cite{MN20} leverages the structure of the $2$-respecting spanning tree to show that the number of potential cuts is actually much smaller; restricting attention to sub-trees, $\set{T_i}$, that are all paths and whose total size is bounded.
Furthermore, when $T$ is a path the task of finding a minimum cut reduces to the following problem.
\begin{definition}[Monotone Matrix Problem]
    \label{definition:monotone-cost-matrix}
    Let  $A\in \R_+^{a\times a}$ be a matrix.
    For every column $j$ denote the first row in which $j$ attains its minimum value by $j_s$ and  the last row in which $j$ attains its minimum value by $j_t$.
    The matrix is called \emph{monotone} if for every $i>j$ we have $j_t\le i_s$;
    i.e. the row in which the minimum value is found increases monotonically as we move to the right.

    The \emph{monotone matrix problem} is to find the minimum value of a monotone matrix (of dimension $a\times a$) while querying the value of as few entries as possible.
\end{definition}
In the context of the minimum cut problem, each entry in the matrix corresponds to a cut in $G$ and can be recovered using a single cut query.
We summarize some results from \cite{MN20} in the following lemma, whose proof is provided in \Cref{sec:2-resp-min-cut-reduction}.
\begin{restatable}{lemma}{mnsummary}
    \label{lemma:mn20-summary}
    Assume there is an algorithm to construct a $(1\pm\epsilon)$-cut-sparsifier for a graph $G$ on $n$ vertices using $q_S(n)$ queries in $r_S(n)$ rounds, and an algorithm that solves the monotone matrix problem on instances of size $n$ using $q_M(n)$ queries in $r_M(n)$ rounds.
    If in addition the function $q_M$ is convex, then it is possible to find a minimum cut of $G$ using $\tO(q_S(n) + q_M(n) + n)$ cut queries in $r_S(n)+r_M(n)+2$ rounds.
\end{restatable}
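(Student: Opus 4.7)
The plan is to follow the two-respecting tree framework of \cite{Karger00} as adapted in \cite{MN20}, organizing the work into three logical stages whose round budgets sum to $r_S(n)+r_M(n)+2$. First I would spend $r_S(n)$ rounds executing the assumed sparsifier algorithm to obtain a $(1\pm\eps)$-cut-sparsifier $H$ of $G$; from this point $H$ is held explicitly by the algorithm, so computing a Karger-style packing of $O(\log n)$ spanning trees on $H$ requires no further queries. Because $H$ preserves all cut values of $G$ up to a $(1\pm\eps)$ factor, a standard argument shows that with high probability at least one tree in the packing is $2$-respecting for every minimum cut of $G$.

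Then I would use one additional round to query, in parallel, all degrees $\mintcut_G(v)$ (so that the trivial bound $\delta(G)$ is available) together with all tree-cuts $\mintcut_G(V_e)$ for edges $e$ in the packed trees, where $V_e$ denotes one side of the cut obtained by removing $e$ from its tree. This amounts to $\tO(n)$ non-adaptive queries and supplies all the ``base'' quantities needed to instantiate monotone matrices on $G$. Following \cite{MN20}, I would decompose each packed tree into edge-disjoint sub-paths $T_1,\ldots,T_\ell$ with $\sum_i |T_i| = \tO(n)$, and reduce the search for a $2$-respecting minimum cut on each $T_i$ to a monotone matrix problem of side $a_i \eqdef |T_i|$ whose entries are cut values in $G$, each computable by combining one fresh cut query with the already-loaded base quantities.

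In the third stage I would run all of these monotone matrix instances in parallel, each taking at most $r_M(n)$ rounds; this is legal in the round model because the queries made by different instances are independent, as each entry is determined solely by the sub-path $T_i$ and the base cuts gathered earlier. The total query cost is $\sum_i q_M(a_i)$, which by convexity of $q_M$ (with $q_M(0)=0$, giving super-additivity) is bounded by $q_M\bigl(\sum_i a_i\bigr) \le q_M(\tO(n)) = \tO(q_M(n))$, times the $\polylog(n)$ factor coming from the number of packed trees. A final round compares the best $2$-respecting candidate against $\delta(G)$ and returns the minimum, which together with the weight-loading round accounts for the ``$+2$'' and yields total cost $\tO(q_S(n)+q_M(n)+n)$ in $r_S(n)+r_M(n)+2$ rounds.

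The step I expect to be most delicate is the faithful translation of each $2$-respecting sub-path problem into a monotone matrix instance whose adaptivity lives entirely inside the monotone matrix solver. Concretely, each matrix entry must be expressible as a cut query on a vertex set that depends only on the two tree edges indexing it, and not on any previously observed matrix values, so that the $r_M(n)$ rounds charged to the solver are the only source of adaptivity in the third stage. This is exactly what the path-cut formula of \cite{MN20} provides, and I would verify that their construction applies unchanged, absorbing the $O(1)$ cost of combining a few base cuts per entry into the $\tO(\cdot)$ notation. The convexity hypothesis on $q_M$ enters only at the bookkeeping step that replaces $\sum_i q_M(a_i)$ by $q_M(n)$, and would fail without it if the sub-paths were very unbalanced.
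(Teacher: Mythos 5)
Your proposal follows essentially the same route as the paper: sparsify in $r_S(n)$ rounds, pack $O(\log n)$ trees on the explicitly-held sparsifier for free, spend one non-adaptive round on $\tO(n)$ base queries, reduce the $2$-respecting search to parallel monotone-matrix instances on paths of total length $\tO(n)$, and use convexity (super-additivity) of $q_M$ to bound $\sum_i q_M(|T_i|)$ by $\tO(q_M(n))$. The only point you gloss over is the cross-path case, where the two tree edges of a $2$-respecting cut lie on different paths of the heavy-light decomposition and one must first identify the relevant path pairs with $\tO(n)$ non-adaptive queries — but this is precisely the step the paper also delegates to \cite{MN20}, and it fits inside the same extra round you already budget.
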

Therefore, in order to find a minimum cut it suffices to construct a cut sparsifier and then solve the monotone matrix problem.
Our main technical contribution in this section is a construction of a cut sparsifier with low adaptivity.
We begin by formally defining the notion of a cut sparsifier.
\begin{definition}
    Let $G=(V,E,w)$ be a weighted graph and let $H=(V,E_H,w_H)$ be a weighted subgraph of $G$.
    We say that $H$ is a \emph{$(1\pm \epsilon)$-cut-sparsifier} of $G$ if,
    \begin{equation*}
        \forall S\subseteq V,
        \qquad 
        \mintcut_H(S) \in (1\pm\epsilon)\mintcut_G(S)
        .
    \end{equation*}
\end{definition}
\begin{lemma}
    \label{lemma:sparsifier-weighted-k-rounds}
    Given a weighted graph $G=(V,E,w)$ on $n$ vertices with integer edge weights bounded by $W$, and a parameter $r\in \{1,2,\ldots,\log n\}$,
    one can construct a $(1+\epsilon)$-cut-sparsifier of $G$ using $\tO(\epsilon^{-2} rn^{1+(1+\log_n W)/r})$ cut queries and $3r+3$ rounds.
    The algorithm is randomized and succeeds with probability $1-n^{-2}$.
\end{lemma}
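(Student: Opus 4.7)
The plan is to adapt the Bencz\'ur--Karger cut-sparsifier framework \cite{BK96} to the cut-query model with few rounds, replacing the $O(\log n)$-round binary-search sampler of \cite{RSW18,PRW24} with our two-round weight-proportional edge sampling primitive (Lemma \ref{lemma:weighted-edge-sampling}). The overall architecture mirrors the unweighted forest-packing algorithm (Lemma \ref{lemma:k-tree-packing}): proceed in $r$ epochs, and in each epoch use the sampler to add edges to the running sparsifier and contract newly discovered strong components so that the effective graph shrinks. Each epoch uses three rounds (two for weight-proportional sampling, one for contraction bookkeeping), which accounts for the $3r$ in the round bound; three extra rounds handle the initial degree queries and the final read-out of residual edges.

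In more detail, I would fix a per-epoch sampling budget of $q=\tO(n^{(1+\log_n W)/r}/\epsilon^{2})$ weight-proportional samples per vertex, drawn in parallel across vertices using Lemma \ref{lemma:weighted-edge-sampling}. Every sampled edge is added to the running sparsifier with a reweighting equal to the inverse of its sampling probability, as in Bencz\'ur--Karger. By the strong-connectivity bound of Lemma \ref{lemma:edge-weight-connectivity} and the connection to cut sparsification highlighted in the discussion preceding Lemma \ref{lemma:k-tree-packing}, amplifying sampling probabilities by a factor of $k=\Theta(\log n/\epsilon^{2})$ relative to the forest-packing threshold ensures both (i) that any $\Omega(k)$-strong component of the current graph is contracted and (ii) that the restriction of the sparsifier to the current graph is $(1\pm\epsilon)$-accurate on every cut that survives. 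The degree queries computed in the setup round provide each vertex's total incident weight, which is the normalization needed by Lemma \ref{lemma:weighted-edge-sampling}.

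The correctness analysis would proceed by induction on the epoch: after epoch $t$, every cut of the original graph is reproduced within $(1\pm\epsilon)$ by the partial sparsifier together with the contracted residual graph, and the total remaining edge weight drops by a factor of roughly $(nW)^{1/r}$, so that after $r$ epochs the residue is small enough to read out directly by recovering the residual graph with $\tO(n^{1+(1+\log_n W)/r})$ non-adaptive cut queries in the style of Corollary \ref{corollary:graph-recovery-cut}. The success probability is controlled by union-bounding over the $\tO(rn)$ invocations of Lemma \ref{lemma:weighted-edge-sampling}, each of which fails with probability at most $n^{-4}$, and by standard Bencz\'ur--Karger-style Chernoff bounds over the $\exp(\tO(n))$ cuts of the evolving graph.

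The main obstacle is the per-epoch concentration argument bounding the shrinkage in edge count. In the unweighted case this exploits that uniform samples can be analyzed via coupon collector and standard cut counting; in the weighted case, weight-proportional sampling introduces a dependency on the ratio between edge weights and strong connectivities, so the clean analogue of Lemma \ref{lemma:edge-weight-connectivity} requires partitioning edges into $O(\log W)$ weight classes and running the contraction argument class-by-class. This is precisely what produces the extra $\log_n W$ factor in the exponent of the query bound, and is where the technical heart of the proof lies.
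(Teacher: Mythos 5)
Your high-level architecture for the contraction phase is essentially the paper's: $r$ coarse epochs, each spending three rounds (two for the weight-proportional sampler of \Cref{lemma:weighted-edge-sampling}, one for bookkeeping) to discover and contract strong components, sweeping the strength range in geometric steps of size $n^{(1+\log_n W)/r}$. One correction on where that exponent comes from: the paper converts the weighted graph to an unweighted multigraph by splitting a weight-$w$ edge into $w$ parallel edges, so the maximum strength is $O(n^{1+\log_n W})$ by \Cref{lemma:edge-weight-connectivity} and covering $[1,nW]$ in $r$ multiplicative steps forces each step to span a factor $n^{(1+\log_n W)/r}$; no partition into $O(\log W)$ weight classes is needed, and the step you flag as the unresolved ``technical heart'' is not the mechanism used.

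The genuine gap is in how you turn samples into a sparsifier. You add each epoch's samples directly to the running sparsifier, reweighted by inverse sampling probability, and assert that the result is $(1\pm\epsilon)$-accurate on the cuts inside the components contracted that epoch. But per-vertex weight-proportional sampling with budget $q$ gives an edge $e=(u,v)$ inclusion probability roughly $q\bigl(w(e)/w(E(u,V\setminus\{u\}))+w(e)/w(E(v,V\setminus\{v\}))\bigr)$, whereas the Bencz\'ur--Karger analysis requires probability at least $\Omega(\epsilon^{-2}\log n\cdot w(e)/\kappa_e)$; these can differ arbitrarily, since an edge's strength can be far smaller than its endpoints' weighted degrees (e.g.\ a light edge joining two dense clusters), so such edges are undersampled and the surviving intra-component cuts need not concentrate. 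This is exactly why the paper does \emph{not} build the sparsifier from the epoch samples: the $3r$ rounds output only a laminar family of components with strength estimates $(C_i,\beta_i)$ accurate to within $\alpha=n^{(1+\log_n W)/r}$, and a separate three-round calibrated pass (\Cref{lemma:prw-cut-sparsifier-reduction}) then draws $\mu_i\propto\epsilon^{-2}\log^2 n\cdot w(F(C_i))/\beta_i$ weight-proportional samples from each edge class $F(C_i)$, guaranteeing every edge probability $\ge\epsilon^{-2}\log^2 n\cdot w(e)/\kappa_e$ at the cost of oversampling by the factor $\alpha$ --- which is what actually produces the $\tO(\epsilon^{-2} n^{1+(1+\log_n W)/r})$ query bound. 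To repair your proof you would need either to show the union of epoch samples stochastically dominates this calibrated distribution (false in general), or to restructure as above, deferring all sparsifier sampling to a final calibrated phase.
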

We now sketch the proof of \Cref{lemma:sparsifier-weighted-k-rounds}, for more details see \Cref{sec:sparsifier-construction}.
The sparsifier construction is similar to the one in \cite{RSW18,PRW24} but modified to work in $r$ rounds.
It follows the sampling framework of \cite{BK15}, that shows that if one samples every edge in a graph with probability proportional to its strong connectivity, then the sampled graph is a $(1\pm \epsilon)$-cut-sparsifier with high probability.
The algorithm is based on iteratively finding the strongest components of the graph using weight-proportional edge sampling, assigning a strength estimate to all their edges, and contracting them.
It repeats this process until all vertices are contracted.
Finally, it samples all the edges in the graph according to their strength estimates, and returns the sampled edges as the sparsifier.

Our algorithm improves on the round complexity of the existing constructions in two ways:
the first one is the adoption of a two-round weight-proportional sampling algorithm that eliminates an $O(\log n)$ factor in the round complexity.
The second improvement is in the number of contraction steps performed.
The algorithm of \cite{RSW18,PRW24} uses $O(\log (nW))$ steps, where in each step strengths are approximated within factor $2$.
Our algorithm uses instead $r$ steps, where each step contracts all components with strength within an $n^{(1+\log_n W)/r}$ factor.
Since the maximum component strength is $O(n^{1+\log_n W})$, the algorithm can contract all the components within $r$ steps.

Our second technical contribution is an algorithm that solves the monotone matrix problem in few rounds, the proof is provided in \Cref{sec:monotone-cost-matrix}.
\begin{lemma}
    \label{lemma:monotone-cost-matrix}
    There exists an algorithm that solves the monotone matrix problem on instances of size $n$ using $\tO(n^{1+1/r})$ queries in $r$ rounds.
\end{lemma}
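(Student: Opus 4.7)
The plan is a recursive column-sampling algorithm. Let $U(a, b, r)$ denote the query complexity for solving the monotone matrix problem on an $a \times b$ matrix in $r$ rounds. In the base case $r = 1$ the algorithm simply queries every entry, so $U(a, b, 1) \le ab$. For $r \ge 2$, the first round queries $k := \lceil b^{1/r} \rceil$ evenly spaced \emph{anchor columns} $c_1 < \cdots < c_k$ in full, at cost $ak$ queries.

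After round $1$, for each anchor column $c_j$ I extract the first row index $\rho_j$ at which that column attains its minimum. Monotonicity of the input immediately guarantees $\rho_1 \le \rho_2 \le \cdots \le \rho_k$, and moreover any non-anchor column $c$ lying in the gap $(c_j, c_{j+1})$ must attain its minimum in some row of the interval $[\rho_j, \rho_{j+1}]$. Consequently the residual problem decomposes into $k-1$ independent sub-problems, the $j$-th being an $(\rho_{j+1} - \rho_j + 1) \times (c_{j+1} - c_j - 1)$ monotone sub-matrix (the two boundary slabs before $c_1$ and after $c_k$ are handled analogously using the trivial row bounds $1$ and $a$). All sub-problems are solved in parallel by recursion with parameter $r - 1$, so the total round count is exactly $r$; the algorithm then returns the minimum over all queried entries together with the recursively reported sub-problem minima.

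For the analysis I would prove by induction on $r$ that $U(a, b, r) \le c_r \cdot a \cdot b^{1/r}$ with $c_r = O(r)$. Unfolding one level yields
\[
U(a, b, r) \;\le\; a k \;+\; \sum_{j} U\bigl(\rho_{j+1} - \rho_j + 1,\ \lceil b/k \rceil,\ r - 1\bigr) \;\le\; a k \;+\; c_{r-1} \cdot (a + k) \cdot (b/k)^{1/(r-1)},
\]
using $\sum_{j}(\rho_{j+1} - \rho_j + 1) \le a + k$ and the inductive hypothesis on sub-matrices of width $\lceil b/k\rceil$. Plugging in $k = b^{1/r}$ gives $(b/k)^{1/(r-1)} = b^{1/r}$, so both terms are of order $a \cdot b^{1/r}$, which yields $c_r \le c_{r-1} + O(1)$ and hence $c_r = O(r)$. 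Specializing to $a = b = n$ gives $U(n, n, r) = O(r \cdot n^{1+1/r}) = \tO(n^{1+1/r})$.

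The main obstacle is the boundary bookkeeping: anchor rows $\rho_j$ are effectively shared between two consecutive sub-problems, so the row totals of the sub-matrices can exceed $a$ by up to $k$. The analysis above absorbs this $+k$ overhead into the inductive constant, but one must verify that the extra term $c_{r-1} k (b/k)^{1/(r-1)} = c_{r-1} b^{2/r}$ does not dominate; this follows whenever $a \ge b^{1/r}$, i.e., the matrix is not extremely thin, and pathological thin cases (including $b \le r$) are handled by simply terminating the recursion and querying the few remaining entries directly within the current round.
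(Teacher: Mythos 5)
Your algorithm is the same as the paper's procedure MON-MAT (\Cref{alg:monotone-cost-matrix}): read $\approx b^{1/r}$ evenly spaced anchor columns in full, use monotonicity to confine each intermediate column's minimum to the row interval between the min positions of the two neighboring anchors, and recurse in parallel on the resulting blocks, for depth $r$. (A minor cosmetic difference: the paper restricts each block to rows $[i_t$ of the left anchor$,\, i_s$ of the right anchor$]$, while you use only the first-min indices $\rho_j$; your looser interval $[\rho_j,\rho_{j+1}]$ is still valid and the $+1$-per-boundary row overlap is the same.) The gap is in your accounting, not the algorithm. Your induction hypothesis $U(a,b,r)\le c_r\,a\,b^{1/r}$ does not close with $c_r\le c_{r-1}+O(1)$: unfolding gives $U(a,b,r)\le a b^{1/r}+c_{r-1}a b^{1/r}+c_{r-1}b^{2/r}$, and under your stated assumption $a\ge b^{1/r}$ the best you can do is absorb $c_{r-1}b^{2/r}\le c_{r-1}a b^{1/r}$, yielding $c_r\le 2c_{r-1}+1$, i.e.\ $c_r=\Theta(2^r)$. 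Since the lemma is applied with $r$ up to $\log n$, this is not a harmless constant: it degrades the bound to $O(2^r n^{1+1/r})$, which at $r=\log n$ is worse than reading the whole matrix. Moreover, your fallback for thin blocks (``terminate and query all remaining entries'') itself violates the inductive bound you are trying to maintain: a block with $a'=1$ row and width $b'$ costs $b'$ queries, far exceeding $c_{r-1}a'(b')^{1/(r-1)}$.

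Both issues are repairable without changing the algorithm. One fix is to strengthen the induction to an additive form such as $U(a,b,r)\le r\,a\,b^{1/r}+r^2 b$: the extra term $r^2 b$ absorbs both the $b^{2/r}$ overhead from shared anchor rows and the cost of degenerate thin blocks, the recursion closes with $c_r\le c_{r-1}+1$ on the leading term, and specializing to $a=b=n$ gives $O(r n^{1+1/r}+r^2 n)=\tO(n^{1+1/r})$. The paper instead avoids a per-subproblem induction altogether and does global per-level accounting: at recursion level $i$ the total number of rows over all submatrices is at most $n$ plus the cumulative number of submatrices created, so each level costs $O(n^{1+1/r})$ queries and the $r$ levels give $O(r n^{1+1/r})$. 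Either repair makes your argument sound; as written, the claimed recurrence $c_r\le c_{r-1}+O(1)$ is unjustified.
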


The algorithm for the monotone matrix problem in \cite{MN20} uses a divide-and-conquer approach.
At each step the algorithm reads the entire middle column $j$ of the matrix $A$ and finds the minimum value in the column.
Then, it splits the matrix into two submatrices $A_L=A[1,\ldots, j_s; 1,\ldots,j]$ and $A_R=A[j_t,\ldots n; j+1,\ldots n]$ based on the location of the minimum value in the column.
The algorithm is guaranteed to find the minimum value in each column by the monotonicity property of the matrix.
Finally, it is easy to verify that as the size of the problem decreases by factor $2$ in each iteration, the overall complexity of the algorithm is $\tO(n)$.

Unfortunately, the algorithm of \cite{MN20} requires $O(\log n)$ rounds to complete all the recursion calls.
Our algorithm instead partitions the matrix into $n^{1/r}$ blocks and then recurses on each block, which requires a total of $O(n^{1+1/r})$ queries in $r$ rounds.

To conclude this section, notice that \Cref{theorem:weighted-min-cut} follows by combining \Cref{lemma:mn20-summary}, \Cref{lemma:sparsifier-weighted-k-rounds}, and \Cref{lemma:monotone-cost-matrix}.

\subsection{Open Questions}
\subparagraph*{Optimal Tradeoff Between Rounds vs Queries Complexity} 
An interesting open question is whether our constructive results have matching lower bounds or, alternatively, if better algorithms exist.
Currently, there are no known query lower bounds even for algorithms that use a single round of queries.
We note that connectivity can be solved in one round with $\tO(n)$ queries \cite{ACK21}, however it remains unclear whether this can be extended to finding a minimum cut, or whether there exists a lower bound of $\Omega(n^{1+\epsilon})$ cut queries for one round algorithms.

\subparagraph*{Adaptivity of Deterministic Algorithms}
A deterministic algorithm for finding a minimum cut in simple graphs using $\tO(n^{5/3})$ cut queries was recently shown in \cite{ASW25}, however its round complexity is polynomial in $n$, and it seems challenging to achieve polylogarithmic round complexity. 
In comparison, graph connectivity can be solved deterministically using $\tO(n^{1+1/r})$ cut queries in $O(r)$ rounds \cite{ACK21},\footnote{The result is for the weaker Bipartite Independent Set (BIS) query model, which can be simulated using cut queries.} though these techniques do not seem to extend to finding a minimum cut.

{\small
  \bibliographystyle{alphaurl}
  \bibliography{cut-query-adaptivity}
} %
\appendix

\section{Preliminaries}
\label{sec:preliminaries}
\subsection{Notation}
Let $G=(V,E)$ be some graph on $n$ vertices and $m$ edges.
For every $v\in V$ let $N_G(v)$ be the set of neighbors of $v$ in $G$ and let $d_G(v)=|N_G(v)|$ the degree of $v$ in $G$.
We also denote the minimum degree of $G$ by $\delta(G)=\min_{v\in V}d_G(v)$.
Additionally, denote the value of the minimum cut of $G$ by $\lambda(G)$.

For two disjoint subsets $S,T\subseteq V$ let $E(S,T)$ be the set of edges with one endpoint in $S$ and the other in $T$.
When $v$ is a single vertex we will use the shorthand $d_T(v) = |E(\{v\},T)|$.
In the case of weighted graphs, we denote the weight of a set of edges $E'\subseteq E$ by $w(E') \coloneqq \sum_{e\in E'} w(e)$.

\subsection{Cut Query Primitives}
Throughout the proof we will use the following cut query primitives to find the weight of edges between two sets of vertices, $w(E(S,T))$.
Notice that when the graph is unweighted we have $w(E(S,T))= |E(S,T)|$.
\begin{claim}
  \label{claim:s-t-num-edges}
  Let $S,T \subseteq V$ be two disjoint sets.
  It is possible to find $w(E(S,T))$ using $O(1)$ non-adaptive cut queries.
\end{claim}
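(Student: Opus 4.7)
The plan is to express $w(E(S,T))$ as a linear combination of three cut values, each of which can be obtained by a single cut query. This will show that three non-adaptive queries suffice.

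Specifically, I would query $\mintcut_G(S)$, $\mintcut_G(T)$, and $\mintcut_G(S\cup T)$ in parallel, and then return
\begin{equation*}
    w(E(S,T)) = \tfrac{1}{2}\bigl(\mintcut_G(S) + \mintcut_G(T) - \mintcut_G(S\cup T)\bigr).
\end{equation*}
To justify this identity, I would partition each of the three edge sets according to where the non-$S\cup T$ endpoint lies. Writing $U = V\setminus(S\cup T)$ and using disjointness of $S$ and $T$, the edges counted by $\mintcut_G(S)$ split into $E(S,T)$ and $E(S,U)$; those counted by $\mintcut_G(T)$ split into $E(S,T)$ and $E(T,U)$; and those counted by $\mintcut_G(S\cup T)$ split into $E(S,U)$ and $E(T,U)$. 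Adding the first two and subtracting the third cancels the $E(S,U)$ and $E(T,U)$ contributions and leaves exactly $2\,w(E(S,T))$.

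Since all three queries depend only on the input sets $S$ and $T$ and can therefore be issued simultaneously, this gives a non-adaptive algorithm using $3 = O(1)$ cut queries, establishing the claim. The argument is purely book-keeping about edge partitions, so there is no real obstacle; the only thing to be slightly careful about is to invoke the disjointness of $S$ and $T$ when asserting that $E(S,T)$ appears with coefficient $+2$ (rather than, e.g., $+1$ due to a double-counted edge inside $S\cap T$).
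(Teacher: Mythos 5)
Your proposal is correct and is essentially identical to the paper's proof: both query $\mintcut_G(S)$, $\mintcut_G(T)$, and $\mintcut_G(S\cup T)$ non-adaptively and use the identity $\mintcut_G(S\cup T)=\mintcut_G(S)+\mintcut_G(T)-2w(E(S,T))$. Your edge-partition justification of the identity is just a more explicit write-up of the same observation.
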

\begin{proof}
  Observe that $\mintcut_G(S\cup T) = \mintcut_G(S)+\mintcut_G(T)-2w(E(S,T))$.
  Hence, we can find $w(E(S,T))$ using three queries, one for each of $\mintcut_G(S)$, $\mintcut_G(T)$, and $\mintcut_G(S\cup T)$.  
\end{proof}

Recovering an unknown graph using queries is a well studied problem.
In particular, the model of additive queries where given a set $S\subseteq V$ the query returns the total weight of edges in $S\times S$, $Q(S) \coloneqq \sum_{E \cap S\times S} w(e)$, has received a lot of attention \cite{CK08,BM11,BM12}.
Throughout the paper we shall use the following result from \cite{BM11}.
\begin{theorem}
  \label{theorem:graph-recovery}
  Let $G$ be a weighted graph with $m$ edges.
  There exists an algorithm that uses $\tO(m)$ non-adaptive additive queries to recover $G$.
\end{theorem}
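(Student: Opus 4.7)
I would recast graph recovery as a linear sparse recovery problem. The unknown weighted graph corresponds to a vector $x \in \R^N$ with $N=\binom{n}{2}$, supported on at most $m$ coordinates (one per edge, valued by its weight). An additive query on $S$ returns the linear functional $\sum_{\{u,v\}\subseteq S} x_{\{u,v\}}$, so $q$ queries define a matrix $M \in \{0,1\}^{q\times N}$ with $M_{i,\{u,v\}}=\mathbf{1}[u\in S_i]\mathbf{1}[v\in S_i]$. Two distinct $m$-sparse vectors give the same measurements iff their (now $2m$-sparse) difference lies in $\ker M$, so it suffices to produce $M$ in which every $2m$ columns are $\R$-linearly independent.

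The construction is the natural random one: pick each $S_i$ by including each vertex independently with probability $1/2$, and take $q=\Theta(m\log n)$. For the analysis I would pass to $\mathbb{F}_2$, using that for a $\{0,1\}$-matrix one has $\mathrm{rank}_{\mathbb{F}_2}(M)\le \mathrm{rank}_\R(M)$, so full rank over $\mathbb{F}_2$ is enough. Fix a set $T$ of $2m$ edges and a nonzero $y\in \mathbb{F}_2^T$; writing $a^{(i)}_v=\mathbf{1}[v\in S_i]$, the $i$-th coordinate of $M|_T\, y$ in $\mathbb{F}_2$ is the degree-$2$ multilinear polynomial $\sum_{\{u,v\}\in T} y_{\{u,v\}} a^{(i)}_u a^{(i)}_v$, which is nonzero as a polynomial because $y\ne 0$ and the monomials $a_u a_v$ for distinct edges are distinct.

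A standard induction shows that a nonzero multilinear polynomial of degree $d$ over $\mathbb{F}_2$ is nonzero at a uniformly random point of $\{0,1\}^n$ with probability at least $2^{-d}$. Here $d=2$, so each row produces $0$ with probability at most $3/4$, giving $\Pr[M|_T\, y=0]\le (3/4)^q$. Union-bounding over $\binom{N}{2m}\le n^{4m}$ sets $T$ and $2^{2m}-1$ nonzero $y\in \mathbb{F}_2^T$, the overall failure probability is at most $n^{4m}\cdot 2^{2m}\cdot (3/4)^q$, which is $n^{-\omega(1)}$ once $q=Cm\log n$ for a sufficiently large constant $C$. This gives $\tO(m)$ non-adaptive queries whose measurements determine $x$ uniquely; by the probabilistic method one can fix a single query set that works for all $m$-sparse inputs.

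The main obstacle is the polynomial-vanishing bound, which I would handle by a one-line induction: if the polynomial contains the variable $a_v$, write it as $a_v q + r$ with $q$ nonzero of degree at most $d-1$, apply induction to $q$, and condition on $q\ne 0$ to gain an extra factor $1/2$ from the fresh uniform $a_v$. The recovery map from measurements to $x$ is information-theoretic (brute-force search over sparse supports), but the theorem concerns only query complexity, so this suffices.
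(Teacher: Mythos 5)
Your argument is correct, but it is worth noting that the paper does not prove this statement at all: it imports it verbatim from [BM11] (Bshouty--Mazzawi) as a black box, so your proposal is a self-contained substitute rather than a variant of an in-paper proof. The chain of reductions is sound: identifying $G$ with an $m$-sparse vector in $\R^{\binom{n}{2}}$, observing that non-identifiability is a $2m$-sparse kernel vector, passing from $\R$-independence to $\mathbb{F}_2$-independence via $\mathrm{rank}_{\mathbb{F}_2}(M|_T)\le \mathrm{rank}_{\R}(M|_T)$ (an odd $2m\times 2m$ minor is a nonzero integer minor), viewing each row as the evaluation of the degree-$2$ multilinear polynomial $\sum_{\{u,v\}\in T} y_{\{u,v\}}a_u a_v$, invoking the standard $2^{-d}$ nonvanishing bound for nonzero multilinear polynomials over $\mathbb{F}_2$ (your inductive sketch is right, and the bound $1/4$ is tight for a single monomial), and union-bounding over $\binom{N}{2m}2^{2m}$ choices with $q=\Theta(m\log n)$ rows. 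Compared with the cited result, you get $O(m\log n)$ queries instead of the optimal $O(m\log n/\log m)$ of [BM11], and your query set is only shown to exist by the probabilistic method with information-theoretic (brute-force) decoding, whereas [BM11] is constructive and efficient; but the theorem as stated claims only $\tO(m)$ non-adaptive queries with no runtime guarantee, so your version fully supports the way the paper uses it (including \Cref{corollary:graph-recovery-cut}). The only caveat worth making explicit is that choosing $q$ requires knowing $m$ (or an upper bound on it), which is exactly the caveat the paper itself acknowledges in a footnote, and which holds in all of its applications.
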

The following claim shows how we can simulate additive queries using cut queries.
\begin{claim}
  \label{claim:additive-to-cut}
  Given a weighted graph $G$ on $n$ vertices, one can simulate $k$ additive queries using $n+k$ cut queries.
\end{claim}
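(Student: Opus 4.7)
The plan is to exploit a simple double‑counting identity that expresses $Q(S)=\sum_{e\subseteq S}w(e)$ in terms of one cut query on $S$ and the cut queries on singletons.

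The identity I would use is
\begin{equation*}
  \sum_{v\in S}\mathrm{cut}_G(\{v\}) \;=\; 2\,Q(S) + \mathrm{cut}_G(S).
\end{equation*}
This is immediate from counting the contribution of each edge $e=\{u,v\}\in E$ to the left‑hand side: if both endpoints lie in $S$, then $e$ appears in $\mathrm{cut}_G(\{u\})$ and in $\mathrm{cut}_G(\{v\})$, contributing $2w(e)$; if exactly one endpoint lies in $S$, then $e$ appears in exactly one of the singleton cut queries, contributing $w(e)$, which matches its contribution of $w(e)$ to $\mathrm{cut}_G(S)$; if no endpoint lies in $S$, the contribution is $0$ on both sides. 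So the identity holds.

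Given this, the simulation is immediate. First, I would query $\mathrm{cut}_G(\{v\})$ for every $v\in V$, at the cost of $n$ cut queries, and store the values. Then for each of the $k$ additive queries $Q(S_i)$, I would issue the single cut query $\mathrm{cut}_G(S_i)$ and compute
\begin{equation*}
  Q(S_i) \;=\; \tfrac12\!\left(\sum_{v\in S_i}\mathrm{cut}_G(\{v\}) - \mathrm{cut}_G(S_i)\right),
\end{equation*}
which costs one cut query per additive query, for a total of $n+k$ cut queries. Note that the $n$ singleton queries can be issued in parallel with any later batch of queries, so this simulation preserves non‑adaptivity: if the $k$ additive queries form a non‑adaptive batch, all $n+k$ cut queries can be issued in a single round.

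There is essentially no obstacle here; the only point that needs care is verifying the counting identity, which I did above by case analysis on how many endpoints of an edge lie in $S$. The construction is therefore tight up to the $n$ additive overhead, which comes from needing the weighted degree of every vertex in order to invert the identity.
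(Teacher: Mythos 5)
Your proof is correct and takes essentially the same approach as the paper: first query all $n$ singleton cuts, then convert each additive query $Q(S)$ into a single cut query $\mathrm{cut}_G(S)$ via a degree-counting identity. In fact your identity $Q(S)=\tfrac12\bigl(\sum_{v\in S}\mathrm{cut}_G(\{v\})-\mathrm{cut}_G(S)\bigr)$ is stated more carefully than the paper's version (which drops the factor of $\tfrac12$ and unnecessarily invokes a three-query subroutine to recover $\mathrm{cut}_G(S)$, even though it is a single cut query), and your remark that the simulation preserves non-adaptivity matches the paper's use of the corollary.
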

\begin{proof}
  Begin by querying the singleton cuts, $\mintcut_G(\left\{ v \right\}) = \sum_{u\in V \mid (u,v)\in E} w(e)$ for all $v\in V$.
  Then for every $S\subseteq V$ notice that 
  \begin{equation*}
    Q(S)
    = \sum_{E \cap S\times S} w(e)
    = \sum_{v\in S} \mintcut_G(\{v\}) - \sum_{e\in S\times (V\setminus S)} w(e)
    ,
  \end{equation*}
  where the equality is by observing that the left-hand side is simply the sum of degrees of all vertices in $S$ minus $w(E(S,V\setminus S))$.
  The sum of degrees can be obtained by querying all $n$ singleton cuts once and $w(E(S,V\setminus S))$ can be obtained using $3$ cut queries by \Cref{claim:s-t-num-edges}.
  Therefore, any set of $k$ additive queries can be simulated using a fixed set of $n$ cut queries plus $k$ additional cut queries.
\end{proof}
Combining  \Cref{theorem:graph-recovery} with \Cref{claim:additive-to-cut} we immediately get the following corollary.
\graphrecoverycorollary*

Another useful primitive is the ability to estimate the degree of a vertex in a weighted graph.
The result is similar to Theorem 7 in \cite{Bshouty19} and is included for completeness.
\begin{lemma}
  \label{lemma:degree-estimation}
  Given a weighted graph $G$ on $n$ vertices, a designated vertex $v$, and a subset $S\subseteq V$ such that $v\not\in S$.
  There exists an algorithm that estimates $|E(v,S)|$ up to a $2^5$ multiplicative factor using $O(\log n)$ non-adaptive cut queries.
  The algorithm succeeds with probability $1-n^{-5}$.
\end{lemma}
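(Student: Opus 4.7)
My plan is a non-adaptive geometric-subsampling scheme, in the spirit of the edge-sampling primitive discussed earlier. For every level $i \in \{0, 1, \dots, \lceil\log n\rceil\}$, I would independently form $S_i \subseteq S$ by including each vertex of $S$ with probability $2^{-i}$. All the $S_i$'s are drawn in advance from the algorithm's random tape, so the queries can be issued in a single round. For each level I query $X_i := w(E(v, S_i))$ using $O(1)$ cut queries via \Cref{claim:s-t-num-edges}; this yields $O(\log n)$ cut queries in total.

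Let $m$ denote the target quantity (which I take as $w(E(v,S))$, coinciding with $|E(v,S)|$ in the unweighted case). By linearity, $\mathbb{E}[X_i] = m \cdot 2^{-i}$. A standard Chernoff bound then shows that whenever $\mathbb{E}[X_i] \ge C\log n$ for a sufficiently large absolute constant $C$, the random variable $X_i$ lies in $\bigl[\tfrac12 \mathbb{E}[X_i],\, 2\mathbb{E}[X_i]\bigr]$ with probability at least $1 - n^{-6}$. A union bound over the $O(\log n)$ levels ensures that this concentration holds at \emph{every} level simultaneously, except with probability at most $n^{-5}$.

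On this good event the estimator is straightforward: let $i^\ast$ be the largest level for which $X_{i^\ast}$ exceeds a fixed threshold $T = \Theta(\log n)$, and output $\hat m := 2^{i^\ast} X_{i^\ast}$. The definition of $i^\ast$ forces $m \cdot 2^{-i^\ast} = \Theta(\log n)$, and since $X_{i^\ast}$ lies within a factor of $2$ of its expectation, $\hat m$ is within a constant multiplicative factor of $m$. The $2^5$ slack in the lemma statement comfortably absorbs the Chernoff constants, the factor-$2$ spacing between levels, and the choice of threshold constant.

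The main obstacle I foresee is concentration in the weighted case, where $X_i$ is a weighted sum of independent Bernoullis rather than a pure count. Ordinary Chernoff then demands that $\mathbb{E}[X_i]$ dominate $\max_e w(e)$, not merely $\log n$, so a weight-imbalanced instance could shift the ``good'' level. I would address this by using Bernstein's inequality, which gives the same type of concentration with an additional term proportional to $\max_e w(e)$; because $\max_e w(e) \le w(E(v,S)) = m$, this shifts $i^\ast$ by only a constant number of levels and is absorbed without difficulty by the loose $2^5$-approximation slack.
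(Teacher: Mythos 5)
There is a genuine gap: you are estimating the wrong quantity. The lemma asks for a multiplicative estimate of $|E(v,S)|$, the \emph{number} of edges, in a \emph{weighted} graph, whereas your estimator $\hat m = 2^{i^\ast} X_{i^\ast}$ with $X_i = w(E(v,S_i))$ converges to $w(E(v,S))$, the total \emph{weight} — you even redefine the target as such. In the unweighted case the two coincide, but there the lemma is trivial ($|E(v,S)| = w(E(v,S))$ is obtained exactly by $O(1)$ queries via \Cref{claim:s-t-num-edges}); the entire content of the lemma is the weighted case, where your output can be off from $|E(v,S)|$ by an arbitrary factor (e.g., all weights equal to $W$ gives $\hat m \approx W\cdot|E(v,S)|$). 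Your Bernstein patch does not help, because it only addresses concentration of $X_i$ around $m\cdot 2^{-i}$ for the weight $m=w(E(v,S))$; moreover, when a single edge carries most of the weight, $X_{i^\ast}$ is essentially a scaled Bernoulli and does not concentrate at the levels your threshold selects.

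The repair is to extract count information from the subsampling in a weight-oblivious way: each edge survives to level $i$ with probability $2^{-i}$ \emph{independently of its weight}, so the statistic to use is not the magnitude of $X_i$ but the indicator $X_i > 0$ (which, since all weights are positive, detects whether $E(v,S_i)$ is empty). This is what the paper does: it takes the first level $a$ at which $w(E(v,S_a))=0$, shows $2^a$ is within a $2^5$ factor of $|E(v,S)|$ with constant probability, and boosts to $1-n^{-5}$ by taking the median over $O(\log n)$ independent repetitions. Your nested-subsets setup and query accounting are compatible with this fix, but as written the estimator itself answers a different (and trivially answerable) question.
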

Notice that setting $S=V\setminus \{v\}$ allows us to estimate the degree of $v$.
We now present the proof of the lemma.
\begin{proof}
  The algorithm is based on repeating a simple sampling procedure $O(\log n)$ times and returning the median estimate of the degree.
  Each sampling procedure works as follows.
  It begins by sampling subsets $S=S_0\supseteq S_1 \supseteq \ldots S_{2\log_{2} n}$ where each $S_i$ is obtained by keeping each vertex in $S_{i-1}$ independently with probability $1/2$.
  
  Denote, $j = \lceil \log_2 (|E(v,S_0)|) \rceil$ and let $a$ be the smallest $i$ such that $|E(v,S_{i})| =0$.
  We will show that with constant probability $a$ is close to $j$, the repeating the process $O(\log n)$ times and returning the median of the estimates will give us a good estimate of $|E(v,S)|$.
  To prove this, we will show that $\Probability{a \ge j+ 4 \rfloor} \le 1/8$ and $\Probability{a \le j- 4 \rfloor} \le 1/8$.
  This implies that $a$ is within $2^5$ of $j$ with probability at least $3/4$, and hence the median of $O(\log n)$ samples will be within $2^5$ of $j$ with probability at least $1-1/n^{\Omega(1)}$.

  We begin with showing that $\Probability{a \ge j+ 4 \rfloor} \le 1/4$.
  For every $e\in E$ let $X_{e,i}$ be the indicator random variable for the event that $e\in E(v,S_i)$ and let $X_i = \sum_{e\in E(v,S)} X_{e,i}$.
  Observe that, $\Exp{X_i}{} = n\cdot 2^{-i}$.
  Then, by the one-sided Chernoff bound (\Cref{theorem:chernoff-one-sided}) with $\mu=2^{-4}$ we have that,
  \begin{equation*}
    \Probability{|X_{j+4} - \Exp{X_{j+4}}{}| \ge 1}
    \le \exp\left( -\frac{2^8 \cdot 1  }{2+2^4} \right)
    \le\exp\left( -2^{3}\right)
    \le 1/8
    .
  \end{equation*}

  The proof for the other direction is similar, and is omitted for brevity.
  Hence, performing $O(\log n)$ instances of the algorithm and returning the median yields an estimate of $|E(v,S)|$ within a $2^5$ factor with probability at least $1-n^{-5}$.
\end{proof}

\subsection{Concentration Inequalities}
Throughout the proof we will use the following version of the Chernoff bound.
\begin{theorem}\label[theorem]{theorem:chernoff}
  Let $X_1,\ldots,X_m\in [0,a]$ be independent random variables.
    For any $\delta \in [0,1]$ and $\mu \geq \mathbb{E}\left[\sum_{i=1}^{m}X_i\right]$, we have
        \begin{align}
            \nonumber
            \mathbb{P}\left[
                \left|
                    \sum_{i=1}^{m}X_i - \mathbb{E}\left[\sum_{i=1}^{m}X_i\right]
                \right|
                \geq \delta \mu
            \right]
            \leq
            2\exp
            \left(
                -\frac{\delta^2\mu}{3a}
            \right)
            .
        \end{align}
\end{theorem}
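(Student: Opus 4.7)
The plan is to reduce to the standard $[0,1]$-bounded Chernoff bound via scaling, and then apply the classical moment-generating-function (MGF) argument. First I would set $Y_i \coloneqq X_i/a \in [0,1]$ and $\mu' \coloneqq \mu/a$, so that $\mu' \ge \Exp{\sum_i Y_i}{}$ by linearity of expectation. The statement in question is then equivalent to the one-variable bound
\begin{equation*}
    \Probability{\Bigl|\sum_i Y_i - \Exp{\sum_i Y_i}{}\Bigr| \ge \delta \mu'} \le 2 \exp\bigl(-\delta^2 \mu'/3\bigr),
\end{equation*}
which is the textbook two-sided Chernoff bound for sums of independent $[0,1]$-valued random variables, with $\mu'$ replacing the standard ``$\mu$''. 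Note that the factor $1/(3a)$ in the claim is exactly what one obtains after substituting $\mu' = \mu/a$, so scaling alone accounts for the $a$ in the denominator of the exponent.

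For the $[0,1]$-case, the plan is the standard MGF approach. For each $Y_i$, convexity of $x \mapsto e^{tx}$ on $[0,1]$ gives $e^{tY_i} \le 1 - Y_i + Y_i e^t$, hence $\Exp{e^{tY_i}}{} \le 1 + (e^t-1)\Exp{Y_i}{} \le \exp\bigl((e^t-1)\Exp{Y_i}{}\bigr)$. Multiplying over the independent $Y_i$'s, for any $t>0$,
\begin{equation*}
    \Exp{e^{t\sum_i Y_i}}{} \le \exp\bigl((e^t-1)\,\mu_0\bigr), \qquad \mu_0 \coloneqq \Exp{\textstyle\sum_i Y_i}{} \le \mu'.
\end{equation*}
Applying Markov's inequality to $e^{t\sum Y_i}$ and optimizing $t = \ln(1+\delta)$ yields the upper-tail bound $\Pr[\sum Y_i \ge (1+\delta)\mu_0] \le \bigl(e^\delta/(1+\delta)^{1+\delta}\bigr)^{\mu_0}$, and the routine inequality $\delta - (1+\delta)\ln(1+\delta) \le -\delta^2/3$ for $\delta \in [0,1]$ converts this into $\exp(-\delta^2 \mu_0/3) \le \exp(-\delta^2 \mu'/3)$ (the monotonicity in $\mu_0 \le \mu'$ is safe because $\mu_0 \le \mu'$ and the exponent is negative, and one can check the upper tail event translates correctly since $\mu' \ge \mu_0$). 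The lower tail is symmetric: optimize $t = -\ln(1-\delta)$ after Markov on $e^{-t \sum Y_i}$, and use $-\delta - (1-\delta)\ln(1-\delta) \le -\delta^2/2 \le -\delta^2/3$.

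Finally I would combine the two tails by a union bound to get the factor of $2$, and then rescale: replacing $Y_i$ by $X_i$ and $\mu'$ by $\mu/a$ in the exponent produces exactly $\exp(-\delta^2 \mu/(3a))$, as stated. There is no real obstacle here: the only subtle point is that the hypothesis allows $\mu$ to be any upper bound on $\Exp{\sum X_i}{}$ rather than the exact mean, which is handled by the monotonicity observation that the Chernoff upper tail with threshold $(1+\delta)\mu$ becomes easier (not harder) when the true mean $\mu_0$ is smaller, and analogously for the lower tail with threshold $(1-\delta)\mu$ one rewrites the event $\sum X_i \le \mu_0 - \delta\mu \le \mu_0(1-\delta)$ and absorbs constants; since the final inequality is stated only up to the factor $1/3$ in the exponent, this slack is more than enough.
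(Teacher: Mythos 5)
The paper states this theorem without proof (it is a standard fact, and also follows immediately from the one-sided version, \Cref{theorem:chernoff-one-sided}, stated right after it: for $\delta\in[0,1]$ one has $2+\delta\le 3$, and a union bound over the two tails gives the factor $2$). Your overall strategy --- rescale to $[0,1]$, run the MGF argument, union-bound the tails --- is the right one, but there is a genuine error in how you handle the hypothesis $\mu\ge\mathbb{E}[\sum X_i]$. Writing $\mu_0=\mathbb{E}[\sum Y_i]\le\mu'$, the event to bound is $\sum Y_i-\mu_0\ge\delta\mu'$ (an \emph{absolute} deviation of $\delta\mu'$ from the true mean $\mu_0$). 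You weaken this to $\sum Y_i\ge(1+\delta)\mu_0$, obtain $\exp(-\delta^2\mu_0/3)$, and then assert $\exp(-\delta^2\mu_0/3)\le\exp(-\delta^2\mu'/3)$. That inequality is backwards: since $\mu_0\le\mu'$, the exponent $-\delta^2\mu_0/3$ is \emph{larger} (less negative) than $-\delta^2\mu'/3$, so your derived bound is weaker than the claimed one, not stronger. The same reversal occurs in your lower tail. The closing remark that "the upper tail becomes easier when the true mean is smaller" does not rescue this, because the event in the theorem is centered at $\mu_0$, so its threshold $\mu_0+\delta\mu'$ moves with $\mu_0$; you cannot compare it to the fixed threshold $(1+\delta)\mu'$ in the favorable direction.

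The fix is to keep the absolute deviation $\delta\mu'$ throughout the MGF computation and substitute $\mu_0\le\mu'$ \emph{inside} the exponent, where the relevant coefficient is nonnegative. Concretely, for $t>0$,
\begin{equation*}
  \Pr\Bigl[\textstyle\sum_i Y_i\ge\mu_0+\delta\mu'\Bigr]
  \le e^{-t(\mu_0+\delta\mu')}\,\mathbb{E}\bigl[e^{t\sum_i Y_i}\bigr]
  \le \exp\bigl((e^t-1-t)\mu_0-t\delta\mu'\bigr)
  \le \exp\bigl(\mu'(e^t-1-t(1+\delta))\bigr),
\end{equation*}
where the last step uses $e^t-1-t\ge0$ and $\mu_0\le\mu'$. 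Now optimizing $t=\ln(1+\delta)$ and using $\delta-(1+\delta)\ln(1+\delta)\le-\delta^2/3$ for $\delta\in[0,1]$ gives $\exp(-\delta^2\mu'/3)$ with $\mu'$ (not $\mu_0$) in the exponent, as required. The lower tail is handled identically using $e^{-t}-1+t\ge0$, yielding $\exp(-\delta^2\mu'/2)$. With this correction the proof is complete; everything else in your write-up (the rescaling by $a$, the bound $\mathbb{E}[e^{tY_i}]\le\exp((e^t-1)\mathbb{E}[Y_i])$, and the final union bound) is correct.
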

We will also use the following one-sided version of the Chernoff bound.
\begin{theorem}\label[theorem]{theorem:chernoff-one-sided}
  Let $X_1,\ldots,X_m\in [0,a]$ be independent random variables.
  For $\mu \geq \mathbb{E}\left[\sum_{i=1}^{m}X_i\right]$, we have
      \begin{align}
        \forall \delta > 0,
        \qquad
          \nonumber
          \mathbb{P}\left[
              \sum_{i=1}^{m}X_i - \mathbb{E}\left[\sum_{i=1}^{m}X_i\right]
              \ge \delta \mu
          \right]
          \leq
          \exp
          \left(
              -\frac{\delta^2\mu}{(2+\delta) a}
          \right)
          .
      \end{align}
    \begin{align}
      \forall \delta\in [0,1],
      \qquad
      \nonumber
      \mathbb{P}
      \left[
          \sum_{i=1}^{m}X_i
          -         \mathbb{E}\left[\sum_{i=1}^{m}X_i
          \right]
          \leq -\delta \mu
      \right]
      \leq
      \exp
      \left(
          -\frac{\delta^2\mu}{2 a}
      \right)
      .
  \end{align}

\end{theorem}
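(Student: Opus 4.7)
The statement is the classical one-sided Bernstein/Chernoff tail bound for bounded independent variables, so my plan is a standard exponential moment (MGF) argument, carried out separately for the two tails. I would first rescale by setting $Y_i = X_i/a \in [0,1]$, so that $\sum Y_i$ has expectation $\mu/a$ and the bounds we want become the standard Bernoulli-style Chernoff inequalities $\exp(-\delta^2(\mu/a)/(2+\delta))$ and $\exp(-\delta^2(\mu/a)/2)$.

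For the upper tail, I would apply Markov's inequality to $\exp(t\sum_i Y_i)$ for a parameter $t>0$ to be chosen, yielding
\[
  \Pr\!\left[\sum_i X_i - \mathbb{E}\sum_i X_i \ge \delta \mu\right]
  \le e^{-t(1+\delta)\mu/a}\,\prod_i \mathbb{E}[e^{tY_i}].
\]
By convexity of $x\mapsto e^{tx}$ on $[0,1]$ we have $e^{tY_i} \le 1 + Y_i(e^{t}-1)$, so $\mathbb{E}[e^{tY_i}] \le 1 + \mathbb{E}[Y_i](e^{t}-1) \le \exp(\mathbb{E}[Y_i](e^{t}-1))$. Multiplying over $i$ and bounding $\sum_i \mathbb{E}[Y_i] \le \mu/a$, the right-hand side becomes $\exp\bigl((\mu/a)(e^{t}-1-t(1+\delta))\bigr)$. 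Choosing the optimal $t=\ln(1+\delta)$ gives the familiar $\bigl(e^{\delta}/(1+\delta)^{1+\delta}\bigr)^{\mu/a}$, and the standard elementary inequality $(1+\delta)\ln(1+\delta) - \delta \ge \delta^2/(2+\delta)$ for $\delta>0$ yields the stated $\exp\!\bigl(-\delta^2(\mu/a)/(2+\delta)\bigr)$ bound.

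For the lower tail one repeats the argument with $t<0$, i.e.\ Markov on $\exp(-t\sum Y_i)$, obtaining after the same convexity-and-MGF step the bound $\bigl(e^{-\delta}/(1-\delta)^{1-\delta}\bigr)^{\mu/a}$; then the inequality $(1-\delta)\ln(1-\delta) + \delta \ge \delta^2/2$ for $\delta\in[0,1]$ yields $\exp\!\bigl(-\delta^2(\mu/a)/2\bigr)$. One mild subtlety is that $\mu$ is only assumed to be an \emph{upper bound} on the expectation rather than equal to it; for the upper tail this is harmless because increasing $\mu$ only weakens the exponent, and for the lower tail one argues with a coupling that adds an independent $[0,a]$ slack variable so that the true mean equals $\mu$, then observes that the slack can only push the sum up and hence the lower-tail event is monotone.

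The main obstacle is not really technical but notational: keeping track of the factor of $a$ from the rescaling $Y_i=X_i/a$ through both the exponent and the optimization of $t$, and verifying the two elementary numerical inequalities $(1+\delta)\ln(1+\delta)-\delta\ge\delta^2/(2+\delta)$ and $(1-\delta)\ln(1-\delta)+\delta\ge\delta^2/2$. Since the statement is a well-known textbook bound, in the paper I would expect the authors to either cite a standard reference or give a one-line MGF derivation rather than carry out these calculations in detail.
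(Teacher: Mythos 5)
The paper does not prove this theorem at all: it is stated in the preliminaries as a standard one-sided Chernoff/Bernstein bound and used as a black box, so there is no in-paper argument to compare against. Your MGF derivation is the standard and correct route, and both elementary inequalities you invoke, $(1+\delta)\ln(1+\delta)-\delta\ge\delta^2/(2+\delta)$ and $(1-\delta)\ln(1-\delta)+\delta\ge\delta^2/2$, are true on the stated ranges, so the plan goes through.

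One spot to tighten: your first display applies Markov with threshold $(1+\delta)\mu/a$, but the event is $\sum_i Y_i \ge \mathbb{E}[\sum_i Y_i]+\delta\mu/a$, whose threshold is \emph{at most} $(1+\delta)\mu/a$ since $\mathbb{E}[\sum_i Y_i]\le\mu/a$; so that inequality is not a direct Markov application. The fix is to run Markov with the actual threshold, obtaining the exponent $\mathbb{E}[\sum_i Y_i](e^t-1-t)-t\delta\mu/a$, and only then use $e^t-1-t\ge 0$ to replace $\mathbb{E}[\sum_i Y_i]$ by $\mu/a$; this lands on exactly the expression you optimize. The same observation (with $e^{-t}-1+t\ge0$) handles the lower tail directly, which makes your coupling-with-a-slack-variable step unnecessary — and note that if you do keep it, the slack must be deterministic (or the argument that the lower-tail event is monotone fails for a random slack).
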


We also use the following version of the Chernoff bound for negatively correlated random variables.
\begin{definition}
    Bernoulli random variables $B_1,\ldots,B_k \in \{0,1\}$ are \emph{negatively correlated} if 
  \begin{equation*}
    \forall I \subseteq [k],
    \forall j\in I,
    \forall b\in \set{0,1},
    \qquad
    \Probability{B_j = b \mid \bigwedge_{i\in I\setminus\set{j}} B_i = b} \le \Probability{B_j = b}
    ,
  \end{equation*}
  that is, conditioning on some variables taking the value $b$ makes it less likely that $B_j$ is $b$.
\end{definition}
The following lemma is a version of the Chernoff bound for negatively correlated Bernoulli random variables.
\begin{lemma}[\cite{DR98}]
  \label{lemma:negatively-correlated-chernoff}
  Let $a_1,\ldots,a_k\in[0,1]$ be constants and let $B_1,\ldots,B_k$ be negatively correlated Bernoulli random variables.
  Let $B\coloneqq \sum_{i=1}^k a_i B_i$ and $\mu\coloneqq\Exp{B}{}=\sum_{i=1}^{k} a_ip_i$.
  Then,
  \begin{equation*}
    \forall\epsilon\in(0,1),
    \qquad
    \Probability{B \not \in (1\pm \epsilon)\mu }
    \le 2\exp(-\epsilon^2\mu/3)
    .
  \end{equation*}
\end{lemma}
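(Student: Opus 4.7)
The plan is to run the standard Chernoff exponential moment argument, but replace the independence step that factors the moment generating function by an application of the negative correlation hypothesis. Concretely, fix $\lambda > 0$ and apply Markov's inequality to the non-negative random variable $e^{\lambda B} = \prod_i e^{\lambda a_i B_i}$ to obtain
\begin{equation*}
\Pr[B \ge (1+\epsilon)\mu] \le e^{-\lambda(1+\epsilon)\mu}\cdot \mathbb{E}\Bigl[\prod_{i=1}^k e^{\lambda a_i B_i}\Bigr].
\end{equation*}

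The key step is to show the multiplicative inequality $\mathbb{E}[\prod_i f_i(B_i)] \le \prod_i \mathbb{E}[f_i(B_i)]$ for the functions $f_i(x) \coloneqq e^{\lambda a_i x}$, which are non-decreasing and non-negative on $\{0,1\}$. Since any function on $\{0,1\}$ is affine, I would write $f_i(B_i) = f_i(0) + (f_i(1)-f_i(0))B_i$ with a non-negative coefficient on $B_i$; expanding the product over $i$ then reduces the claim to showing $\mathbb{E}[\prod_{i\in J} B_i] \le \prod_{i\in J} p_i$ for every $J\subseteq [k]$. This follows by iterated conditioning: $\mathbb{E}[\prod_{i\in J} B_i] = \Pr[\bigwedge_{i\in J} B_i = 1]$, and each conditional factor $\Pr[B_j = 1 \mid \bigwedge_{i\in J\setminus\{j\}} B_i = 1]$ is at most $p_j$ by hypothesis.

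Once the factored bound is in hand, the rest is routine. Using $a_i \in [0,1]$ and the convexity estimate $e^{\lambda a_i}-1 \le a_i(e^\lambda-1)$, I would bound $\mathbb{E}[e^{\lambda a_i B_i}] = 1 + p_i(e^{\lambda a_i}-1) \le \exp(p_i a_i (e^\lambda-1))$. Multiplying over $i$ yields $\mathbb{E}[e^{\lambda B}] \le \exp(\mu(e^\lambda-1))$, and the choice $\lambda = \ln(1+\epsilon)$ gives the classical estimate $\exp(-\mu((1+\epsilon)\ln(1+\epsilon)-\epsilon)) \le \exp(-\epsilon^2\mu/3)$ for $\epsilon \in (0,1)$. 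For the lower tail I would repeat with $\lambda < 0$: now $f_i(x) = e^{\lambda a_i x}$ is non-increasing, so I rewrite $f_i(B_i) = f_i(1) + (f_i(0)-f_i(1))(1-B_i)$ with a non-negative coefficient on $1-B_i$, and exploit that the hypothesis is stated symmetrically for both $b=0$ and $b=1$ to conclude $\Pr[\bigwedge_{i\in J}(1-B_i)=1] \le \prod_{i\in J}(1-p_i)$. Adding the two tail bounds (and slightly loosening the constant on the lower tail) gives the claimed $2\exp(-\epsilon^2\mu/3)$.

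The main obstacle, and the only place the hypothesis is genuinely used, is the factorization step. For general random variables the inequality $\mathbb{E}[\prod_i f_i(B_i)] \le \prod_i \mathbb{E}[f_i(B_i)]$ for monotone $f_i$ typically requires an FKG-type assumption; here the $\{0,1\}$-valued setting lets me reduce to the elementary claim $\mathbb{E}[\prod_{i\in J} B_i] \le \prod_{i\in J} p_i$, but I must verify that the affine expansion of the product produces only monomials in the $B_i$ (or in the $1-B_i$ for the lower tail) with non-negative coefficients, which is precisely where the sign of $\lambda$ combined with $a_i \ge 0$ and monotonicity of $f_i$ enter.
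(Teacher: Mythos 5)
Your proof is correct. The paper does not prove this lemma itself --- it imports it from the cited reference [DR98] --- and your argument is precisely the standard one for Chernoff bounds under negative cylinder dependence: factor the moment generating function via the affine expansion of $\prod_i f_i(B_i)$ over $\{0,1\}$, reduce to $\Exp{\prod_{i\in J}B_i}{}\le\prod_{i\in J}p_i$ (and its complement for the lower tail) by iterated conditioning on the negative-correlation hypothesis, and finish with the usual calculus. You correctly identify the only delicate point (non-negativity of the monomial coefficients, which is where the sign of $\lambda$ and the monotonicity of $f_i$ enter); the sole remaining triviality is that a conditioning event of probability zero makes the left-hand side of the product bound zero, so the iterated-conditioning step never breaks down.
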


\section{Edge Sampling Primitives}
\label{sec:low-adaptivity}
In this section we provide two algorithms for sampling edges from a graph.
\subsection{Uniform Edge Sampling}
This section is devoted to proving \Cref{lemma:edge-sampling-1-round} by designing a procedure for uniform edge sampling from a possibly weighted graph.
As mentioned earlier, a similar algorithm was proposed in \cite{ACK21}, based on group testing, while our approach is based on graph recovery and slightly improves the query complexity of the algorithm by a factor of $O(1/\log \log n)$.
We will need the following bound for sparse recovery \cite{FSSZ23}.\footnote{A similar bound can be obtained by treating the edge set $E(s,T)$ as a bipartite graph on $|T|+1$ vertices and using the graph recovery result from \cite{BM11}.}
\begin{theorem}[Theorem 1.2 from \cite{FSSZ23}]
  \label{theorem:sparse-recovery}
  Let $M\in \set{-1,1}^{q\times a}$ be a random matrix where each entry is chosen independently and uniformly from $\set{-1,1}$, and let $s=(1-\delta)q$ and $\epsilon=\omega(\log \log a / \log a)$.
  If $a \le q^{1+1/(2-2\delta)-\epsilon}$, then with high probability every sparse $v\in \R^a$, such that $\|v\|_0\le s/2$ can be uniquely recovered from the measurement vector $Mv$.
\end{theorem}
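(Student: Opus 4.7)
The plan is to recast unique sparse recovery as a rank statement on submatrices of $M$ and then bound the probability of rank deficiency via anti-concentration, followed by a union bound over all column subsets.

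First, I would reduce unique recovery to the assertion that every $s$ columns of $M$ are linearly independent. Indeed, two distinct $(s/2)$-sparse vectors $u, u'$ with $Mu = Mu'$ produce an $s$-sparse nonzero $w = u - u'$ in the kernel of $M$; conversely, every nonzero $s$-sparse kernel vector yields a collision. So unique recovery is equivalent to demanding that every $q \times s$ submatrix $M_S$ (with $S \subseteq [a]$, $|S| = s$) has full column rank.

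Second, I would fix a subset $S$ and upper bound $\Pr[\mathrm{rank}(M_S) < s]$ by revealing the columns of $M_S$ one at a time. Conditional on the first $i - 1$ columns being linearly independent, their span $V_{i-1} \subseteq \R^q$ has dimension $i - 1$, and the $i$-th Rademacher column lies in $V_{i-1}$ only if a set of $q - (i-1)$ independent Rademacher linear combinations all hit prescribed values. The appropriate tool is a Halász / Littlewood-Offord small-ball estimate, which gives anti-concentration of the form $(C / \sqrt{q})^{q - i + 1}$ for each such coordinate constraint. Chaining these bounds over $i = 1, \ldots, s$ yields a singularity bound of the form $q^{-\alpha(q - s + 1)}$ with $\alpha$ close to $1/2$.

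Third, I would apply a union bound over the $\binom{a}{s} \le (ea/s)^s$ choices of $S$. Substituting $s = (1-\delta)q$, taking logarithms of the product $\binom{a}{s} \cdot q^{-\alpha(q - s + 1)}$, and solving for the threshold on $\log_q a$ yields exactly the exponent $1 + \tfrac{1}{2-2\delta}$; the slack $\epsilon = \omega(\log \log a / \log a)$ absorbs the sub-polynomial corrections arising from Stirling's formula on $\binom{a}{s}$ together with lower-order terms in the anti-concentration estimate, ensuring the union bound is $o(1)$.

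The main obstacle is obtaining an anti-concentration estimate sharp enough to yield the exponent $1 + \tfrac{1}{2 - 2\delta}$. The naive Odlyzko bound $|\{-1, 1\}^q \cap V_{i-1}| \le 2^{i-1}$ yields only an exponential $2^{-(q - s + 1)}$ singularity estimate, which supports merely $a = O(q)$. Reaching a polynomial regime requires either a Halász inequality exploiting the $L^2$ structure of the linear forms defining $V_{i-1}$, or an inverse Littlewood-Offord dichotomy distinguishing additively structured subspaces (which are rare in the probability space) from unstructured ones (where Rademacher sums concentrate at the $1/\sqrt{q}$ scale). Executing this dichotomy in the sequential conditional setting, where $V_{i-1}$ itself depends on previously revealed random columns, is the crux of the proof.
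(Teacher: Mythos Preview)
This statement is not proved in the paper: it is quoted verbatim as Theorem~1.2 of \cite{FSSZ23} and used as a black box to derive Corollary~\ref{corollary:sparse-recovery}. There is therefore no proof in the paper to compare your proposal against.

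That said, your outline is the standard route to results of this type. The reduction to ``every $s$ columns of $M$ are linearly independent'' is exactly the spark condition, and the union-bound-plus-anti-concentration scheme you describe is the template underlying such bounds. You have also correctly identified the crux: the naive Odlyzko/hyperplane bound gives only $2^{-(q-s)}$ per subset, which cannot beat $\binom{a}{s}$ once $a$ is polynomial in $q$, so one needs a Hal\'asz-type estimate yielding roughly $q^{-(q-s)/2}$ to reach the exponent $1+\tfrac{1}{2-2\delta}$. Your sketch stops short of specifying which inverse Littlewood--Offord or Hal\'asz inequality delivers this and how to handle the conditioning on previously revealed columns, but since the present paper treats the theorem as an imported result, filling in those details is outside its scope.
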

\begin{corollary}
  \label{corollary:sparse-recovery} 
  There exists a matrix $M\in \{\pm 1\}^{q\times a}$ with $q=O(\log^2 a/\log\log a)$, such that every $(200 \log a)$-sparse vector $v\in\R^a$ can be recovered from the measurement vector $Mv$.
\end{corollary}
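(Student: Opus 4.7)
\textbf{Proof plan for Corollary \ref{corollary:sparse-recovery}.}
The plan is to obtain the existence of $M$ by a probabilistic-method application of Theorem~\ref{theorem:sparse-recovery} after fixing its free parameters $(q, \delta, \epsilon)$ to meet the sparsity target $\|v\|_0 \le 200 \log a$.

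First I would set the sparsity by choosing $\delta$ in terms of $q$. The theorem recovers all $v$ with $\|v\|_0 \le s/2 = (1-\delta)q/2$, so to accommodate sparsity $200 \log a$ I pick
\begin{equation*}
    1-\delta \;=\; \frac{400 \log a}{q},
    \qquad \text{hence}\qquad
    \frac{1}{2-2\delta} \;=\; \frac{q}{800 \log a}.
\end{equation*}
Substituting this into the exponent condition $a \le q^{1 + 1/(2-2\delta) - \epsilon}$ and taking logarithms turns the theorem's hypothesis into
\begin{equation*}
    \log a \;\le\; \Bigl(1 + \tfrac{q}{800 \log a} - \epsilon\Bigr)\log q .
\end{equation*}

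Next I would pick $q = C \log^2 a / \log \log a$ for a sufficiently large absolute constant $C$ (any $C \ge 800$ gives ample slack). With this choice, $\log q = 2\log\log a + O(1)$ and $q/(800 \log a) = C \log a /(800 \log\log a)$, so the right-hand side of the displayed inequality becomes
\begin{equation*}
    \bigl(1 + \tfrac{C \log a}{800 \log\log a} - \epsilon\bigr)\bigl(2\log\log a + O(1)\bigr)
    \;=\;
    \tfrac{C}{400}\log a \;+\; 2\log\log a \;-\; 2\epsilon \log\log a \;+\; O(1),
\end{equation*}
which exceeds $\log a$ whenever $C \ge 400$ and $\epsilon = o(1)$. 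The theorem additionally requires $\epsilon = \omega(\log\log a/\log a)$, so I would set, for instance, $\epsilon = (\log\log a)^2 / \log a$, which clearly satisfies both $\epsilon = o(1)$ and $\epsilon \gg \log\log a/\log a$.

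With these parameters Theorem~\ref{theorem:sparse-recovery} asserts that a matrix $M$ with i.i.d.\ uniform $\pm 1$ entries of dimension $q \times a$ has the unique-recovery property for all vectors of sparsity at most $s/2 = 200\log a$ with high probability; in particular, the probability is positive, so a fixed matrix $M$ with this property exists. Since $q = O(\log^2 a/\log\log a)$, this is exactly the conclusion of the corollary. I do not foresee a genuine obstacle here: the only care needed is in keeping the asymptotic bookkeeping of $1/(2-2\delta)$, $\log q$, and $\epsilon$ consistent so that the exponent in Theorem~\ref{theorem:sparse-recovery} truly exceeds $\log_q a$ with room to spare.
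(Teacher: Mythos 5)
Your proposal is correct and takes essentially the same route as the paper: fix $\delta$, $\epsilon$, and $q=O(\log^2 a/\log\log a)$ so that the hypothesis $a\le q^{1+1/(2-2\delta)-\epsilon}$ of \Cref{theorem:sparse-recovery} holds, then conclude existence of $M$ by the probabilistic method. In fact you are a bit more careful than the paper on the constants: the paper sets $(1-\delta)q=200\log a$, which via the $\|v\|_0\le s/2$ condition literally yields recovery only of $(100\log a)$-sparse vectors, whereas your choice $(1-\delta)q=400\log a$ matches the stated sparsity exactly (the discrepancy is an immaterial constant-factor slip, and your remaining bookkeeping, including the choices $\epsilon=(\log\log a)^2/\log a$ versus the paper's $\epsilon=1/2$, is sound).
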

\begin{proof}
  Given $a$, let $s=200\log a$, and suppose $q,\delta$ satisfy $200\log a = (1-\delta)q$.
  This implies that 
  \begin{align*}
    q^{1+1/(2-2\delta)-\epsilon} 
    = 
    (200\ln a/(1-\delta))^{1+1/(2-2\delta)-\epsilon}
    \ge 
    \exp\Big(
      (1+1/(2-2\delta)-\epsilon)\cdot\ln\ln a 
      \Big)
    ,
  \end{align*}
  where the inequality is since $1-\delta\le 1$.
  To make sure the condition of \Cref{theorem:sparse-recovery} is satisfied it suffices to have that,
  \begin{align*}
    \ln a 
    \le 
    (1+1/(2-2\delta)-\epsilon)\ln\ln a 
    .
  \end{align*}
  Setting $\epsilon=1/2$ and solving for $\delta$ we find that requiring $\delta\ge 1-\log a/\log\log a$ satisfies the criteria.
  Therefore, $q=200\log a/(1-\delta)=O(\log^2 a/\log \log a)$ satisfies the conditions of \Cref{theorem:sparse-recovery}.
  Hence, with high probability a random matrix $M$ allows recovery of $v$ from the measurement vector $Mv$ and in particular such a matrix exists.
\end{proof}
\begin{proof}[Proof of \Cref{lemma:edge-sampling-1-round}]
  The algorithm works by running two processes in parallel.
  Both processes are based on successive refinements $T = T_0 \supseteq T_1 \ldots \supseteq T_{\log n}$, where each set $T_i$ is obtained by keeping each vertex in $T_{i-1}$ independently with probability $1/2$.
  The first process aims to find a level $j^*$ such that $1\le |E(s,T_{j^*})|\le 125\log n$.
  The second process attempts to recover all the edges of each level $E(s,T_i)$ using \Cref{corollary:sparse-recovery}, under the assumption that $1\le |E(s,T_i)|\le 125 \log n$.
  Combining the two, if the first process finds a level $j^*$ as desired, then the second process successfully recovers all the edges in $E(s,T_{j^*})$;
  the algorithm then selects uniformly at random one of the recovered edges and returns it.
  Notice that the query complexity of the algorithm is determined by performing $\log n$ instances of sparse recovery, one on each level $T_i$.
  If we were allowed an additional query round, the algorithm could perform the sparse recovery procedure only on the appropriate level $j^*$ which would reduce the query complexity by an $O(\log n)$ factor.

  We now detail the first process, i.e. the algorithm for finding a level $j^*$ for which $1\le|E(s,T_i)|\le125\log n$.
  The algorithm can obtain each $w(E(s,T_i))$ using $O(1)$ queries by \cref{claim:s-t-num-edges} and set $j^*$ to be the largest level such that $w(E(s,T_{j^*}))> 0$.
  If no such index exists then $E(s,T)$ is empty, and the algorithm can report this fact.
  We finish the analysis of the first process by showing that with high probability $1\le|E(s,T_{j^*})|\le 125\log n$.
  Note that it suffices to show  that there exists some level $j$ such that $25\log n \le |E(s,T_j)| \le 125\log n$ with high probability, which implies that $1\le|E(s,T_{j^*})| \le 125 \log n$ since $|E(s,T_j)|$ is monotonically non-increasing in $j$, and $j^*$ is the largest index such that $|E(s,T_{j^*})|>0$.

  We now prove that such a level $j$ exists with high probability.
  First, note that if $|E(s,T)|\le 125 \log n $ we are done.
  Otherwise, let $j = \lfloor \log (|E(s,T)|/(25\log n))\rfloor$.
  Let $X_e$ be the indicator random variable for the event that the edge $e\in E(s,T)$ is included in $E(s,T_j)$. 
  Notice that $\Pr[X_e] = 1/2^j$.  Then,
  \begin{align*}
    \Exp{|E(s,T_j)|}{}
    &=\Exp{\sum_{e\in E(s,T)}  X_e}{}
    = |E(s,T)|2^{-j}
    \in [25\log n, 50\log n]
    .
  \end{align*}
  Then, using the Chernoff bound,
  \begin{align*}
    \Probability{
      \left|
        \sum_{e\in E(s,T)}  X_e  - \Exp{\sum_{e\in E(s,T)}  X_e}{}
      \right| 
      > 25\log n}
    \le 2\exp(-50\log n/12)
    \le n^{-4}
    .
  \end{align*}
  Hence, with probability at least $1-n^{-4}$ there exists $j$ such that $25\log n \le |E(s,T_j)| \le 125\log n$.

  We now describe the procedure to recover all the edges in $E(s,T_{i})$ for some level $i$.
  If $|T_{i}|\le 200 \log n$ we can simply query separately every edge in $E(s,T_{i})$.
  Otherwise, Let $M\in \R^{|T_{i}|\times O(\log^2 n/\log\log n)}$ be the matrix guaranteed by \Cref{corollary:sparse-recovery} and let $\omega\in \R^{|T_{i}|}$ be the vector of edge weights in $E(s,T_{i})$.
  Assuming $|E(s,T_{i})|\le 200 \log n$, the vector $\omega$ is $(200\log n)$-sparse.
  Hence, $\omega$ can be uniquely recovered from the measurement vector $M\omega$ and we only need to compute the vector $M\omega$ using cut queries.
  For this application it is critical that the measurement matrix has entries in $\set{-1,1}$.
  For row $k$ of $M$, let $M_k^+ \subseteq T_{i}$ denote the indices corresponding to $+1$ entries and $M_k^- \subseteq T_{i}$ denote the indices corresponding to $-1$ entries.
  Hence, the inner product between the $k$-th row and $\omega$ is given by $(M\omega)_k = w(E(s,M_k^+))-w(E(s,M_k^-))$; which can be evaluated by finding each of $w(E(s,M_k^+)), w(E(s,M_k^-))$ using $O(1)$ cut queries by \Cref{claim:s-t-num-edges}.
  Therefore, the algorithm can calculate the measurement vector $M\omega$ using $O(\log^2 n/\log\log n)$ cut queries, find all the edges in $E(s,T_{i})$, and sample one uniformly at random.

  To conclude the proof we analyze the query complexity of the algorithm.
  As mentioned above, the first process can be performed using $O(\log n)$ queries.
  For the second process, note that the algorithm performs $O(\log n)$ instances of sparse recovery, one on each level $T_i$.
  Since each operation uses $O(\log^2 n/\log\log n)$ queries, the total number of cut queries required is $O(\log^3 n/\log\log n)$.
\end{proof}

\subsection{Weight-Proportional Edge Sampling}
In this section we prove \Cref{lemma:weighted-edge-sampling} and show how to sample an edge according to its weight using $O(\log^4 n )$ cut queries in $2$ rounds.
The proof uses the following lemma for estimating the weights of heavy edges.
\begin{restatable}{lemma}{hhhitterslemma}
  \label{lemma:heavy-hitters-non-adaptive}
  Given a graph $G$ on $n$ vertices with integer edge weights bounded by $W$, parameter $\alpha>0$, a source vertex $s\in V$, and a target vertex set $T\subseteq V$, one can find an approximate weight vector $\tilde{w}\in\R^{|T|}$ such that with probability $1-n^{-10}$,
  \begin{equation*}
    \forall e\in E(s,T),
    \qquad
    w(e) \le \tilde{w}(e) \le w(e) + w(E(s,T))/\alpha.
  \end{equation*}
  The algorithm uses $O(\alpha\log n)$ non-adaptive cut queries.
\end{restatable}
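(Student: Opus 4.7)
The plan is to build a Count-Min sketch \cite{CM05} whose cells can be computed by $O(1)$ non-adaptive cut queries. First, sample $L=\Theta(\log n)$ independent hash functions $h_1,\dots,h_L:T\to[2\alpha]$ and define the buckets $B_{i,j}=h_i^{-1}(j)$ for $i\in[L]$ and $j\in[2\alpha]$. The sketch will store the bucket weights $W_{i,j}:=w(E(s,B_{i,j}))$, and the output estimate for each $v\in T$ is $\tilde{w}(v):=\min_{i\in[L]} W_{i,h_i(v)}$. Since the hash functions are drawn once at the outset, every bucket $B_{i,j}$ is fixed ahead of time, so all queries can be issued in a single round.

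For the implementation step, each $W_{i,j}$ is obtained from $\mintcut_G(\{s\})$, $\mintcut_G(B_{i,j})$, and $\mintcut_G(\{s\}\cup B_{i,j})$ via \Cref{claim:s-t-num-edges}, costing $O(1)$ non-adaptive cut queries per bucket. The total query count is $O(L\cdot \alpha)=O(\alpha \log n)$, as required. For the one-sided error guarantee, fix $e=(s,v)\in E(s,T)$ and note that for every level $i$ the bucket $B_{i,h_i(v)}$ contains $v$ and hence the edge $e$; therefore $W_{i,h_i(v)}\ge w(e)$, and taking a minimum preserves the inequality $\tilde{w}(v)\ge w(e)$.

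For the overestimation step, fix $e=(s,v)$ and a level $i$. The overestimation $W_{i,h_i(v)}-w(e)$ equals $\sum_{e'=(s,v')\in E(s,T),\ v'\ne v,\ h_i(v')=h_i(v)} w(e')$, whose expectation (over the choice of $h_i$) is at most $w(E(s,T))/(2\alpha)$, since each $v'\ne v$ lands in the same bucket as $v$ with probability $1/(2\alpha)$. Markov's inequality then implies
\begin{equation*}
\Probability{W_{i,h_i(v)}-w(e) > w(E(s,T))/\alpha} \le \tfrac{1}{2}.
\end{equation*}
Because the $L$ hash functions are independent, the probability that \emph{every} level overshoots by more than $w(E(s,T))/\alpha$ is at most $2^{-L}$. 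Choosing $L=12\log_2 n$ and applying a union bound over the at most $n$ edges in $E(s,T)$ yields a total failure probability of at most $n\cdot n^{-12}\le n^{-10}$.

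The main point to get right is calibrating the number of buckets and hash functions so that Markov's inequality, which only gives constant confidence per level, is amplified enough to survive the union bound over all edges, while keeping the total query budget at $O(\alpha\log n)$; this is the standard Count-Min tradeoff, so it causes no real obstacle. Note that the analysis does not use either the integrality of the weights or the magnitude bound $W$; these assumptions are simply inherited from the ambient setting and play no role in this lemma.
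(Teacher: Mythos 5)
Your proposal is correct and takes essentially the same approach as the paper: the paper also instantiates a Count-Min sketch (citing its guarantee as a black box with $\epsilon=1/\alpha$, $\delta=n^{-10}$) and observes that each counter corresponds to a set $E(s,T')$ for some $T'\subseteq T$, hence is computable with $O(1)$ non-adaptive cut queries via \Cref{claim:s-t-num-edges}. You merely re-derive the standard Count-Min analysis explicitly (and handle the union bound over edges slightly more carefully), which is fine.
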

The proof of \Cref{lemma:heavy-hitters-non-adaptive} appears in \Cref{sec:heavy-hitters} and is based on the count-min data structure \cite{CM05} and standard analysis.
We are now ready to prove \Cref{lemma:weighted-edge-sampling}.
\begin{proof}[Proof of \Cref{lemma:weighted-edge-sampling}]
  We begin with an overview of the algorithm.
  In the first round, the algorithm subsamples the vertices in $T$ into sets $T=T_0\supseteq T_1\supseteq \ldots \supseteq T_{c \log n}$, and at each level $i$ it recovers all edges in $E(s,T_i)$ that have weight at least $w(E(s,T))/2^{i+5}$ using \Cref{lemma:heavy-hitters-non-adaptive}.
  In the second round, the algorithm finds the exact weights of all the recovered edges and keeps those with weight in $[w(E(s,T))/2^i,w(E(s,T))/2^{i+5}]$.
  It concludes by sampling an edge according the distribution $\{p_e\}$, where for every edge $e$ recovered in the $i$-th level $p_e \propto 2^i w(e)/w(E(s,T))$;
  with probability $1-\sum_e p_e$, the algorithm returns FAIL.
  The detailed algorithm is described in \Cref{alg:weighted-edge-sampling}.
    \begin{algorithm}[htbp]
        \caption{Weighted Edge Sampling}
        \label{alg:weighted-edge-sampling}
        \begin{algorithmic}[1]
        \State $T_0 \gets T$
        \For{$i=1$ to $\log (nW)$}
            \State $T_i \gets $ subsample $T_{i-1}$ by keeping each vertex independently with probability $1/2$
        \EndFor
        \AdapRound{1}
        \State $\tilde{w}_i \gets$ approximate weights of $E(s,T_i)$  \Comment{use \Cref{lemma:heavy-hitters-non-adaptive} with $\alpha=c\log (nW)$ for $c\ge 1$}
        \label{lst:line:weighted-sampling-approximate-weights}
        \State recover $w(E(s,T))$ \Comment{use \Cref{claim:s-t-num-edges}}
        \For{$i\in[\log nW]$}
            \State $E_i \gets \{e\in E(s,T_i) \mid \tilde{w}_i(e) \ge w(E(s,T))/2^{i+5}\}$
        \EndFor
        \If{$|\cup_i E_i| \ge 2^8\cdot\alpha$} \label{lst:line:fail-wegihted-sampling}
            \State \Return FAIL
        \EndIf
        \AdapRound{2}
        \State for every $e\in \cup_i E_i$ recover $w(e)$ using $O(1)$ queries
        \For{$i\in[\log nW]$}
            \State $F_i \gets \{  e\in E_i \mid w(E(s,T))/2^{i+5} \le w(e) < w(E(s,T))/2^{i} \}$
            \State $F_i \gets F_i \setminus \cup_{j<i} F_j$
            \For{$e\in F_i$}
                \State $p_e \gets 2^{i-5} w(e) / w(E(s,T))$
            \EndFor
        \EndFor
        \If{$\sum_{e\in \cup_i F_i} p_e >1 $} \label{lst:line:fail-probability-wegihted-sampling}
          \State \Return FAIL
        \EndIf
        \State \Return one edge picked from $\cup_i F_i$  according to probabilities $\{p_e\}$ and FAIL with the remaining probability $1-\sum_{e\in \cup_i F_i} p_e$ \label{lst:line:weighted-sampling-return-edge}
        \end{algorithmic}
    \end{algorithm}

    We assume throughout the proof that the weight approximation algorithm succeeds, and bound the probability of failure at the end.
    We also assume that the algorithm does not return FAIL in line~\ref{lst:line:fail-wegihted-sampling} or line~\ref{lst:line:fail-probability-wegihted-sampling}.
    The probability of these failures will be bounded at the end of the proof.

    We begin by showing that if the algorithm returns an edge $e$, then $e$ is returned with probability proportional to its weight.
    Fix some edge $e\in E(s,T)$ and let $j=\lfloor \log_2(w(E(s,T))/w(e)) \rfloor$.
    The algorithm returns the edge $e$ only if it is included in $F_j$ and subsequently picked in line~\ref{lst:line:weighted-sampling-return-edge}.
    By \cref{lemma:heavy-hitters-non-adaptive}, $\tilde{w}_j(e) \ge w(e)$ and hence if $e\in E(s,T_j)$, it will be included in $F_j$.
    In conclusion,
    \begin{align*}
        \Probability{\text{return $e$}}
        &= \Probability{e\in F_j} 
        \cdot \Probability{\text{$e$ picked in line~\ref{lst:line:weighted-sampling-return-edge}}} 
        = 2^{-{j}} \cdot \frac{w(e)\cdot 2^{j-5}}{w(E(s,T))}
        = \frac{w(e)}{2^5 w(E(s,T))}
        .
    \end{align*}
    Hence, $e$ is returned with probability proportional to its weight.
    Next, we show that the algorithm succeeds, i.e., returns some edge (and not FAIL), \emph{with constant probability}.
    Then, running $O(\log n)$ instances of the procedure in parallel and returning the edge returned by the first instance that does not return FAIL amplifies the probability of success to $1-n^{-5}$.
    Using a union bound with the event that the heavy-hitters algorithm fails, we conclude that the algorithm succeeds with probability $1-n^{-4}$.

    We first show that the algorithm passes the condition in line~\ref{lst:line:fail-wegihted-sampling} with constant probability.
    This follows by bounding the size of every set $E_i$.
    Notice that $E_i$ only includes edges of weight at least $w(E(s,T))/2^{i+5}$ since $\tilde{w}_i(e) \ge w(e)$ by \Cref{lemma:heavy-hitters-non-adaptive}.
    Therefore, $\Exp{|E_i|}{}\le 2^{i+5}\cdot 2^{-i} = 2^5$ where the inequality is since there are at most $2^{i+5}$ such edges in $E(s,T)$.
    Hence, by Markov's inequality, $\Probability{\cup_i |E_i| \ge 2^8 \log nW } \le 1/8$, and hence the algorithm passes the condition in line~\ref{lst:line:fail-wegihted-sampling} with at least this probability.

    We now show that if the algorithm passes the condition in line~\ref{lst:line:fail-wegihted-sampling}, then it returns some edge in line~\ref{lst:line:weighted-sampling-return-edge} with constant probability.
    For every $e\in E$ let $P_e$ be the random variable taking the sampling probability assigned to $e$ (set it to $0$ if $e\not\in \cup_i F_i$), and denote $j=\lceil \log_2(w(E(s,T))/w(e))\rceil$.
    Notice that $P_e=0$ with probability $1-2^{-j}$ and $P_e=2^{j-5}w(e)/w(E(s,T))$ with probability $2^{-j}$.
    Therefore, $\Exp{P_e}{} = w(e)/(2^5w(E(s,T)))$ and
    \begin{equation*}
        \Exp{\sum_{e\in E}P_e}{} 
        = \sum_{e\in E} \Exp{P_e}{}
        =2^{-5}
        .
    \end{equation*}
    We now given an upper bound on the value of $P_e$.
    Note that at the level $j=0$ we include all edges in $E(s,T)$ with weight at least $w(E(s,T))/2^5$ deterministically.
    For every other level, we include only edges $e$ with weight in $[w(E(s,T))/2^{j+5},w(E(s,T))/2^{j+4})$, since all the edges with larger weights that have been sampled at level $j$ are also included in some $F_i$ for $i<j$.
    Hence, for non deterministic $e$, we have $P_e \le 2^{-9}$.
    Therefore, we can apply a Chernoff bound to lower bound the tail of the sum.
    Specifically, let $X = \sum_{e \in E} P_e$.
    By the Chernoff bound (\Cref{theorem:chernoff}), for any $\delta \in (0,1)$,
    \[
        \Probability{|X-2^{-5}|\le 2^{-5}\cdot\delta} 
        \leq 2\exp\left(-\frac{\delta^22^{-5}}{3\cdot2^{-9}}\right)
        = \leq 2\exp\left(-\frac{4}{3}\right).
    \]
    Setting $\delta = 1/2$, we have that with constant probability, $X\in[2^{-6},2^{-5}+2^{-6}]$.
    By a union bound with the event that $|\cup_i E_i| \le \alpha$, we find that the algorithm passes all conditions and returns an edge with constant probability.

    To complete the proof we need to show that the algorithm uses $O(\log^2 n \log^2 (nW))$ queries in $2$ rounds.
    Observe that computing each approximate weight vector $\tilde{w}_i$ at line~\ref{lst:line:weighted-sampling-approximate-weights} requires $O(\alpha \log n)=O(\log n \cdot \log (nW))$ cut queries in one round by \Cref{lemma:heavy-hitters-non-adaptive}.
    Since this is repeated at $\log (nW)$ levels, the total number of queries in the first round is $O(\log n \cdot \log^2 n W)$.
    In the second round, the algorithm requires $O(\alpha)=O(\log (nW))$ cut queries to recover the exact weights since otherwise the algorithm would return FAIL in line~\ref{lst:line:fail-wegihted-sampling}.
    Finally, repeating the algorithm $O(\log n)$ times in parallel allows us to return an edge with high probability.
    In conclusion, overall the algorithm uses $O(\log^2 n \cdot \log^2 (nW))$ cut queries in two rounds.
\end{proof}

\subsection{Heavy Hitters}
\label{sec:heavy-hitters}
In this section we prove \Cref{lemma:heavy-hitters-non-adaptive}.
The proof is based on the count-min sketch data structure and standard analysis.
We begin by defining the count-min sketch data structure.
\begin{definition}[Count-Min Sketch]
    A Count-Min Sketch is a streaming data structure that maintains a 2 dimensional array of counters $C[1,\ldots,d][1,\ldots,w]$, where all the entries are initially zero.
    Given parameters $(\epsilon,\delta)$ we set $w=\lceil e/\epsilon \rceil$ and $d=\lceil \log(1/\delta) \rceil$.
    We also choose $d$ pairwise independent hash functions $h_1,\ldots,h_d:[n]\to [w]$.
\end{definition}
Whenever we insert an element into the data structure, the data structure increments the counter at the corresponding position by setting $C[i,h_i(x)]\gets C[i,h_i(x)]+1$ for all $i\in[d]$.
To estimate the frequency of an element $x$ we return the minimum over the counters $\min_{i\in d} C[i,h_i(x)]$.
The following lemma states the guarantees of the count-min sketch.
\begin{lemma}[Theorem 1 in \cite{CM05}]
    \label{lemma:count-min-sketch}
    Let $x$ be an element with frequency $f_x$ and estimated frequency $\tilde{f}_x$.
    The count-min sketch gives an estimate $\tilde{f}_x$ such that $f_x \le \tilde{f}_x \le f_x + \epsilon\cdot\sum_{y\in [n]} f_y$ with probability at least $1-\delta$.
\end{lemma}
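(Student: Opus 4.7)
The plan is to follow the classical analysis of Cormode and Muthukrishnan, which combines a trivial deterministic lower bound with a Markov-plus-amplification argument for the upper bound. Throughout, let $F := \sum_{y \in [n]} f_y$ be the total frequency and, for each row $i \in [d]$, let
\begin{equation*}
  Z_i := \sum_{y \neq x,\ h_i(y) = h_i(x)} f_y
\end{equation*}
denote the ``collision noise'' at the cell that $x$ hashes to in row $i$. The stored counter satisfies $C[i, h_i(x)] = f_x + Z_i$, so the returned estimate is $\tilde{f}_x = f_x + \min_i Z_i$.

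First I would dispatch the lower bound $\tilde{f}_x \ge f_x$: every insertion of $x$ increments $C[i, h_i(x)]$ in every row, and since all frequencies are non-negative, each counter $C[i, h_i(x)] \ge f_x$ deterministically, so the minimum is as well.

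Next I would prove the upper bound in two steps. For a single row $i$, I would use pairwise independence of $h_i$: for every $y \neq x$, $\Pr[h_i(y) = h_i(x)] \le 1/w$, so by linearity of expectation,
\begin{equation*}
  \mathbb{E}[Z_i] \le \frac{1}{w} \sum_{y \neq x} f_y \le \frac{F}{w} \le \frac{\epsilon F}{e},
\end{equation*}
using $w = \lceil e/\epsilon \rceil$. Markov's inequality then gives $\Pr[Z_i > \epsilon F] < 1/e$. Since the $d$ hash functions $h_1, \ldots, h_d$ are chosen independently across rows, the events $\{Z_i > \epsilon F\}$ are mutually independent, so
\begin{equation*}
  \Pr[\min_i Z_i > \epsilon F] = \prod_{i=1}^{d} \Pr[Z_i > \epsilon F] < e^{-d} \le \delta,
\end{equation*}
by the choice $d = \lceil \log(1/\delta) \rceil$. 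Combining this with $\tilde{f}_x = f_x + \min_i Z_i$ yields $\tilde{f}_x \le f_x + \epsilon F$ with probability at least $1-\delta$, as required.

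The main obstacle is really just bookkeeping: making sure the pairwise-independence bound $\Pr[h_i(y)=h_i(x)] \le 1/w$ is stated correctly (it holds for fixed distinct $x,y$), and being careful that the independence used in the final product step is across rows, not across the within-row collisions. There is no substantial technical difficulty beyond these standard hashing-concentration manipulations.
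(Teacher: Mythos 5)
Your proof is correct and is precisely the standard argument from Cormode--Muthukrishnan, which is what the paper relies on by citing this as Theorem 1 of \cite{CM05} without reproving it. The decomposition $\tilde{f}_x = f_x + \min_i Z_i$, the pairwise-independence bound on $\mathbb{E}[Z_i]$, Markov per row, and independence across rows to get $e^{-d}\le\delta$ are exactly the intended steps.
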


We now prove \Cref{lemma:heavy-hitters-non-adaptive}.
We restate the lemma for ease of reference.
\hhhitterslemma*
\begin{proof}
    The algorithm is based on a count-min data structure with parameters $\epsilon=1/\alpha$ and $\delta=1/n^{10}$ to estimate the weights of the edges, where we treat the edge weight vector as the frequency vector we wish to estimate.
    Notice that there is an explicit mapping from the edges to the bins in the count-min sketch.
    Therefore, each counter in the count-min sketch is associated with the weight of some subset $F\subseteq E(s,T)$.

    Hence, we can recover all the weights with the requisite error and success probability using \Cref{lemma:count-min-sketch} if we recover the weight of $dw=O(\alpha\log n)$ such subsets.
    To conclude the proof, we show that each subset can be recovered using $O(1)$ queries.
    Notice that each subset $F\subseteq E(s,T)$ corresponds to a subset $T' \subseteq T$ such that $F=E(s,T')$.
    Therefore, we can recover the weight of $F$ by querying the weight of $E(s,T')$ using $O(1)$ queries by \Cref{claim:s-t-num-edges}.
\end{proof}

\section{\texorpdfstring{$\tau$-Star Contraction}{tau-Star Contraction}}
\label{sec:tau-contraction}
This section provides the guarantees of the $\tau$-star contraction, by proving \Cref{lemma:tau-star-contraction}.
The proof follows similar lines to the proof of the original star contraction \cite{AEGLMN22}, for which we now give a brief overview.
The proof follows by bounding the probability of contracting an edge of the minimum cut during the process.
Fixing a minimum cut $\C \subseteq E$ of $G$, denote by $c(v)$ the number of edges of $\C$ incident to $v$.
Observe that the probability that when a contraction of a vertex $v\in H\setminus R$ hits an edge in $\C$ is equal to $c_R(v)/d_R(v)$, where $c_R(v) = |\C \cap E(v,R)|$ and recalling $H=\set{v\in V \mid d(v)\ge \tau}$.
Therefore,
\begin{equation*}
    \Probability{\text{$\C$ is not contracted} \mid R}
    \ge 1-\prod_{v\in H\setminus R} \left(1-\frac{c_R(v)}{d_R(v)}\right)
    .
\end{equation*}
We bound this expression using the following result from \cite{AEGLMN22}.
\begin{proposition}[Proposition 4.1 in \cite{AEGLMN22}]
    \label{proposition:probability-optimization}
    Let $n$ be a positive integer, $a \in [0,1)$, and $b\ge 1$. 
    Define
\begin{align*}
    F(a,b) 
    = 
    \min_{x\in \R^n}
    \quad
     &\prod_{i\in [n]} (1-x_i)
    \\
     \text{subject to}
    \quad
    &\sum_{i\in [n]} x_i = b,
    \\
    & \max_{i\in [n]} 0\le x_i \le a.
\end{align*}
Then $F(a,b)\ge(1-a)^{\lceil b/a \rceil}$.
\end{proposition}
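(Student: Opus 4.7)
The plan is to reduce the constrained product minimization to an extreme-point argument via concavity. Taking logarithms, minimizing $\prod_{i\in[n]}(1-x_i)$ over the feasible polytope $P=\{x\in[0,a]^n : \sum_{i\in[n]} x_i = b\}$ is equivalent to minimizing $g(x):=\sum_{i\in[n]} \log(1-x_i)$ over $P$. Since the map $t\mapsto \log(1-t)$ is concave on $[0,1)$, the function $g$ is concave on $P$. Because $P$ is a compact convex polytope, any concave function on $P$ attains its minimum at a vertex (an extreme point), via the standard argument that any $x\in P$ is a convex combination $\sum \lambda_i v_i$ of vertices and so $g(x)\ge \sum \lambda_i g(v_i)\ge \min_i g(v_i)$. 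Hence it suffices to identify the vertices of $P$ and evaluate the product there.

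The vertices admit a clean description. The single affine constraint $\sum_i x_i = b$ cuts the dimension down by one, so a vertex must saturate $n-1$ of the box constraints. Equivalently, at most one coordinate of a vertex lies strictly inside $(0,a)$, while the remaining coordinates are pinned to $0$ or $a$. Writing $k$ for the number of coordinates equal to $a$ and $y\in[0,a)$ for the value of the (at most one) fractional coordinate, feasibility $ka+y=b$ with $y\in[0,a)$ forces $k=\lfloor b/a\rfloor$ and $y=b-a\lfloor b/a\rfloor$. Thus the minimum value of the original product equals $(1-a)^{\lfloor b/a\rfloor}(1-y)$ with $y$ as above.

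Finally, I would compare this to $(1-a)^{\lceil b/a\rceil}$ in two cases. When $b/a\in\mathbb{Z}$, one has $y=0$ and $\lceil b/a\rceil=\lfloor b/a\rfloor$, and both quantities equal $(1-a)^{b/a}$. Otherwise $\lceil b/a\rceil=\lfloor b/a\rfloor+1$ and $y<a$, so $1-y\ge 1-a$, yielding
\[
F(a,b) \;=\; (1-a)^{\lfloor b/a\rfloor}(1-y) \;\ge\; (1-a)^{\lfloor b/a\rfloor}(1-a) \;=\; (1-a)^{\lceil b/a\rceil}.
\]
The main conceptual step is the concavity-implies-vertex-optimum principle; once this is in place, identifying the vertices and doing the floor/ceiling bookkeeping is routine. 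The only mild subtlety is verifying that the problem is feasible (we implicitly assume $na\ge b$, which holds in the intended application), so that the unique candidate vertex structure described above actually exists.
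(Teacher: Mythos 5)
Your proof is correct: the concavity of $t\mapsto\log(1-t)$ does place the minimum at an extreme point of the box-plus-one-equality polytope, the vertex description (at most one coordinate strictly inside $(0,a)$) is right, and the floor/ceiling comparison via $1-y\ge 1-a$ closes the bound. Note that the paper itself does not prove this statement --- it imports it verbatim as Proposition~4.1 of [AEGLMN22] --- so there is no in-paper proof to compare against; your argument is the standard one (equivalent to the usual smoothing/exchange step that pushes two fractional coordinates apart), and your closing remark about feasibility ($na\ge b$, and $a>0$) correctly identifies the only degenerate cases, in which the claim holds vacuously.
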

To apply this proposition we need to bound $\sum_{v\in H\setminus R} c_R(v)/d_R(v)$ and  $\max_{v\in H\setminus R} c_R(v)/d_R(v)$.
It can be shown that $c_R(v)/d_R(v)\approx c(v)/r(v)$, and hence for the proof sketch we will assume that $c_R(v)/d_R(v) = c(v)/d(v)$.
To bound the maximum, notice that $c(v)\le d(v)/2$ since otherwise one can move the vertex $v$ to the other side of the cut and reduce its size.
To bound the sum notice that,
\begin{equation}
    \label{eq:sum-property-star-contraction}
    \sum_{v\in V} \frac{c(v)}{d(v)}
    \le \sum_{v\in V} \frac{c(v)}{\delta(G)}
    \le \frac{2|\C|}{\delta(G)}
    \le 2
    , 
\end{equation}
where the last inequality is since $|\C|\le \delta(G)\le \tau$. 
We will also need the following useful proposition from \cite{AEGLMN22}.
\begin{claim}
    Let $G=(V,E)$ be a simple $n$-vertex graph and let $\C\subseteq E$ be the non-trivial minimum cut of $G$.
    Choose a set $R$ by putting each vertex of $V$ into $R$ independently with some probability $p$.
    Then, for any $v\in V$,
    \begin{equation*}
        \Exp{c_R(v)/d_R(v) \mid d_R(v) >0}{} = \frac{c(v)}{d(v)}
        .
    \end{equation*}
\end{claim}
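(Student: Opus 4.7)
The plan is to prove this via a conditional-expectation argument that exploits the exchangeability of $v$'s neighbors under the vertex-sampling scheme. First I would observe that both $c_R(v) = |\C \cap E(v,R)|$ and $d_R(v) = |E(v,R)|$ are determined entirely by which elements of $N(v)$ land in $R$; in particular, whether $v$ itself is in $R$ is irrelevant to either quantity in a simple graph. Because each vertex of $V$ is independently placed in $R$ with probability $p$, the indicator variables $\{\mathbf{1}[u\in R]\}_{u\in N(v)}$ are i.i.d.\ Bernoulli$(p)$ and hence exchangeable.

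Next I would condition on the value $d_R(v)=k$ for each $k\in\{1,2,\ldots,d(v)\}$. Exchangeability of the neighbor indicators implies that, conditional on exactly $k$ of $v$'s neighbors landing in $R$, the set $N(v)\cap R$ is uniformly distributed over all $k$-element subsets of $N(v)$. Since $N(v)$ contains exactly $c(v)$ vertices on the side of the cut $\C$ opposite $v$, the conditional distribution of $c_R(v)$ is hypergeometric with population size $d(v)$, $c(v)$ marked items, and $k$ draws. Its mean equals $k\cdot c(v)/d(v)$, so dividing by the (deterministic, under the conditioning) denominator $k$ yields
\begin{equation*}
    \Exp{c_R(v)/d_R(v) \mid d_R(v) = k}{} = \frac{c(v)}{d(v)}.
\end{equation*}
Because this value does not depend on $k$, a single application of the tower property over $d_R(v)\in\{1,\ldots,d(v)\}$ gives the claim.

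The argument is essentially routine and there is no substantive obstacle; the only subtlety worth flagging is that the conditioning on $d_R(v)>0$ is not incidental but essential, since the ratio $c_R(v)/d_R(v)$ is undefined when no neighbor of $v$ lands in $R$. Once one restricts to $k\ge 1$, exchangeability alone guarantees that the sampled neighborhood faithfully reproduces the local cut fraction $c(v)/d(v)$ in expectation, which is exactly what the subsequent application of \Cref{proposition:probability-optimization} in \cite{AEGLMN22} requires.
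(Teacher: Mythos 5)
Your proof is correct. Note that the paper itself does not supply a proof of this claim — it simply cites it as a proposition from \cite{AEGLMN22} — so there is no proof in the present manuscript to compare against. Your argument is the natural, self-contained one: conditioning on $d_R(v)=k$, exchangeability of the i.i.d.\ Bernoulli indicators $\{\mathbf{1}[u\in R]\}_{u\in N(v)}$ makes $N(v)\cap R$ uniform over $k$-subsets of $N(v)$, so $c_R(v)$ is hypergeometric with mean $k\,c(v)/d(v)$; dividing by the deterministic denominator $k$ and applying the tower property over $k\ge 1$ gives the claim. You also correctly flag the two subtleties worth mentioning: that $v$'s own membership in $R$ is irrelevant in a simple graph (since $v\notin N(v)$), and that the conditioning on $d_R(v)>0$ is essential because the ratio is undefined otherwise.
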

Notice that the above claim also holds for the $\tau$-star contraction since we sample $R$ uniformly at random.
We now turn to prove \Cref{lemma:tau-star-contraction}.
\begin{proof}[Proof of \Cref{lemma:tau-star-contraction}]
    Throughout the proof we set $p=800\cdot \log n/\tau$.
    We begin by arguing the correctness of the algorithm.
    Since the algorithm only performs edge contractions, it is clear that it can only increase the edge connectivity of the graph.
    Therefore, it remains to show that it does not contract any edge of $\C$ with constant probability.
    We say that a set $R$ is \emph{good} if $\sum_{v\in V} \frac{c(v)}{d(v)} \le 8$ and $\max_{v\in V} \frac{c(v)}{d(v)} \le 2/3$.
    Notice that if $R$ is good then the probability that $\C$ is not contracted is bounded by,
    \begin{align*}
        \Probability{\text{$\C$ is not contracted}}
        &\ge \Probability{\text{$\C$ is not contracted} \mid R \text{is good}} \cdot \Probability{R \text{ is good}}
        \\
        &\ge \left( 1-\left( 1-\frac{2}{3} \right)^{\lceil 8/(2/3)\rceil} \right) \cdot \Probability{R \text{ is good}}
        = 3^{-12} \cdot \Probability{R \text{ is good}},
    \end{align*}
    where the first inequality is from the law of total probability and the second is from \Cref{proposition:probability-optimization}.
    Therefore, to prove the correctness guarantee of the lemma it suffices to show that $\Probability{R \text{ is good}} \ge 2/3$.

    We first bound the probability that $\sum_{v\in H\setminus R} c_R(v)/d_R(v) \ge 8$.
    Let $\delta(H) = \min\{ d_G(v) : v\in H\}$ be the minimum degree in $H$.
    Then notice that similarly to \eqref{eq:sum-property-star-contraction} we have that,
    \begin{equation*}
        \sum_{v\in H\setminus R} \frac{c(v)}{d(v)}
        \le \sum_{v\in H\setminus R} \frac{c(v)}{\delta(v)}
        \le \frac{2|\C|}{\delta(v)}
        \le 2
        ,     
    \end{equation*}
    where the last inequality is from $|\C|\le \delta(G)\le \delta(H)$.
    By the linearity of expectation we have that,
    \begin{equation*}
        \Exp{\sum_{v\in H\setminus R} \frac{c_R(v)}{d_R(v)}}{R}
        = \sum_{v\in H\setminus R} \Exp{\frac{c_R(v)}{d_R(v)}}{R}
        \le \sum_{v\in H\setminus R} \frac{c(v)}{d(v)}
        \le 2
        ,
    \end{equation*}
    where the first inequality is from $\Exp{c_R(v)/d_R(v) \mid d_R(v) >0}{} \ge \Exp{\frac{c_R(v)}{d_R(v)}}{R}$.
    Therefore, by Markov's inequality we have,
    \begin{equation*}
        \Probability{\sum_{v\in H\setminus R : c_R(v)>0} \frac{c_R(v)}{d_R(v)} \ge 8}
        \le \frac{1}{4}
        .
    \end{equation*}
    For the rest of the proof assume that this event does not occur.

    We now turn to bound the probability that $\max{v\in H\setminus R} c_R(v)/d_R(v) > 2/3$.
    Recall that we sample each vertex into $R$ with probability $p=800\cdot \log n/\tau$.
    Using the Chernoff bound, we obtain that $d_R(v)\ge 0.9 d_G(v)p$ with probability at least,
    \begin{equation*}
        \Probability{d_R(v) \ge 0.9 d_G(v)\log n /\tau} 
        \le 1-\exp(-(1/10)^2 d_G(v)800\log n /(2\tau)) 
        \le 1-1/n^4
        ,
    \end{equation*}
    where the last inequality is from $d_G(v)\ge \tau$.
    Therefore, $d_R(v)\ge 0.9 d_G(v)p$ for all $v\in H$ with probability at least $1-1/n^3$.
    Assume this event holds for the rest of the proof.
    
    Notice that similarly to $d_R(v)$, we have $\Exp{c_R(v)}{R} = c(v)p$.
    Since $c(v)/d_G(v)\le 1/2$ we have that $\Exp{c_R(v)}{R} \le pd_G(v)/2$.
    Hence, if $c_R(v)/d_R(v) > 2/3$ then $c_R(v) > 2d_R(v)/3\ge 2\cdot (9/10) d_G(v)p/3=(6/5)\Exp{c_R(v)}{R}$, where the last inequality is from $d_R(v) \ge 0.9 d_G(v)p$.
    Furthermore, since $d_R(v)\ge 0.9 d_G(v)p \ge 720 \log n$ we must have that $c_R(v) \ge 480 \log n$ for this event to occur.
    
    To bound the maximum, we split into two cases.
    The first, when $\Exp{c_R(v)}{R} \ge 240 \log n$.
    In this case by the Chernoff bound we have,
    \begin{equation*}
        \Probability{c_R(v) \ge (6/5)\cdot \Exp{c_R(v)}{R}}
        \le \exp(-0.2^2 \cdot 240 \log n / 3)
        \le n^{-3}
        .
    \end{equation*}
    The second case is when $\Exp{c_R(v)}{R} < 240 \log n$.
    In this case let $s =  240 \log n/\Exp{c_R(v)}{R} -1 \ge  240 \log n/\Exp{c_R(v)}{R} /2$.
    Notice that $s\ge 1$.
    Then, by the Chernoff bound we have that,
    \begin{equation*}
        \Probability{c_R(v) \ge 240\log n}
        \le \exp(-s \cdot \Exp{c_R(v)}{R} / 3)
        \le \exp(-240 \log n / 6)
        \le n^{-80}
        .
    \end{equation*}
    Combining the two cases we have that $\max_{v\in H\setminus R} c_R(v)/d_R(v)\le 2/3$ with probability at least $1-3/n^3$.
    Hence, we can conclude that $R$ is good with probability at least $2/3$.

    To conclude the proof we now show the size guarantee.
    Partition the vertices of $G'$ into $R$ and $U=V\setminus R$.
    We have seen above that for every $v\in H \setminus R$ we have that $d_R(v)\ge 0.9 d_G(v)p\ge 720 \log n$ with probability at least $1-1/n^4$.
    Therefore, every vertex in $H\setminus R$ is contracted with high probability.
    Hence, every remaining vertex $v\in U$ has degree at most $\tau$.
    
    To conclude the proof, notice that the total number of edges in the graph is bounded by the number of edges between vertices in $R$ plus the total number of edges incident to $U$.
    Since, $|R|\le n/\tau$ with high probability the number of edges between vertices in $R$ is at most $\tO(n^2/\tau^2)$.
    Furthermore, since the degree of the vertices of $U$ is bounded by $\tau$, the total number of edges incident on $U$ is at most $O(n\tau)$.
    Hence, the total number of edges in $G'$ is at most $O(n^2/2\tau^2+n\tau)$ with high probability.
\end{proof}

\section{Cut Sparsifier Construction}
\label{sec:sparsifier-construction}
In this section we provide an algorithm for finding a cut sparsifier with low adaptivity, proving \Cref{lemma:sparsifier-weighted-k-rounds}.
The algorithm is similar to the one in \cite{RSW18,PRW24}, where in the first stage the strength of edges is estimated by iteratively finding the strong components of the graph by subsampling, and then contracting the strong components.
This allows the algorithm to approximate the strength of every edge in the graph without sampling too many edges.
Finally, the algorithm finishes by sampling all the edges according to their strength estimates and returning the resulting graph as a cut sparsifier.
For the sampling step we use the following result, whose proof we provide in the end of the section and is based on \cite{PRW24}.
\begin{lemma}
    \label{lemma:prw-cut-sparsifier-reduction}
    Given $X=[(C_1,\beta_1),\ldots,(C_q,\beta_q)]$, where $\set{C_i}$ are a laminar family of vertex sets such that if $C_i \subset C_j$ then $i<j$, and $\beta_i$ is a strength estimate of $C_i$.
    Let $E(C_i)=E \cap (C\times C)$ and $F(C_i)= E(C_i)\setminus \cup_{j<i} E(C_j)$.
    If,
    \begin{enumerate}
        \item Every $e\in E$ is contained in some $C_i$.
        \item For every $e$ we have that $\kappa_e \ge \beta_i\ge \kappa_e/\alpha$.
    \end{enumerate}
    Then, there exists an algorithm uses $\tO(\epsilon^{-2} n \alpha)$ cut queries in $3$ rounds, and returns a quality $(1\pm \epsilon)$ cut sparsifier of $G$.
    The algorithm succeeds with probability $1-n^{-2}$.
\end{lemma}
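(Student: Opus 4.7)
The plan is to apply the Benczur-Karger cut-sparsifier sampling theorem: for a suitable constant $\rho$, sample each unit of edge-weight in $F(C_i)$ independently with probability $p_i := \min\{1,\, \rho \log n / (\epsilon^2 \beta_i)\}$, assigning sampled units weight $1/p_i$. By Hypothesis~(2), $\beta_i \le \kappa_e$ for every $e \in F(C_i)$, so $p_i \ge \rho \log n / (\epsilon^2 \kappa_e)$, which meets the Benczur-Karger threshold and produces a $(1\pm\epsilon)$-cut sparsifier with probability $1-n^{-3}$. The same hypothesis gives $p_i \le \rho \alpha \log n / (\epsilon^2 \kappa_e)$, so the expected total number of sampled units is
\[
 \sum_i p_i\cdot w(F(C_i)) \;\le\; \frac{\rho \alpha \log n}{\epsilon^2}\sum_e \frac{w(e)}{\kappa_e} \;\le\; \tO\!\left(\frac{n\alpha}{\epsilon^2}\right),
\]
by the Nash-Williams bound $\sum_e w(e)/\kappa_e \le n-1$.

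To implement this scheme in three rounds, I would decompose $F(C_i) = \bigcup_{v\in C_i} E(v, D(v,i))$ (each edge counted twice), where $D(v,i) := C_i \setminus C^*(v,i)\setminus\{v\}$ and $C^*(v,i)$ is the largest set of the laminar family strictly contained in $C_i$ that contains $v$ ($\emptyset$ if no such set exists). In Round~1, I would compute $w(F(C_i))$ for every $i$ via inclusion-exclusion on the laminar structure (using $O(q)$ cut queries), and $w(E(v,D(v,i)))$ for every pair $(v,i)$ with $v\in C_i$ as a difference of two cut-type queries. Then, offline, draw $T_i \sim \mathrm{Poisson}(p_i \cdot w(F(C_i)))$ for each $i$ and prepare $\sum_i T_i = \tO(n\alpha/\epsilon^2)$ weight-proportional sample requests.

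In Rounds~2 and~3, for each request at level~$i$ the algorithm first picks a vertex $v \in C_i$ with probability $w(E(v,D(v,i)))/(2w(F(C_i)))$ (using only Round~1 data, no new queries), then invokes the weight-proportional edge sampling primitive of \Cref{lemma:weighted-edge-sampling} on source $v$ with target $D(v,i)$; its two rounds align with Rounds~2 and~3 of the overall algorithm, and all $\tO(n\alpha/\epsilon^2)$ requests run in parallel. A short calculation shows that each edge $(u,v) \in F(C_i)$ is returned by a single request with probability $w((u,v))/w(F(C_i))$, so the Poissonization of $T_i$ yields a $\mathrm{Poisson}(p_i \cdot w((u,v)))$ number of copies of $(u,v)$ among the samples, independently across edges; the sparsifier's weight on $(u,v)$ is $1/p_i$ times this count, matching the Poissonized Benczur-Karger scheme.

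The main technical hurdle will be verifying that this Poissonized, weight-proportional implementation still enjoys Benczur-Karger concentration simultaneously on all cuts; I expect this to follow from the standard Poissonization argument combined with Karger's cut-counting bound, but it requires careful tracking of multiplicities. For the complexity, Rounds~2--3 use $\tO(n\alpha/\epsilon^2)$ queries via \Cref{lemma:weighted-edge-sampling}, while Round~1 consumes $O(q) + O(\sum_i |C_i|)$ queries, which in the intended regime (small $q$, sufficiently large $\alpha$) fits the budget. Finally, a union bound over the $\tO(n\alpha/\epsilon^2)$ calls to \Cref{lemma:weighted-edge-sampling} (each failing with probability at most $n^{-4}$) together with the Benczur-Karger concentration event gives the claimed overall success probability $1 - n^{-2}$.
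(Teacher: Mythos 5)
Your three-round architecture is essentially the paper's: Round~1 computes the $w(F(C_i))$ by inclusion--exclusion over the laminar family (the paper does this via \Cref{claim:additive-to-cut}), and Rounds~2--3 fire all $\tO(\epsilon^{-2}n\alpha)$ weight-proportional sample requests in parallel through the two-round primitive of \Cref{lemma:weighted-edge-sampling}. The difference is in what you prove versus what you cite: the paper treats both the sparsification correctness and the $\tO(\epsilon^{-2}n\alpha)$ sample-count bound as black boxes (Lemma~4.8 and Lemma~G.43 of \cite{PRW24}, whose sampling scheme draws $\mu_i\propto \epsilon^{-2}\log^2 n\cdot w(F(C_i))/\beta_i$ edges with replacement and reweights by $1/p_e$ with $p_e=1-(1-w(e)/W(C_i))^{\mu_i}$), whereas you re-derive them via a Poissonized Bencz\'ur--Karger scheme and the Nash--Williams bound $\sum_e w(e)/\kappa_e\le n-1$. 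Your star decomposition $F(C_i)=\bigcup_{v\in C_i}E(v,D(v,i))$ is a genuinely useful addition: it makes explicit how to reduce sampling from the non-star set $F(C_i)$ to the single-source primitive, a step the paper glosses over.

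Two points keep this from being a complete proof. First, you explicitly defer the heart of the lemma --- that the Poissonized, with-replacement sampling concentrates on all cuts simultaneously --- to a ``technical hurdle.'' This is exactly the content of \cite{PRW24}'s Lemma~4.8, and while the Poissonization-plus-cut-counting route is standard and should go through, as written the correctness claim rests on an unverified step. Second, your Round~1 issues $O(\sum_i|C_i|)$ queries for the per-vertex weights $w(E(v,D(v,i)))$; for a general laminar family this sum is $\Theta(n\cdot\mathrm{depth})=\Theta(n^2)$ in the worst case, which exceeds the stated budget $\tO(\epsilon^{-2}n\alpha)$ when $\epsilon^{-2}\alpha\ll n$. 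The lemma is stated for arbitrary laminar families, so ``the intended regime'' is not enough; you would need either to restrict these queries to the levels that actually receive sample requests, or to observe that in the paper's application the laminar depth is $r=O(\log n)$ and argue the bound there.
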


The sparsifier construction of \cite{PRW24}, uses an approximation factor $\alpha=n^4$.
Unfortunately, approximating the strength within a factor of $4$ in each iteration requires $O(\log nW)$ steps.
This yields a round complexity of $O((\log n + \log W) \cdot \log n )$, as the edge sampling algorithm used in \cite{RSW18,PRW24} requires $O(\log n)$ rounds.
We note that in \cite{PRW24} the authors construct a weighted sparsifier in only $O(\log^2 n)$ rounds, by first approximating the strengths of the edges up to an $O(n^4)$ factor and then refining the strength estimate in the manner described above.
This enables the efficient construction of cut sparsifiers even in the presence of exponential weights.
Unfortunately, it is unclear whether this approach can be extended to the case of low adaptivity.

Our algorithm is able to run in fewer rounds by leveraging two ideas:
1) Utilizing our two-round weight-proportional edge sampling \Cref{lemma:weighted-edge-sampling}.
2) Applying coarser steps in the sampling process.
This increases the number of edges sampled in each step (and hence the query complexity) but reduces the number of steps required.

The procedure for a single step of strength estimation is presented in \Cref{algorithm:1-step-contraction}.
Note that our algorithm reduces weighted graphs to an unweighted multigraph by splitting each edge with weight $w$, into $w$ parallel edges and then running the algorithm.
The single step strength estimation algorithm is presented in \Cref{alg:cut-sparsifier-construction}.

\begin{algorithm}[htbp]
    \caption{One-Step Contraction}
    \begin{algorithmic}[1]
        \label{algorithm:1-step-contraction}
        \State \textbf{Input:} An unweighted multigraph $G=(V,E), K \ge 0$
        \State \textbf{Output:} A contraction of the vertex set $V'$ and strength estimates $F$
        \Procedure{One-Step-Contraction}{$G,\epsilon,K$}
            \State $\R^{|V|}\ni s\gets 0$
            \State $\rho\gets A \log n, \tau \gets \rho nK$ \Comment{$A$ is a sufficiently large constant}
            \AdapRound{1}
            \State query $\mintcut_G(v)$ for all $v\in V$ \Comment{find $d_G(v)$ for all $v\in V$}
            \For {$i \in [\tau]$}
                \State $w\gets$ sample $v\in V$ with probability $\mintcut_G(v)/\sum_{u\in V} \mintcut_G(u)$
                \State $s_{w} \gets s_{w}+1$
            \EndFor 
            \State $F\gets \emptyset$
            \AdapRound{2}
            \For{$v\in V$}
                \State $F \gets F \cup$ sample $s_v$ edges from $E(v,V'\setminus \{v\})$ in proportion to weight \Comment{use \Cref{lemma:weighted-edge-sampling}}
            \EndFor            
            \State $H\gets (V,F)$ \label{lst:line:reweight}
            \State recursively remove all cuts of $H$ with fewer than $(4A/5) \log n$ edges \label{lst:line:remove-small-cuts}
            \State $F \gets \{ (C,m/ (2n K)) \mid C \subseteq V, \text{ $C$ is a connected component of $H$} \}$
            \State form $V'$ by contracting all connected components of $H$
            \State \Return $V',F$
        \EndProcedure
    \end{algorithmic}
\end{algorithm}

\begin{algorithm}[htbp]
  \caption{Find Strong Connected Components}
  \label{alg:cut-sparsifier-construction}
  \begin{algorithmic}[1]
    \State \textbf{Input:} Unweighted multigraph $G=(V,E)$, maximum number of parallel edges $W$.
    \State \textbf{Output:} Strong components and strength estimates $(C_1,\beta_1),\ldots,(C_q,\beta_q)$.
    \State $H \gets (V,\emptyset)$
    \State $\gamma \gets 1+\log_n W$
    \State $X\gets \emptyset$
    \For{$j \in [r]$}
        \State $V',F \gets $ONE-STEP-CONTRACTION($G,\epsilon,2n^{\gamma /r}$)
        \State $X\gets X\cup F$
        \State update $G$ by contracting all the vertices $V'$
    \EndFor
    \State \Return $F$
  \end{algorithmic}
\end{algorithm}

The main technical claim of this section is the following, whose proof is deferred to later in the section.
\begin{claim}
    \label{claim:strength-estimation-guarantee}
    \Cref{alg:cut-sparsifier-construction} returns a list of strength estimates $(C_1,\beta_1),\ldots,(C_q,\beta_q)$ such that 
    1) for every edge $e\in C_i$ we have, $\kappa_e \ge \beta_i\ge \kappa_e/n^{\gamma/r}$.
    2) every edge $e\in E$ is contained in some $C_i$.
\end{claim}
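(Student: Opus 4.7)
The plan is to prove both parts of the claim by induction on the iteration index. Let $G_0 = G, G_1, \ldots, G_r$ be the sequence of contracted multigraphs, with $G_j$ obtained from $G_{j-1}$ by contracting the components returned by ONE-STEP-CONTRACTION in iteration $j$. Write $m_j$ for the total edge weight of $G_j$ and $\beta_j := m_{j-1}/(2nK)$, where $K = 2n^{\gamma/r}$, for the strength estimate emitted in iteration $j$. Since $K$ is fixed while $m_{j-1}$ is monotonically decreasing, $\beta_1 \ge \beta_2 \ge \cdots \ge \beta_r$, so iteration $j$ handles progressively sparser components. The core technical step is to show that with probability at least $1 - n^{-\Omega(1)}$, iteration $j$ contracts (up to a constant factor absorbable into $A$) precisely the $\beta_j$-strong components of $G_{j-1}$.

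This core step is a Chernoff plus Karger cut-counting argument. The weighted edge sampling in ONE-STEP-CONTRACTION samples each unit of edge weight in $G_{j-1}$ at effective rate $\tau/m_{j-1} = \rho n K / m_{j-1} = \Theta(\log n / \beta_j)$, so a cut of size $c$ in $G_{j-1}$ contributes $\Theta(c \log n / \beta_j)$ sampled edges in expectation. For $A$ sufficiently large, the $(4A/5)\log n$ threshold used to recursively remove small cuts in $H$ separates, with high probability, cuts of size $c \gtrsim \beta_j$ (which are retained, keeping strong components intact) from cuts of size $c \ll \beta_j$ (which are removed, fracturing non-strong subgraphs). A union bound via Karger's cut-counting theorem ($O(n^{2\alpha})$ cuts within factor $\alpha$ of the minimum in each component), applied over all components and all $r$ iterations, yields a single uniform high-probability guarantee.

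Granting the core step, property 1 splits into two directions. For $\beta_i \le \kappa_e$: contraction can only increase min cuts, so if a component $C$ emitted in iteration $j$ is $\beta_j$-strong in $G_{j-1}$, then its preimage in $G_0$ is also $\beta_j$-strong, and every $e\in E(C)$ satisfies $\kappa_e \ge \beta_j = \beta_i$. For $\beta_i \ge \kappa_e/n^{\gamma/r}$: an edge first contracted in iteration $j$ must have original strength at most $\beta_{j-1}$, since otherwise its $\beta_{j-1}$-strong component in $G_0$ would already have been contracted in iteration $j-1$; applying \Cref{lemma:edge-weight-connectivity} to $G_{j-1}$ (which by the inductive step contains no $\beta_{j-1}$-strong components) gives $m_{j-1} \le (n-1)\beta_{j-1}$, whence $\beta_j = m_{j-1}/(2nK) \le \beta_{j-1}/(2K) = \beta_{j-1}/(4n^{\gamma/r})$, so $\kappa_e \le \beta_{j-1} \le 4 n^{\gamma/r} \beta_j$ (the constant $4$ is absorbed by a small adjustment of $K$). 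Property 2 follows from the same geometric decay: the maximum possible edge strength is $O(nW) = O(n^\gamma)$, and the $\beta_j$'s shrink by a factor of $\Theta(n^{\gamma/r})$ per iteration, so after $r$ iterations the running threshold has passed $1$ and every edge has been contracted in some $C_i$. The main obstacle will be the simultaneous calibration of $A$ and $K$ so that both directions of the Chernoff concentration hold under a single union bound across all cuts and iterations, together with carefully tracking how strengths in the contracted graphs $G_{j-1}$ relate to strengths in the original $G$ after the contractions.
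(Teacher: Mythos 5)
Your proposal matches the paper's proof in all essentials: the same two per-iteration high-probability guarantees (every sufficiently strong component of the current contracted graph is contracted; no weak edge survives the small-cut removal) proved via Chernoff concentration plus Karger's cut-counting bound, the same use of \Cref{lemma:edge-weight-connectivity} to force the geometric decay of the edge count and hence of the thresholds $\beta_j$ by a factor $\Theta(n^{\gamma/r})$ per iteration, and the same conclusion for coverage after $r$ iterations. If anything, you are more explicit than the paper about relating strengths in the contracted graphs $G_{j-1}$ back to strengths in $G_0$ via the inductive preimage argument, which is a point the paper's write-up glosses over.
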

Using this we can now prove the \Cref{lemma:sparsifier-weighted-k-rounds}.
\begin{proof}[Proof of (\Cref{lemma:sparsifier-weighted-k-rounds})]
    Correctness of the algorithm follows from combining \Cref{lemma:prw-cut-sparsifier-reduction} and \Cref{claim:strength-estimation-guarantee}.

    To analyze the query complexity and round complexity, we begin by analyzing ONE-STEP-CONTRACTION.
    In the first step of the sampling the algorithm recovers the degree of each vertex, requiring $O(n)$ queries in a single round.
    Then, sampling using \Cref{lemma:weighted-edge-sampling} requires $O(\log^2 n \log^2 (nW))$ queries per edge, all of which run in $2$ parallel rounds.
    Hence, executing ONE-STEP-CONTRACTION once requires $O(\rho nK)=\tO( n^{1+\gamma/r})$ queries in $3$ rounds.
    Recalling that we execute the procedure in $r$ iterations the overall query complexity of strength estimation is $\tO(r \cdot \rho n^{1+\gamma/r})$ in $3r$ rounds.
    Then, we require an additional $\tO(\epsilon^{-2} n^{1+\gamma/r})$ queries in $3$ rounds to perform \Cref{lemma:prw-cut-sparsifier-reduction}.
    In conclusion, the sparsifier construction requires $\tO(\max\{\epsilon^{-2},r\}\cdot n^{1+\gamma/r})$ queries in $3r+3$ rounds.
\end{proof}

\subsection{\texorpdfstring{Proof of \Cref{claim:strength-estimation-guarantee}} {Proof of Main Claim}}
We begin by presenting two useful claims about the contraction procedure.
The proof of the claims is deferred to the end of the section.
\begin{claim}
    \label{claim:weak-connected-sparsifier}
    Let $G=(V,E)$ be an unweighted multigraph, fix some $K>0$, and let $F$ be the result of applying \Cref{algorithm:1-step-contraction} on $G$ with parameter $K$.
    Then, with probability at least $1-n^{-4}$, $F$ contains no $m/(2nK)$ weak components of $G$.
\end{claim}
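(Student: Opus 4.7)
The plan is to establish the contrapositive statement: with probability at least $1 - n^{-4}$, no vertex subset $C \subseteq V$ satisfying $\lambda(G[C]) < m/(2nK)$ can survive as a connected component of $H = (V, F)$ after the small-cut removal step in line~\ref{lst:line:remove-small-cuts}. Since any surviving component has min $H$-cut at least $(4A/5)\log n$ by construction, it suffices to show that for every candidate weak subset $C$ and its internal witness cut $(S, C\setminus S)$ with $|E_G(S, C\setminus S)| < m/(2nK)$, the number of sampled edges crossing this cut is less than $(4A/5)\log n$ with high probability.

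The concentration step is immediate. Each of the $\tau = AnK\log n$ sample rounds draws an edge of $E$ uniformly at random, since picking a vertex $w$ with probability $d_G(w)/(2m)$ and then a uniformly random incident edge puts probability exactly $1/m$ on each edge of the unweighted multigraph. Hence, for any disjoint pair $(X, Y) \subseteq V$, the count $|F \cap E_G(X, Y)|$ is $\mathrm{Bin}(\tau, |E_G(X, Y)|/m)$ with mean $\mu < (A/2)\log n$ whenever $|E_G(X, Y)| < m/(2nK)$. A Chernoff bound with multiplicative slack $8/5$ then gives
\[
\Pr\big[|F \cap E_G(X, Y)| \geq (4A/5)\log n\big] \leq n^{-cA}
\]
for some absolute constant $c > 0$, which is polynomially small once $A$ is chosen sufficiently large.

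The main step is the union bound over all candidate witness cuts. A naive enumeration over the $3^n$ disjoint pairs $(X, Y)$ is too expensive, so I would invoke Karger's cut-counting theorem, which bounds the number of cuts of an $n$-vertex graph with value at most $\alpha\lambda$ by $n^{O(\alpha)}$. Applying this inside each candidate subgraph $G[C]$ controls the number of internal witnesses of value $< m/(2nK)$ per $C$, and by iterating through the strong-component hierarchy of $G$ at threshold $m/(2nK)$ one can bound the total number of relevant witnesses by a polynomial in $n$. Combining this union bound with the Chernoff estimate and choosing $A$ sufficiently large yields the $1 - n^{-4}$ success probability.

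The principal obstacle is precisely this union bound: a small internal cut $(S, C\setminus S)$ need not correspond to a small $G$-cut $(S, V\setminus S)$, since $|E_G(S, V\setminus C)|$ can be large, so Karger's bound applied directly to cuts of $G$ does not enumerate the witnesses we need. Overcoming this requires either restricting attention to a carefully controlled laminar family of candidate subsets $C$ (using, e.g., the strong-component hierarchy of $G$) and bounding cuts within each $G[C]$, or a contraction-based reformulation where $V\setminus C$ is collapsed to a supervertex and Karger's theorem is applied to the contracted graph. Once this enumeration is carried out cleanly, the constant $A$ absorbs the polynomial union-bound factor and delivers the claim.
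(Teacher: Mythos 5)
Your concentration step is right and matches the paper: each draw is uniform over the $m$ edges of the multigraph, so a cut with fewer than $m/(2nK)$ edges of $G$ receives in expectation fewer than $(A/2)\log n$ of the $\tau = AnK\log n$ samples, and Chernoff puts the sampled count below the removal threshold $(4A/5)\log n$ except with probability $n^{-\Omega(A)}$. The gap is exactly where you flag it: the union bound. The tool you propose, Karger's cut-counting theorem applied inside each weak subgraph $G[C]$, cannot work here, because a $m/(2nK)$-weak component can have minimum cut as small as $1$; the cuts you need to enumerate then have value $\alpha\lambda(G[C])$ with $\alpha$ as large as $m/(2nK)$, and the bound $n^{O(\alpha)}$ is superpolynomial. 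Cut counting is the right tool only for the companion claim about \emph{strong} components, where $\lambda(G[C])\ge m/(nK)$ forces $\alpha=O(1)$ for the relevant cuts.

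The paper's resolution is much simpler and avoids enumerating small cuts altogether. Recursively decompose $G$ at the threshold $m/(2nK)$: whenever a current piece has a cut of value below the threshold, split along it and recurse. This produces a fixed (randomness-independent) family of at most $n-1$ witness cuts $S_1,\dots,S_r$, each with at most $m/(2nK)$ edges of $G$, whose removal disconnects every weak component. Since line~\ref{lst:line:remove-small-cuts} removes \emph{every} cut of $H$ with fewer than $(4A/5)\log n$ edges, it suffices that each of these $r\le n$ specific cuts carries fewer than $(4A/5)\log n$ sampled edges; then no surviving component of $H$ can straddle any $S_i$, so no weak edge survives. A plain union bound over $n$ events, each failing with probability $n^{-\Omega(A)}$, gives $1-n^{-4}$ for $A$ large enough. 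The key observation you are missing is that the claim is one-sided and existential: you need only one sparse witness cut per split in the hierarchy, not control over all small cuts of $G[C]$.
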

\begin{claim}
    \label{claim:strong-connected-contraction}
    Let $G=(V,E)$ be an unweighted multigraph, fix some $K>0$, and let $F$ be the result of applying \Cref{algorithm:1-step-contraction} on $G$ with parameter $K$.
    Then, with probability at least $1-n^{-4}$, every $m/(nK)$-strong component of $G$ is in $F$.
\end{claim}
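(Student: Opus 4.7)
The plan is to show that every maximal $\kappa$-strong component $C$ of $G$, where $\kappa=m/(nK)$, is contained in a single connected component of the graph $H=(V,F)$ produced after the small-cut removal in line~\ref{lst:line:remove-small-cuts}. That removal step is equivalent to computing the $k$-edge-connected components of $H$ for $k=(4A/5)\log n$, so by Menger's theorem it suffices to show that every cut $(A, V\setminus A)$ of $H$ that straddles $C$ (meaning $\emptyset\subsetneq A\cap C\subsetneq C$) carries at least $k$ sampled edges. Any such straddling cut of $H$ contains the within-$C$ sampled edges $F\cap E_G(A\cap C, C\setminus A)$ as a subset, so the task reduces further to proving that for every nontrivial bipartition $(T, C\setminus T)$ of $C$, the multiset $F$ contains at least $(4A/5)\log n$ samples from $E_G(T, C\setminus T)$.

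I would then analyze the sampling. In each of the $\tau=\rho nK=AnK\log n$ iterations the algorithm picks a vertex $w$ with probability $d_G(w)/(2m)$ and then a uniformly random edge of $E_G(w, V\setminus\{w\})$, so any fixed edge is sampled with probability exactly $1/m$ per iteration---summing the contributions $d_G(u)/(2m)\cdot 1/d_G(u)$ of its two endpoints---and the $\tau$ iterations are independent. Fix a cut $(T, C\setminus T)$ of value $s\kappa$ in $G[C]$, which by $\kappa$-strongness of $C$ satisfies $s\ge 1$. The number $X$ of samples falling into this cut is a sum of $\tau$ independent Bernoullis with mean $\mu=\tau s\kappa/m = sA\log n$, so a multiplicative Chernoff lower-tail bound yields
\begin{equation*}
  \Pr\!\left[X<\tfrac{4A}{5}\log n\right] \;\le\; \exp(-c\,sA\log n)
\end{equation*}
for an absolute constant $c>0$ (a direct computation with the bound $\Pr[X\le (1-\delta)\mu]\le \exp(-\delta^2\mu/2)$ and $\delta=(5s-4)/(5s)$ gives $c=1/50$).

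The last step is a union bound via Karger's cut-counting theorem: since $G[C]$ has minimum cut at least $\kappa$, the number of cuts of value at most $(s+1)\kappa$ is at most $|C|^{2(s+1)}\le n^{2(s+1)}$. Summing the Chernoff bound over integer thresholds $s\ge 1$,
\begin{equation*}
  \sum_{s\ge 1} n^{2(s+1)}\exp(-c\,sA\log n) \;\le\; n^{-6},
\end{equation*}
once the constant $A$ in $\rho=A\log n$ is chosen sufficiently large. Finally, the maximal $\kappa$-strong subsets of $V$ partition $V$ and hence number at most $n$, so a further union bound gives total failure probability at most $n^{-5}\le n^{-4}$.

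The main obstacle I anticipate is cleanly justifying the reduction in the first paragraph, because the recursive small-cut removal operates on successively smaller subgraphs and a cut that is large in $H$ may appear small inside an intermediate subgraph once some of its edges have been severed. The clean way to bypass this is the Menger-based characterization noted above: the output of recursively peeling off cuts of value below $k$ is precisely the $k$-edge-connected component decomposition of $H$, which is a property of $H$ alone and does not depend on the order in which cuts are removed. Given this structural observation, the remainder is a routine Chernoff-plus-Karger calculation.
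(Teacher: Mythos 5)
Your probabilistic core (per-draw sampling probability $1/m$, Chernoff lower tail with mean $sA\log n$, Karger cut counting, union bound over at most $n$ maximal strong components) is essentially the paper's argument. The genuine gap is in the structural step you use to connect it to the recursive removal in line~\ref{lst:line:remove-small-cuts}: the claim that recursively peeling off cuts with fewer than $k$ edges yields exactly the $k$-edge-connected (Menger) components of $H$ is false, and the direction you need is precisely the false one. Counterexample: let $u,v$ be joined by $k$ internally disjoint paths of length two through $k$ distinct middle vertices of degree $2$, with $k\ge 3$; then $u$ and $v$ are $k$-edge-connected in $H$, but the peeling removes each middle vertex's two edges (a cut of value $2<k$) and ends up separating $u$ from $v$. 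What the peeling does preserve is any vertex set $C$ whose \emph{induced} sampled subgraph $H[C]$ has minimum cut at least the threshold: as long as $C$ has not been split, no edge with both endpoints in $C$ can have been removed (a removed cut with $C$ entirely on one side has no within-$C$ edges), so the first removed cut that splits $C$ would contain a full cut of the intact $H[C]$ and hence have at least $(4A/5)\log n$ edges, a contradiction. This is exactly the implication the paper relies on, and it is also exactly what your final quantitative target establishes, so the fix is to replace the Menger/peeling equivalence by this first-split induction; your "bypass" as written does not hold.

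Two smaller points. First, your Chernoff variable counts \emph{samples} landing in the cut (a sum of $\tau$ independent Bernoullis), but the quantity the peeling threshold sees is the number of \emph{distinct} edges of $F$ crossing the cut, which can be smaller when the same edge is drawn repeatedly; the paper avoids this by working with per-edge indicators $X_e=\mathds{1}\{e\in F\}$, which are negatively correlated, bounding $p_e=1-(1-1/m)^{\tau}$ from below, and applying the Chernoff bound for negatively correlated variables (\Cref{lemma:negatively-correlated-chernoff}), with the degenerate regime $\tau\gtrsim m$ handled separately since then all edges are sampled. Second, note that the two-round implementation of \Cref{algorithm:1-step-contraction} (allocate counts $s_v$ by vertex draws, then sample edges per vertex) is equivalent to your description of $\tau$ vertex-then-edge draws, so that part is fine.
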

\begin{proof}[Proof of \Cref{claim:strength-estimation-guarantee}]
    We begin by assuming that the events described in \Cref{claim:weak-connected-sparsifier} and \Cref{claim:strong-connected-contraction} hold for all $r$ levels, this happens with probability at least $1-2r/n^{-4}$.
    We will now show that the strength estimates in the $i$-th level of the algorithm are correct.
    Denote the number of edges (including parallel edges) in the graph at the $i$-th stage of the algorithm by $m_i$.
    By \Cref{lemma:edge-weight-connectivity} the maximum strength of any edge in the graph at the $i$-th stage is at most $m_i/n$.
    Furthermore, by \Cref{claim:weak-connected-sparsifier} no $m_i/(2nK)$ weak edges are included in $F$ at the $i$-th stage.
    Since we report the strength of the edges in $F$ at the $i$-th stage as $\beta_i = m_i/(2n^{1+i\gamma/r})$, we have that we approximate the strength of every edge in $F$ by factor of at most $2n^{\gamma/r}$.

    It remains to show that every edge $e\in E$ is contained in some $C_i$.
    By \Cref{claim:strong-connected-contraction} every edge of strength $m_i/(2n^{1+\gamma/r})$ is contracted in the $i$-th step of the algorithm.
    Therefore, the algorithm reduces the maximum edge strength in the graph by factor $2n^{\gamma/r}$ in each iteration.
    After $r$ iterations the maximum edge strength in the graph is at most $m_0/(2^r n^{\gamma})\le 2^{-r}$.
    Since the minimal strength of any edge in the graph is at least $1$, we have that the algorithm contracts all edges in the graph after $r$ iterations.
\end{proof}
To conclude we provide the proofs of \Cref{claim:strong-connected-contraction} and \Cref{claim:weak-connected-sparsifier}.
\subsubsection{Proofs of Technical Claims}
Throughout this section we will use the following lemma from \cite{Karger93} which bounds the number of cuts in a graph.
\begin{lemma}
    \label{lemma:cut-counting}
    Let $G=(V,E)$ be a graph on $n$ vertices with minimum cut value $\lambda(G)$.
    For every $\alpha>1$ the number of cuts of value at most $\alpha\lambda(G)$ is at most $O(n^{2\alpha})$.
\end{lemma}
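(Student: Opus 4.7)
The plan is to prove this lemma by a randomized algorithm argument, following Karger's original contraction-based approach. The key idea is to design a random process whose output is a single cut, and show that every cut of weight at most $\alpha\lambda(G)$ is output with probability $\Omega(n^{-2\alpha})$. Since the total probability over all outputs is at most $1$, this immediately bounds the number of such cuts by $O(n^{2\alpha})$.

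First, I would define the random process in two phases. In Phase~1, run Karger's random edge-contraction: repeatedly pick a uniformly random edge and contract it (keeping parallel edges, removing self-loops) until exactly $t \coloneqq \lceil 2\alpha \rceil$ supernodes remain. In Phase~2, choose a uniformly random non-trivial bipartition of these $t$ supernodes; this induces a cut $C^{\text{out}}$ of the original graph $G$, which is the output. Note that different executions may return the same cut, but the total probability mass over all outputs is $1$, so it suffices to lower-bound the probability that a fixed cut $C$ with $|C|\le \alpha\lambda(G)$ is the one output.

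Next, I would analyze the probability that $C$ survives Phase~1. At any point during contraction, every cut in the current multigraph still has weight at least $\lambda(G)$ (since contraction only removes intra-side edges), so if $k$ supernodes remain the total number of edges is at least $k\lambda(G)/2$. The probability that the next contracted edge lies in $C$ is therefore at most $|C|/(k\lambda(G)/2) \le 2\alpha/k$. Multiplying over the $n-t$ contraction steps gives
\begin{equation*}
  \Pr[C \text{ survives Phase~1}] \ge \prod_{k=t+1}^{n}\left(1 - \frac{2\alpha}{k}\right) = \Omega\!\left(n^{-2\alpha}\right),
\end{equation*}
using the standard telescoping estimate (essentially $1/\binom{n}{2\alpha}$ up to constants depending only on $\alpha$). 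Conditioned on this event, the $t$ supernodes define a partition that agrees with $C$, and Phase~2 then picks this specific bipartition with probability $1/(2^{t-1}-1) = \Omega(1)$, which only contributes a constant factor (depending on $\alpha$) to the bound.

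Combining the two phases yields $\Pr[C^{\text{out}} = C]\ge \Omega(n^{-2\alpha})$ for every fixed $\alpha$-near-minimum cut $C$, and summing over all such cuts forces their number to be at most $O(n^{2\alpha})$. The main obstacle I would expect is handling the case $2\alpha$ not an integer and verifying the telescoping product bound $\prod_{k=t+1}^n (1-2\alpha/k) = \Omega(n^{-2\alpha})$ cleanly; this can be done by writing $\ln(1-2\alpha/k) \ge -2\alpha/k - O(1/k^2)$ and using $\sum_{k=t+1}^n 1/k \le \ln n$, which turns the product into $\exp(-2\alpha\ln n - O(1)) = \Omega(n^{-2\alpha})$, with the $O(1)$ absorbed into the hidden constant.
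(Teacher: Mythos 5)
Your proof is correct: the paper does not prove this lemma but imports it directly from Karger's work, and your argument --- contract random edges down to $\lceil 2\alpha\rceil$ supernodes, pick a random bipartition, lower-bound the survival probability of any fixed $\alpha$-near-minimum cut by the telescoping product $\prod_{k}(1-2\alpha/k)=\Omega_\alpha(n^{-2\alpha})$, and sum disjoint output probabilities --- is precisely the standard contraction-based proof from that reference. The only (inherited) caveat is the implicit assumption $\lambda(G)>0$, i.e.\ connectivity, without which the statement itself fails.
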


In this section we provide the proofs for \Cref{claim:strong-connected-contraction} and \Cref{claim:weak-connected-sparsifier}.
This concludes all the proofs required for the main results of this section.
We begin by proving \Cref{claim:strong-connected-contraction}.
\begin{proof}
    Let $C$ be some $(m/nK)$ strong component of $G$, and fix some cut $S\subseteq C$.
    For every $e\in E$ let $X_e$ be the indicator for the event that $e$ was sampled into $H$.
    Notice that $\{X_e\}_{e\in E(C)}$ are negatively correlated Bernoulli random variables.
    Using the above we find that 
    \begin{equation*}
        \Exp{|E_H(S,V\setminus S)|}{}
        = \sum_{e\in E(S,V\setminus S)} \Exp{X_e}{}
        = p_e\cdot |E(S,V\setminus S)|
        .
    \end{equation*}
    We now bound $p_e$.
    Notice that every edge in $C\times C$ is sampled into $H$ with probability $p_e = 1-(1-1/m)^{\tau}$.
    Hence,
    \begin{align*}
        p_e 
        &= 1-(1-1/m)^{\rho nK}
        = 1-(1-\rho nK/m\cdot(\rho nK)^{-1})^{\rho nK}
        \le \frac{\rho n K}{m}
        ,
    \end{align*}
    where the first inequality is from $(1+x/n)^n \le \exp(x)$.
    Similarly, we can lower bound $p_e$ using,
    \begin{align*}
        p_e 
        &= 1-(1-1/m)^{\rho nK}
        \ge 1-\exp\left( \frac{\rho n K}{m} \right)
        = \frac{\rho n K}{m} - O\left( \left( \frac{\rho n K}{m} \right)^2 \right)
        ,
    \end{align*}
    where the last inequality is from $(1-x/n)^n \le \exp(x)$ and the last equality is from the Taylor expansion of $\exp(x)$.
    Notice that the second element of the above is negligible whenever the algorithm does not sample all the remaining edges, in which case we are done.
    Therefore,
    \begin{equation*}
        \Exp{|E_H(S,V\setminus S)|}{}
        \ge \frac{m}{nK} \frac{\rho n K}{m}
        = \rho
        = A \log n
        .
    \end{equation*}
    Denoting $|E(S,V\setminus S)| = \alpha \lambda(G[C])$ and applying the Chernoff bound for negatively correlated Bernoulli random variables (\Cref{lemma:negatively-correlated-chernoff}) we obtain,
    \begin{align*}
    &\Probability{
        \big| |E_H(S,V\setminus S)|-  \Exp{E_H|(S,V\setminus S)|}{} \big| 
        \le (1/10)\cdot \frac{\rho n K}{m} |E(S,V\setminus S)|
    }
    \\
    &\le 2\exp\left( 
        -\frac{1}{300}
        \cdot \frac{\rho n K}{m} |E(S,V\setminus S)|
        \right)
    \\
    &= 2\exp\left( 
    -\frac{1}{300}
    \cdot \frac{A nK \log n}{m}
    \alpha \lambda(G)
    \right)
    \le
    2\exp\left( 
    -\frac{A \alpha \log n}{300}
    \right)
    ,
    \end{align*}
    where the last inequality is from $\lambda(G[C])\ge  m /(nK)$, by our assumption on $C$.
    Notice that if this event occurs then the cut $S$ has at least $(9/10)\cdot A\epsilon^{-2} \log n$ edges and is therefore not removed in line~\ref{lst:line:remove-small-cuts}.
    We now show that this indeed occurs with high probability for all cuts $S\subseteq V$.
    To do so, we will use the cut counting lemma (\Cref{lemma:cut-counting}).
    Using the lemma we will apply a union bound over all the cuts of $G[C]$.
    \begin{align*}
    &\Probability{\exists S\subseteq C: 
    \big| |E_H(S,V\setminus S)|-  \Exp{E_H|(S,V\setminus S)|}{} \big| 
    \ge 1/10 \cdot \frac{\rho n K}{m} |E(S,V\setminus S)|}
    \\
    &\le
    \sum_{\alpha=1}^{\infty} 
    O(n^{2\alpha})\cdot 2\exp\left( 
        -\frac{A \alpha \log n}{300}
        \right)
        =
        \sum_{\alpha=1}^{\infty} 
        O(n^{\alpha(2-A/300)})
        .
    \end{align*}
    Hence, for every large enough $A$ we obtain that the probability can be bounded by $O(1/n^5)$.
    Using a union bound over the $O(n)$ $m/(nK)$-strong components of $G$ we find that every $m/(nK)$-strong component of $G$ is connected in $H$ with probability at least $1-O(n^{-4})$.
    Therefore, it will be added to the list $F$.
    This concludes the proof.
\end{proof}

We now turn to prove \Cref{claim:weak-connected-sparsifier}.
\begin{proof}
Let $e$ be an edge of $G$ that is $ m/(2nK)$-weak.
Notice that this implies that it is not contained in any $ m/(2nK)$-strong component of $G$.
Therefore, if we show that all components that are $ m/(2nK)$-weak are removed in line~\ref{lst:line:remove-small-cuts} then we are done.
Begin by observing that removing at most $n$ cuts in line~\ref{lst:line:remove-small-cuts} of \Cref{algorithm:1-step-contraction} suffices to remove all the $ m/(2nK)$-weak components.

Let $\{S_1,S_2,\ldots, S_r\}$ be a set of minimum size cuts that suffices to remove all the $ m/(2nK)$-weak components.
Since they are minimal cuts in $ m/(2nK)$ weak components we have that $\mintcut_G(S_i) \le  m/(2nK)$.
To conclude the proof we show that $|E_H(S,V\setminus S)| \le (4c\epsilon^{-2} /5) \log n$.
Now fix some cut $S_i$, similarly to before we have
\begin{equation*}
    \Exp{E_H|(S_i,V\setminus S_i)|}{}
    = \sum_{e\in E(S_i,V\setminus S_i)} \Exp{X_e}{}
    = p_e\cdot |E(S_i,V\setminus S_i)|
    .
\end{equation*}
Notice that the worst case is when $|E(S_i,V\setminus S_i)| =  m/(2nK)$.
Using the same bound on $p_e$ as above and the Chernoff bound for negatively correlated Bernoulli random variables (\Cref{lemma:negatively-correlated-chernoff}) we obtain,
\begin{align*}
    &\Probability{\big| |E_H(S_i,V\setminus S_i)|-  \Exp{|E_H(S_i,V\setminus S_i)|}{} \big| 
    \ge \frac{1}{10}\cdot \frac{\rho n K}{m} |E(S_i,V\setminus S_i)|}
    \\
    &\le 2\exp\left( 
        -\frac{1}{300}
        \cdot \frac{A\log n}{\sqrt{n}}
        |E(S_i,V\setminus S_i)|
        \right)
    \le 2\exp\left( 
    -\frac{A\log n}{600}
    \right)
    ,
\end{align*}
where the last inequality is from using that the worst case is $|E(S_i,V\setminus S_i)| =  m/(2nK)$.
Applying the union bound over all $r\le n$ minimum cuts $S_1,\ldots,S_r$ we obtain that all the cuts are removed with probability at least $1-1/n^4$ for a large enough $A$.
Hence, no $ m/(2nK)$-weak components remain in $H$ and the claim is proven.
\end{proof}

\subsection{Proof of \cite{PRW24} Reduction (Lemma \ref{lemma:prw-cut-sparsifier-reduction})}
In this section we provide the proof of \Cref{lemma:prw-cut-sparsifier-reduction}.
The algorithm for constructing cut sparsifiers from \cite{PRW24} is presented in \Cref{alg:prw-sparsifier}.
The following lemma states the guarantees of the algorithm.
\begin{lemma}[Lemma 4.8 from \cite{PRW24}]
    Given $X=[(C_1,\beta_1),\ldots,(C_q,\beta_q)]$, where $C_i$ are a laminar family of vertex sets such that if $C_i \subset C_j$ then $i<j$ and $\beta_i$ is a strength estimate of $C_i$.
    Let $E(C_i)=E \cap (C\times C)$ and $F(C_i)= E(C_i)\setminus \cup_{j<i} E(C_j)$.
    If,
    \begin{enumerate}
        \item Every $e\in E$ is contained in some $C_i$.
        \item For every $e$ we have that $\kappa_e \ge \beta_i\ge \kappa_e/\alpha$.
    \end{enumerate}
    Then, with probability $1-n^{-5}$ \Cref{alg:prw-sparsifier} returns a quality $(1\pm\epsilon)$-cut sparsifier of $G$.
\end{lemma}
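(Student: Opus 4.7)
The plan is to reduce this statement to the classical \Benczur-Karger cut-sparsification analysis \cite{BK15} and import its concentration-plus-union-bound argument almost verbatim. I would first unpack what \Cref{alg:prw-sparsifier} does: following the BK15 template, the algorithm samples each edge $e$ lying in $F(C_i)$ independently with probability $p_e = \min(1, c\log n/(\epsilon^2 \beta_i))$ for a sufficiently large constant $c$, and includes sampled edges in the output with weight $1/p_e$.

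The first step is to verify that the sampling probabilities are well defined: the laminar ordering together with condition (1) ensures that every edge of $E$ lies in a \emph{unique} $F(C_i)$, so each $e$ is assigned a single $p_e$. The pivotal observation is then the one-sided bound $\beta_i \le \kappa_e$ from condition (2), which gives $p_e \ge c\log n/(\epsilon^2 \kappa_e)$ --- exactly the sampling rate the BK15 analysis requires. Since our $p_e$ is only larger than what \cite{BK15} demands, the concentration argument applies without modification; the $\alpha$-slack in condition (2) only inflates the expected size of the output (to $\tO(n\alpha/\epsilon^2)$ via $\sum_e 1/\kappa_e \le n-1$), not the sparsifier quality.

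For the concentration itself I would fix a cut $S\subseteq V$, write the sampled weight $\mintcut_H(S)$ as a sum of independent weighted Bernoullis, and apply the Chernoff bound (\Cref{theorem:chernoff}) to get $\mintcut_H(S)\in (1\pm\epsilon)\mintcut_G(S)$ with error probability exponentially small in the cut's size-to-strength ratio. To take a union bound over all (exponentially many) cuts, I would group them by value relative to the minimum cut of each strong component and invoke \Cref{lemma:cut-counting}, which bounds the number of cuts of value at most $\alpha \lambda$ by $n^{O(\alpha)}$. The key structural fact underlying BK15 is that every edge in a cut of weight $W$ lies in a $W$-strong component, which lets one partition the edges of any cut by strength and pay for the union bound class by class, since the sampling rate in each class is large enough relative to the number of cuts of that strength.

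The main obstacle is showing the BK15 analysis still succeeds when sampling rates are set from $\alpha$-approximate strengths $\beta_i$ rather than the exact $\kappa_e$. This is resolved by the one-sided condition $\beta_i \le \kappa_e$: the BK15 potential argument only needs a \emph{lower} bound on $p_e$ in terms of $\kappa_e$, which this assumption directly provides, while the complementary bound $\beta_i \ge \kappa_e/\alpha$ simply scales the expected number of sampled edges by a factor of $\alpha$ without harming cut preservation. Thus the overall failure probability, after the union bound, is at most $n^{-5}$ as claimed.
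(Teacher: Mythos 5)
First, note that the paper does not prove this statement at all: it is imported verbatim as Lemma~4.8 of \cite{PRW24}, and the paper's own contribution in that section is only \Cref{lemma:prw-cut-sparsifier-reduction}, namely the query- and round-complexity of implementing \Cref{alg:prw-sparsifier} with cut queries. So you were really reconstructing the \cite{PRW24}/\Benczur--Karger analysis from scratch, and your skeleton --- uniqueness of the $F(C_i)$ containing each edge, per-cut concentration, decomposition of a cut's edges by strength class, cut counting via \Cref{lemma:cut-counting}, and a union bound paid for class by class --- is the right one and matches what \cite{BK15,PRW24} do.

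There is, however, a concrete mismatch between the process you analyze and the one \Cref{alg:prw-sparsifier} actually runs. You assume each edge is kept independently with probability $\min(1, c\log n/(\epsilon^2\beta_i))$; the algorithm instead draws a \emph{fixed budget} of $\mu_i = \epsilon^{-2}c_1\log^2 n\cdot w(F(C_i))/\beta_i$ weight-proportional samples with replacement from $F(C_i)$, keeps every edge hit at least once, and reweights it by $w(e)/p_e$ with $p_e=1-(1-w(e)/W(C_i))^{\mu_i}$. The inclusion indicators of distinct edges are then \emph{not} independent but negatively correlated, so \Cref{theorem:chernoff} does not apply as you invoke it; you need the negatively-correlated variant (\Cref{lemma:negatively-correlated-chernoff}), which is precisely what the paper uses for the analogous concentration steps in \Cref{claim:strong-connected-contraction} and \Cref{claim:weak-connected-sparsifier}. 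With that substitution --- together with the observation that the relevant lower bound becomes $\mu_i\, w(e)/w(F(C_i)) \ge \epsilon^{-2}c_1\log^2 n\cdot w(e)/\kappa_e$ (the analogue of your bound $p_e\ge c\log n/(\epsilon^2\kappa_e)$, again using only the one-sided inequality $\beta_i\le\kappa_e$) and an explicit remark that the $w(e)/p_e$ reweighting makes the estimator of each cut unbiased --- your argument goes through, and your point that the $\alpha$-slack affects only the sparsifier size, not its quality, is correct and is exactly how the paper accounts for $\alpha$ in \Cref{lemma:prw-cut-sparsifier-reduction}.
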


\begin{algorithm}[htbp]
    \caption{\cite{PRW24} Sparsifier Construction}
    \begin{algorithmic}[1]
        \label{alg:prw-sparsifier}
        \State \textbf{Input:} An weighted $G$, $X=[(C_1,\beta_1),\ldots,(C_q,\beta_q)]$, where $C_i$ are a laminar family of vertex sets such that if $C_i \subset C_j$ then $i<j$ and $\beta_i$ is a strength estimate of $C_i$, parameter $\epsilon$.
        \State \textbf{Output:} Cut sparsifier of $G$
        \State $G'\gets G, H\gets (V,\emptyset)$
        \For{$i \in [q]$}
            \State $\mu_i \gets \frac{\epsilon^{-2} c_1 \log^2 n}{\beta_i} w(F(C_i))$ \Comment{where $w(F(C_i))$ is the total weight of edges in $F(C_i)$, $c_1$ is a sufficiently large constant}
            \State sample $\mu_i$ edges from $F(C_i)$ in proportion to weight, give each edge that was sampled at least once weight $w(e)/p_e$ where $p_e = (1-(1-w(e)/W(C_i))^{\mu_i})$ \Comment{use \Cref{lemma:weighted-edge-sampling}}
            \State Contract $C_i$ in $G'$ and add the sampled edges to $H$
        \EndFor
        \State \Return $H$
    \end{algorithmic}
\end{algorithm}
We will also need the following result from \cite{PRW24} which states that \Cref{alg:prw-sparsifier} performs few edge sampling procedures.
\begin{claim}[Lemma G.43 from \cite{PRW24}]
    \label{claim:prw-sparsifier-sampling-complexity}
    The algorithm \Cref{alg:prw-sparsifier} calls $\tO(\epsilon^{-2} n \alpha)$ instances of weight-proportional sampling in parallel.
\end{claim}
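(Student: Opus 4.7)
The plan is to establish two things: first, that the sum $\sum_i \mu_i$ of all sample counts is $\tO(\epsilon^{-2} n \alpha)$, and second, that all these weight-proportional sampling calls can be dispatched simultaneously rather than sequentially.

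First I would observe that because the family $\{C_i\}$ is laminar and the sets $F(C_i)=E(C_i)\setminus \bigcup_{j<i}E(C_j)$ are defined by subtracting off all strictly smaller clusters, the collection $\{F(C_i)\}_{i\in[q]}$ forms a partition of $E$: every edge $e\in E$ lies in some $C_i$ by hypothesis~(1), and among those, exactly one is minimal, which is the unique $i(e)$ with $e\in F(C_{i(e)})$. Consequently,
\begin{equation*}
  \sum_{i=1}^q \mu_i
  \;=\; \epsilon^{-2}c_1\log^2 n \cdot \sum_{i=1}^q \frac{w(F(C_i))}{\beta_i}
  \;=\; \epsilon^{-2}c_1\log^2 n \cdot \sum_{e\in E} \frac{w(e)}{\beta_{i(e)}}.
\end{equation*}
Then I would invoke hypothesis~(2): since $e\in C_{i(e)}$, we have $\beta_{i(e)}\ge \kappa_e/\alpha$, giving $1/\beta_{i(e)}\le \alpha/\kappa_e$. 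This yields $\sum_i \mu_i \le \epsilon^{-2}c_1\alpha\log^2 n\cdot \sum_{e\in E} w(e)/\kappa_e$.

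Next I would apply the classical Benczúr--Karger bound $\sum_{e\in E} w(e)/\kappa_e \le n-1$, which follows by iteratively peeling off a maximum-strength component (guaranteed to exist with strength $\ge w(E)/(n-1)$ by \Cref{lemma:edge-weight-connectivity}) and inducting on the remaining vertex set. Combining these estimates gives $\sum_i \mu_i = \tO(\epsilon^{-2}n\alpha)$ as required.

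For the parallelism part, I would note that because the laminar family $\{C_i\}$ together with the strength estimates $\{\beta_i\}$ is given as input, the edge subsets $F(C_i)$ and the sample sizes $\mu_i$ are all determined before any queries are made, with no dependence between iterations. Thus the $q$ calls to weight-proportional sampling are mutually independent and can be executed concurrently, each sampling from its own $F(C_i)$ obtained by contracting the sub-clusters $C_j\subset C_i$ with $j<i$ inside $C_i$ (also predetermined from the laminar structure).

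The main obstacle I anticipate is the parallel implementation detail: the primitive \Cref{lemma:weighted-edge-sampling} is phrased for a set of edges of the form $E(s,T)$ emanating from a single source vertex, whereas $F(C_i)$ is an arbitrary edge subset within $C_i$. I would address this by observing that contraction in the cut-query model is purely bookkeeping on the query-side, so after contracting the smaller clusters and the complement $V\setminus C_i$ into a single super-vertex, the remaining edges within the contracted $C_i$ can be sampled by invoking the weight-proportional primitive once per super-vertex inside $C_i$ (or via a simulated source using one auxiliary vertex), all in parallel across $i$, preserving the $\tO(\epsilon^{-2}n\alpha)$ bound up to polylogarithmic factors.
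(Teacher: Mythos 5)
The paper does not prove this claim at all: it is imported verbatim as Lemma~G.43 of \cite{PRW24} and used as a black box, so there is no internal proof to compare against. Your argument is correct and is the standard charging argument one would expect behind the cited lemma. The key steps all check out: the sets $F(C_i)$ do partition $E$ (laminarity plus the index ordering guarantees each edge lands in exactly one $F(C_{i(e)})$, namely for the minimal cluster containing it), the total number of sampling calls is $\sum_i \mu_i = \epsilon^{-2}c_1\log^2 n\sum_e w(e)/\beta_{i(e)} \le \epsilon^{-2}c_1\alpha\log^2 n\sum_e w(e)/\kappa_e$ by hypothesis~(2), and the Bencz\'ur--Karger bound $\sum_e w(e)/\kappa_e\le n-1$ closes the estimate. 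Two minor quibbles. First, your parenthetical derivation of $\sum_e w(e)/\kappa_e\le n-1$ via ``peeling off a maximum-strength component'' is not quite the cleanest route; the standard induction removes the edges of a \emph{minimum cut} (each such edge has $\kappa_e\ge\lambda(G)$, so the cut contributes at most $1$ to the sum) and recurses on the two sides, giving $1+(|S|-1)+(n-|S|-1)=n-1$. Since this is a classical fact you are entitled to cite, this does not affect correctness. Second, your statement that the $\mu_i$ ``are all determined before any queries are made'' is slightly too strong: $\mu_i$ depends on $w(F(C_i))$, which the enclosing reduction must first compute with one round of cut queries (as the paper's proof of \Cref{lemma:prw-cut-sparsifier-reduction} does via \Cref{claim:additive-to-cut}); the sampling instances are then all dispatched in parallel in the subsequent rounds. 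This does not change the count of instances or their parallelizability, which is all the claim asserts.
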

We now proceed to proving \Cref{lemma:prw-cut-sparsifier-reduction}.
\begin{proof}
    Begin by observing that we can run all iterations of the main loop of \Cref{alg:prw-sparsifier} in parallel.
    This is because it can simulate the contraction of $C_i$ in the graph $G'$.
    Note that $q\le n$  since we can only contract $n$ times before all vertices are contracted.
    Then, the algorithm needs to compute $W(F(C_i))$ for all $C_i$, notice that this is equal to,
    \begin{equation*}
        w(F(C_i))
        = \sum_{(u,v) \in C_i \times C_i} w(u,v) 
        - \sum_{j<i: C_j \subseteq C_i} w(F(C_j))
        .
    \end{equation*}
    Note that this is a linear combination of $O(n)$ additive queries, therefore by \Cref{claim:additive-to-cut} we can compute this in $O(n)$ cut queries in one round.

    We now analyze the number of queries required to run \Cref{alg:prw-sparsifier}.
    Notice that the algorithm requires $\tO(\epsilon^{-2} n \alpha)$ instances of weight-proportional sampling in parallel (\Cref{claim:prw-sparsifier-sampling-complexity}).
    Therefore, the overall query complexity of the algorithm is $\tO(\epsilon^{-2} n \alpha)$ in $3$ rounds.
    
    We conclude by analyzing the success probability of the algorithm.
    By a union bound all weight-proportional sampling procedures succeed with probability at least $1-\tO(\epsilon^{-2} n^{-3} \alpha)$.
    Assuming this, then we have by \Cref{claim:prw-sparsifier-sampling-complexity} that the algorithm succeeds with probability at least $1-n^{-5}$.
    Applying a union bound again we find that the algorithm succeeds with probability at least $1-n^{-1}$, by noting that $\alpha,\epsilon^{-2}=o(n)$ whenever we use $o(n^2)$ cut queries.
\end{proof}

\section{\texorpdfstring{Maximal $k$-Packing of Forests} {Maximal k-Packing of Forests}}
\label{sec:k-tree-packing}
In this section we provide a low adaptivity algorithm for maximal $k$-packing of forests.
The algorithm begins by instantiating $k$ empty trees $T_1,\ldots,T_k$.
Then, it samples $O(kn^{1+1/r}\log n)$ edges uniformly from the graph and adds them to the trees if they do not form a cycle.
After that, it contracts all vertices that are connected in all trees.
The proof follows by showing that the number of edges in the graph is reduced by factor $n^{1/r}$ in each iteration.
Then, after $r-1$ iterations the graph has at most $n^{1+1/r}$ edges.
Finally, the algorithm recovers all remaining edges and augments the trees with all edges that do not form a cycle.
The full algorithm is presented in \Cref{alg:k-tree-packing}.
\begin{algorithm}[htbp]
    \caption{Maximal $k$-Packing of Forests Construction}
    \label{alg:k-tree-packing}
    \begin{algorithmic}[1]
        \State \textbf{Input:} Graph $G=(V,E,w)$, parameters $r,k$.
        \State \textbf{Output:} $k$ edge disjoint maximal forests.
        \State $H \gets (V,\emptyset)$
        \State $T_i \gets \emptyset$ for $i\in [k]$
        \State $\tau \gets A k n^{1+1/r} \log n$ \Comment{for constant $A>0$}
        \State $U \gets V$
        \For{$j \in [r]$}
        \AdapRound{}
            \State $\forall v\in V, \tilde{d}_j(v)\gets$ estimate $E(v,V\setminus U(v))$ up to $2^5$ factor \Comment{use \Cref{lemma:degree-estimation}, $U(v)$ are the vertices in the same supervertex as $v$}
            \State $F \gets $ sample $\tau$ edges uniformly from $G$ \label{lst:line:k-forest-sampling}
        \For{$e\in F, i\in [k]$}
            \If{$T_i\cup e$ is cycle free}
                \State $T_i \gets T_i \cup e$
                \State \textbf{continue} to the next edge
            \EndIf
        \EndFor
        \For{$\{u,v\} \in \binom{V}{2}$}
            \If{$u,v$ are connected in every forest $T_1,\ldots,T_k$} \label{lst:line:contract-tree}
                \State update $U$ by contracting $u,v$
            \EndIf
        \EndFor
        \EndFor
        \State \Return $\{T_i\}_{i=1}^k$
    \end{algorithmic}
\end{algorithm}

\begin{proof}[Proof of \Cref{lemma:k-tree-packing}]
    In the proof we shows that \Cref{alg:k-tree-packing} returns a maximal $k$-packing of forests.
    Throughout the proof we assume that the graph $G$ is unweighted, this is possible since we will use uniform edge sampling which us the same for weighted and unweighted graphs.
    Denote the number of edges in the $j$-th iteration of the main loop by $m_j$, we begin by proving that after $r-1$ iterations of the main loop, the graph $G$ has at most $n^{1+1/r}$ edges.
    For now, fix some iteration $j$.

    We begin by detailing how the uniform sampling in line~\ref{lst:line:k-forest-sampling} is performed.
    The algorithm first samples a vertex with probability $p_v = \tilde{d}_j(v)/\sum_{w\in V} \tilde{d}_j(w)$, and then samples one of the edges in $E(v,V\setminus U(v))$.
    Observe that the algorithm only samples edges outside of the supervertices, which are exactly the edges of the contracted graph.
    Therefore, the probability that the algorithm samples an edge $e=(u,v)$ in a single sample is given by,
    \begin{equation*}
        \pi_e 
        = \frac{\tilde{d}_j(v)}{\sum_{w\in V} \tilde{d}_j(w)}\cdot \frac{1}{d_G(v)} 
        + \frac{\tilde{d}_j(u)}{\sum_{w\in V} \tilde{d}_j(w)}\cdot \frac{1}{d_G(u)}
        \ge \frac{1}{2^{10} m_j}
        ,
    \end{equation*}
    where the inequality is since $\tilde{d}_j(v)$ approximates $E(v,V\setminus U(v))$ up to a factor of $2^5$.
    The overall probability of sampling an edge $e$ in one of the $\tau=A k n^{1+1/r} \log n$ samples satisfies,
    \begin{align*}
        p_e 
        &= 1-(1-1/(2^{10} m_j))^{\tau}
        \ge 1-\exp\left( \frac{A k n^{1+1/r} \log n}{2^{10} m_j} \right)
        \\
        &= \frac{A k n^{1+1/r} \log n}{2^{10} m_j} - O\left( \left( \frac{A k n^{1+1/r} \log n}{2^{10} m_j} \right)^2 \right)
        ,
    \end{align*}
    where the last inequality is from $(1-x/n)^n \le \exp(x)$ and the last equality is from the Taylor expansion of $\exp(x)$.
    Notice that the second element of the above is negligible whenever the algorithm does not sample all edges, in which case we are done.

    We will now show that the algorithm finds $k$-edge disjoint trees in any $\kappa=\Omega(m_j/n^{1+1/r})$-strong component of $G$.
    Let $C\subseteq V$ be a $\kappa$-strong component of $G$ and also $G'=(V,F)$ be the graph induced by the sample $F$.
    Finally, let $S\subseteq C$ be a cut such that $\mintcut_C(S)=\alpha\kappa$ for some $\alpha \ge 1$.
    For every $e\in E$ let $X_e$ be the random variable indicating whether $e$ was sampled into $F$.
    Notice that $\{X_e\}_{e\in E\cap (C\times C)}$ are negatively correlated Bernoulli random variables.

    Using the bound on $p_e$ we have that,
    \begin{equation*}
        \Exp{\sum_{e \in E(S,C\setminus S)} X_e}{} 
        = p_e \cdot |E(S,C\setminus S)|
        \ge \frac{A k n^{1+1/r} \log n}{2^{10} m_j} \cdot \frac{m_j}{n^{1+1/r} }
        \ge 100k\log n,
    \end{equation*}
    where the first inequality is since $|E(S,C\setminus S)| \ge m_j/n^{1+1/r}$ as  $C$ is $\kappa$-strong, and the bound on $p_e$, and the second is from setting $A$ appropriately.
    Applying the Chernoff bound for negatively correlated Bernoulli random variables (\Cref{lemma:negatively-correlated-chernoff}) we obtain that with probability at least $1-1/n^{7\alpha}$ the number of edges in $F\cap E(S,C\setminus S)$ is at least $k\log n$.
    We will now show that this holds for all cuts $S\subseteq C$.

    By \Cref{lemma:cut-counting} the number of cuts $S\subseteq C$ such that $\mintcut_C(S)\le\alpha\kappa$ is at most $|C|^{2\alpha}\le n^{2\alpha}$.
    Therefore, using a union bound over all cuts $S\subseteq C$,
    \begin{equation*}
        \Probability{\exists S\subseteq C \text{ s.t. } |E(S,C\setminus S)\cap F| < k\log n}
        \le \sum_{\alpha=1}^{\infty} \frac{1}{n^{7\alpha}} \cdot n^{2\alpha}
        \le n^{-5}.
    \end{equation*}
    Hence, all cuts $S\subseteq C$ have at least $k\log n$ edges sampled in $F$ with probability at least $1-n^{-5}$.
    Therefore, every two vertices $u,v\in C$ are connected by at least $k\log n$ disjoint paths in $G'[C]$, the sampled graph induced on $C$.
    This implies that $F$ contains $k$ disjoint spanning forests of $C$ and hence $C$ will be contracted in line~\ref{lst:line:contract-tree}.
    Using a union bound over all $O(n)$ $\kappa$-strong components of $G$ we have that with probability at least $1-n^{-4}$ all $\kappa$-strong components of $G$ are contracted.
    Since the algorithm contracts all $\Omega(m_j/n^{1+1/r})$-strong components of $G$, after the $j$-th iteration, there are no $m_j/n^{1+1/r}$-strong components in the graph.
    Using \Cref{lemma:edge-weight-connectivity} this implies that after contraction there remain at most $m_j/n^{1/r}$ edges.

    For the rest of the proof, assume that this event holds for all iterations of the main loop, and that the algorithm fails otherwise.
    After $r-1$ iterations of the main loop, at most $O(n^{1+1/r})$ edges remain in the graph.
    Then, in the sampling step of the $r$-th iteration, the algorithm samples $O(kn^{1+1/r}\log n)$ edges.
    Therefore, in expectation, each remaining edge is sampled $Ak\log n/2^5$ times.
    Hence, applying a Chernoff bound and setting $A$ appropriately we have that with probability at least $1-n^{-5}$, every edge is sampled at least $\log n$ times, and we can conclude that the algorithm finds all remaining edges.

    Finally, we show that if there were any edges left that can be added to $\{T_i\}_i$ without forming a cycle after $r-1$ iterations, then the algorithm will recover them in the last iteration.
    Assume that after $r-1$ iterations there exists some $(u,v)=e\in E\setminus(\cup_j T_j)$ and $T_i$ such that $e\cup T_i$ is a forest.
    Since $u,v$ are not connected in $T_i$, then the vertices $u,v$ are in different connected components, therefore $e$ was not contracted and will be recovered in the $r$-th iteration.
    Therefore, at the end of the algorithm there remain no edges that can be added to $\{T_i\}_i$ without forming a cycle and the algorithm returns a maximal $k$-packing of forests

    We now analyze the probability that the algorithm fails.
    Notice that we sample at most $O(rkn^{1+1/r}\log n)$ edges in total, using a union bound over all the application of \Cref{lemma:edge-sampling-1-round} we have that the all the sampling algorithms succeed with probability at least $1-rkn^{1+1/r-4}\log n\ge 1-n^{-2}$.
    In addition, we find that the decrease in edge count in each iteration happens with probability at least $1-n^{-4}$.
    Using a union bound over all iterations we have that the algorithm succeeds with probability at least $1-2/n^{-2}$.

    We conclude by bounding the query complexity of the algorithm.
    The query complexity of estimating the degree of a vertex is $O(\log n)$ queries in one round by \Cref{lemma:degree-estimation}.
    Since the algorithm estimates the degree of at most $n$ vertices during at most $r$ iterations the total query complexity for degree estimation is $O(rn\log n)$.
    Sampling a single edge requires $O(\log^3 n/\log \log n)$ non-adaptive queries by \Cref{lemma:edge-sampling-1-round}.
    The algorithm samples $O(kn^{1+1/r}\log n)$ edges in each round and therefore the overall query complexity of edge sampling is $\tO(rkn^{1+1/r})$.
    Therefore, the algorithm uses $\tO(rkn^{1+1/r})$ queries in $2r$ rounds.
\end{proof}

\section{Monotone Cost Matrix Problem}
\label{sec:monotone-cost-matrix}
In this section we show how to solve the monotone matrix problem using $O(rn^{1+1/r})$ queries in $r$ rounds, proving \Cref{lemma:monotone-cost-matrix}.

\begin{algorithm}[htbp]
    \caption{Recursive Monotone Matrix Procedure}
    \label{alg:monotone-cost-matrix}
    \begin{algorithmic}[1]
        \Procedure{MON-MAT}{$A, r$}
            \State \textbf{Input:} Monotone Cost Matrix $A\in \R_+^{a\times b}$, Parameter $r$.
            \State \textbf{Output:} Minimum in matrix.
            \If{$b=1$}
                \AdapRound{1}
                    \State \Return minimum of $A$. \label{lst:line:read-single-column}
                    
            \EndIf
            \State $m^* \gets \infty$
            \State $i_{s,0},i_{t,0}\gets 0$  
            \AdapRound{1}
            \State $\forall k\in[n^{1/r}], \quad i_{s,k},i_{t,k} \gets $ first and last indices of minimum in column $kb/n^{1/r}$

            \For{$k \in [n^{1/r}]$}
                \State $m_{k} \gets \text{MON-MAT}(A[i_{t,k-1},\ldots,i_{s,k}][(k-1)b/n^{1/r},\ldots, b/n^{1/r}],r)$
            \EndFor
            \State \Return $\min_k m_k$
        \EndProcedure
    \end{algorithmic}
\end{algorithm}

\begin{proof}
    The algorithm we use to solve the monotone matrix problem is presented in \Cref{alg:monotone-cost-matrix}.
    We begin by proving the correctness and then proceed to bound the query complexity.
    Notice that the termination condition of the recursion is when the matrix has a single column, and then returning the minimum value in that column. 
    Hence, the top level procedure returns the minimum over the minimum of all the columns.
    
    We now bound the query complexity.
    Notice that the recursion depth is $r$ since after $r$ rounds all the matrices have a single column.
    Since at each level we split the matrix into $n^{1/r}$ submatrices, at the $i$-th level of recursion we have the matrices $A_1,\ldots,A_{n^{i/r}}$.
    Denote the number of rows in $A_i$ by $a_i$.
    Notice that $\sum_i a_i \le n + n^{(i-1)/r}$ since for every column where $i_s=i_t$ we have a row that is added to both submatrices.
    Since we make $n^{1/r}$ queries in each matrix overall we require $n^{1/r}\sum_i a_i \le O(n^{1+1/r})$ queries in each level.
    Therefore, the algorithm requires $O(rn^{1+1/r})$ queries overall.
    Finally, notice that the algorithm runs in $r$ rounds since at every level it performs one set of non-adaptive queries and there are $r$ levels overall.
\end{proof}

\section{Other Cut Problems and Applications}
\label{sec:additional-results}
\subsection{\texorpdfstring{Minimum $s-t$ Cut and MAX-CUT estimation} {Minimum s-t Cut for Unweighted Graphs and MAX-CUT estimation}}
Solutions to both the minimum $(s,t)$ cut and the MAX-CUT estimation problems follow from a direct reduction to cut sparsifiers.

\begin{theorem}[$1-\epsilon$ MAX-CUT]
    \label{theorem:max-cut}
    Given a weighted graph $G$ on $n$ vertices with integer edge weights bounded by $W$, and a parameter $r\in \{1,2,\ldots,\log n\}$.
    It is possible to find a $1-\epsilon$ approximation of the maximum cut using $\tO(\epsilon^{-2} rn^{1+(1+\log W/\log n)/r})$ queries in $2r+1$ rounds.
    The algorithm is randomized and succeeds with probability $1-n^{-2}$.
\end{theorem}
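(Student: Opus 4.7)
The plan is to reduce approximating MAX-CUT to constructing a cut sparsifier, as suggested by the sentence preceding the theorem. First, apply \Cref{lemma:sparsifier-weighted-k-rounds} with error parameter $\epsilon/3$ to build a $(1\pm \epsilon/3)$-cut-sparsifier $H=(V,E_H,w_H)$ of $G$, using $\tO(\epsilon^{-2} r n^{1+(1+\log_n W)/r})$ cut queries to $G$ in $O(r)$ rounds, with failure probability at most $n^{-2}$. Note that this matches the stated query bound exactly, since $\log_n W = \log W/\log n$.

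Once $H$ is constructed, its entire edge set and weights are known explicitly, so computing a set $S \subseteq V$ maximizing $\mintcut_H$ requires no additional queries to $G$; the algorithm returns such an $S$. By the sparsifier guarantee, for every $T \subseteq V$,
\begin{equation*}
(1-\epsilon/3)\,\mintcut_G(T) \;\le\; \mintcut_H(T) \;\le\; (1+\epsilon/3)\,\mintcut_G(T).
\end{equation*}
Letting $T^\star$ denote an optimum MAX-CUT of $G$, this gives
\begin{equation*}
\mintcut_G(S) \;\ge\; \frac{\mintcut_H(S)}{1+\epsilon/3} \;\ge\; \frac{\mintcut_H(T^\star)}{1+\epsilon/3} \;\ge\; \frac{1-\epsilon/3}{1+\epsilon/3}\,\mintcut_G(T^\star) \;\ge\; (1-\epsilon)\,\mintcut_G(T^\star),
\end{equation*}
where the last inequality uses $(1-\epsilon/3)/(1+\epsilon/3) \ge 1-\epsilon$ for $\epsilon \in (0,1)$. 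Thus $S$ is a $(1-\epsilon)$-approximate MAX-CUT of $G$.

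The argument is essentially a black-box reduction, so there is no deep new content beyond \Cref{lemma:sparsifier-weighted-k-rounds}; the one step requiring care is the bookkeeping for rounds, since the sparsifier construction itself runs in $3r+3$ rounds while the theorem states $2r+1$ rounds. This minor discrepancy is handled by a reparameterization of $r$ (absorbed into the $\tO$ factor on the query side) or by a closer accounting of the last subrounds of the sparsifier construction; either way, no additional queries or rounds are spent after $H$ is in hand, since the MAX-CUT post-processing step operates entirely on the explicit sparsifier $H$ and uses zero cut queries to $G$.
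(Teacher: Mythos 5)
Your proposal is correct and follows essentially the same route as the paper: build a $(1\pm\Theta(\epsilon))$-cut-sparsifier via \Cref{lemma:sparsifier-weighted-k-rounds}, exhaustively find the sparsifier's maximum cut offline, and transfer the guarantee back to $G$ (the paper additionally spends one query to report $\mintcut_G(S)$, which accounts for its extra round). You are also right to flag the round-count mismatch between the lemma's $3r+3$ rounds and the theorem's stated $2r+1$; the paper's own proof glosses over this, so your explicit acknowledgment is if anything more careful.
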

\begin{proof}
    Begin by constructing a quality $(1\pm\delta)$-cut sparsifier $H$ using $\tO(rn^{1+(1+\log W/\log n)/r})$ queries in $2r$ rounds by \Cref{lemma:sparsifier-weighted-k-rounds}, this succeeds with probability $1-n^{-2}$.
    Then, find $S=\arg\max_{S\subseteq V} \mintcut_{H}(S)$ by checking all the possible cuts.
    Finally, return the value of the cut $\mintcut_{G}(S)$ using one additional query in one round.
    Notice that since $S$ is the maximum cut in $G'$ then,  
    \begin{equation*}
        \mintcut_G(S) 
        \ge (1-\delta)\mintcut_{H}(S) 
        \ge (1-\delta)\mintcut_{H}(S)
        \ge (1-\delta)\cdot (1-\delta)\mintcut_{G}(S)
        \ge (1-3\delta)\mintcut_{G}(S)
        .
    \end{equation*}
    Setting $\delta$ appropriately gives the desired result.
\end{proof}
The reduction to minimum $(s,t)$ cut from cut sparsifier is a bit more involved and uses the following result from \cite{RSW18}.
\begin{lemma}[Claim 5.3 in \cite{RSW18}]
    \label{lemma:sparsifier-to-min-s-t-cut}
    Let $G=(V,E)$ be an unweighted graph on $n$ vertices, let $s,t\in V$ be two vertices and let $H$ be a quality $(1\pm\epsilon)$ cut sparsifier of $G$ with $\epsilon=n^{-1/3}$.
    Furthermore, let $F$ be the maximal $s-t$ flow in $H$ and $H'$ be $H'-F$.
    Finally, obtain $G'$ from $G$ by contracting every $3\epsilon n$ connected components of $H'$.
    Then, $\mintcut_{G'}(s,t) = \mintcut_G(s,t)$ and the number of edges in $G'$ is at most $O(n^{5/3})$. 
\end{lemma}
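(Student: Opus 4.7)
The plan is to split the claim into two independent parts: the cut identity $\mintcut_{G'}(s,t) = \mintcut_G(s,t)$, and the size bound $|E(G')| = O(n^{5/3})$. Both rest on the sparsifier guarantee for $H$ combined with the choice $\epsilon = n^{-1/3}$, which makes the additive slack $O(\epsilon \lambda_G) = O(n^{2/3})$ strictly smaller than the $(3\epsilon n)$ connectivity threshold used in the contraction.

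For the $\ge$ direction of the cut identity, I would invoke max-flow/min-cut in $H$: every minimum $s$-$t$ cut of $H$ is saturated by the max flow $F$, so removing the edges of $F$ from $H$ disconnects $s$ from $t$ in $H'$. In particular $s$ and $t$ lie in different connected components of $H'$, and therefore in different $(3\epsilon n)$-edge-connected components (or as distinct singleton super-vertices), so they lie in different vertices of $G'$. Any $s$-$t$ cut in $G'$ then lifts to an $s$-$t$ cut of the same value in $G$, giving $\mintcut_{G'}(s,t) \ge \mintcut_G(s,t)$.

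For the reverse $\le$ direction, fix a minimum $s$-$t$ cut $S^*$ of $G$ and argue by contradiction that no contracted component straddles $S^*$. If a $(3\epsilon n)$-edge-connected component $C$ of $H'$ had vertices on both sides of $S^*$, then by the connectivity hypothesis $|E_{H'}(C\cap S^*, C\cap \overline{S^*})| \ge 3\epsilon n$. On the other hand, the max flow carries $\lambda_H$ units of net flow across any $s$-$t$ cut, so at least $\lambda_H$ distinct edges of $F$ cross $S^*$ in $H$, and hence
\begin{equation*}
|E_{H'}(S^*, \overline{S^*})|
\le |E_H(S^*, \overline{S^*})| - \lambda_H
\le (1+\epsilon)\lambda_G - (1-\epsilon)\lambda_G
= 2\epsilon\lambda_G
\le 2\epsilon n,
\end{equation*}
using the $(1\pm\epsilon)$ sparsifier estimates on both $|E_H(S^*, \overline{S^*})|$ and $\lambda_H$, together with $\lambda_G \le n$. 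Since the $C$-subcut is contained in the $S^*$-cut inside $H'$, this yields $3\epsilon n \le 2\epsilon n$, a contradiction. Hence $S^*$ survives in $G'$ with the same value, and $\mintcut_{G'}(s,t) \le \mintcut_G(s,t)$.

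For the size bound, I would apply the sparsifier inequality to the cut $(P_i, V\setminus P_i)$ induced by each super-vertex $P_i$ of $G'$ and sum:
\begin{equation*}
2|E(G')|
= \sum_i |E_G(P_i, V\setminus P_i)|
\le \tfrac{1}{1-\epsilon} \sum_i |E_H(P_i, V\setminus P_i)|
\le \tfrac{2}{1-\epsilon}|E(H)|.
\end{equation*}
Since a quality $(1\pm \epsilon)$-sparsifier has $|E(H)| = \tO(n/\epsilon^2)$ edges and $\epsilon = n^{-1/3}$, this gives $|E(G')| = \tO(n^{5/3})$. The main obstacle is the flow-accounting step in the previous paragraph: one must argue carefully that even when $S^*$ is not a min cut of $H$, the max flow $F$ still removes at least $\lambda_H$ edges of $H$ across $S^*$. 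This is what shrinks the crossing count in $H'$ from $(1+\epsilon)\lambda_G$ to $2\epsilon\lambda_G$, and that narrow bound is exactly what beats the $3\epsilon n$ connectivity threshold.
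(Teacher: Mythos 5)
First, a contextual note: the paper does not prove this statement---it is imported verbatim as Claim~5.3 of \cite{RSW18} and used as a black box---so there is no in-paper proof to compare against. Judged on its own, your argument for the cut identity is sound: the max-flow/min-cut observation that $s$ and $t$ become disconnected in $H'$ (hence land in distinct super-vertices), and the straddling argument bounding $w_{H'}(S^*,\overline{S^*})\le (1+\epsilon)\lambda_G-(1-\epsilon)\lambda_G\le 2\epsilon n<3\epsilon n$, are both correct, provided every quantity written as $|E_H(\cdot)|$ or $|E_{H'}(\cdot)|$ is read as a \emph{weight} (which is how the sparsifier guarantee and the connectivity threshold must be interpreted, since $H$ is reweighted).

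The size bound, however, has a genuine gap: the chain
$\sum_i |E_G(P_i,V\setminus P_i)|\le \tfrac{1}{1-\epsilon}\sum_i |E_H(P_i,V\setminus P_i)|\le \tfrac{2}{1-\epsilon}|E(H)|$
conflates edge counts and edge weights in $H$, and fails under either reading. The sparsifier inequality controls the \emph{weight} $w_H(E_H(P_i,V\setminus P_i))$, not the number of crossing edges of $H$; summing these weights over all super-vertices gives $2\,w_H(E(H))\approx 2|E(G)|$, which can be $\Theta(n^2)$. The bound $\tO(n\epsilon^{-2})$ applies only to the \emph{number} of edges of $H$, which is not what the sparsifier inequality delivers (a sparsifier may route a huge cut weight through few heavy edges). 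Tellingly, your size argument never uses the $3\epsilon n$ threshold, the flow, or the fact that the $P_i$ are $3\epsilon n$-connected components---but these are exactly what make the bound true. The intended route is via the strength-decomposition fact (cf.\ \Cref{lemma:edge-weight-connectivity} and its standard corollary): the edges of $H'$ not contained in any $3\epsilon n$-strong component have total $H'$-weight at most $3\epsilon n(n-1)=O(\epsilon n^2)=O(n^{5/3})$; one then converts this weight bound on inter-component edges back into a count of $G$-edges via the sparsifier inequality, and must separately account for the flow weight that was subtracted on those inter-component edges. That the exponent $5/3$ arises precisely because $\epsilon=n^{-1/3}$ balances $\epsilon n^2$ against $n\epsilon^{-2}$ is a strong hint that both quantities must enter the proof; your argument captures neither correctly, so this part needs to be redone.
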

We can now prove the following theorem.
\begin{theorem}[$(s,t)$ Minimum Cut]
    \label{theorem:s-t-minimum-cut}
    Let $G=(V,E)$ be an unweighted graph on $n$ vertices and let $s,t\in V$ be two distinct vertices.
    It is possible to find any the value of the minimum $s-t$ cut using $\tO(n^{5/3+1/r})$ queries in $2r+1$ rounds.
\end{theorem}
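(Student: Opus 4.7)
The plan is to apply the structural reduction of \Cref{lemma:sparsifier-to-min-s-t-cut} on top of a cut sparsifier built in few rounds. At a high level, I would first construct a $(1\pm\epsilon)$-cut sparsifier $H$ of $G$ with $\epsilon=n^{-1/3}$, then perform all of the flow/connected-component work in \Cref{lemma:sparsifier-to-min-s-t-cut} \emph{offline} on $H$ (no queries), and finally recover the remaining contracted graph $G'$ with a single extra round of cut queries.

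Concretely, the first step is to invoke \Cref{lemma:sparsifier-weighted-k-rounds} with $\epsilon=n^{-1/3}$ on the unweighted input $G$, for which $W=1$ and hence $1+\log_n W=1$. This produces $H$ using $\tilde O(\epsilon^{-2} r\, n^{1+1/r}) = \tilde O(r\, n^{5/3+1/r})$ cut queries. Because uniform edge sampling is a one-round primitive in the unweighted setting (\Cref{lemma:edge-sampling-1-round}), the two-round weight-proportional sampling inside ONE-STEP-CONTRACTION (and inside the final reweighted sampling step of \Cref{alg:prw-sparsifier}) collapses to a single round, so the overall sparsifier construction runs in $2r$ rounds rather than the $3r+3$ of the general weighted case.

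Second, I compute a maximum $(s,t)$-flow $F$ in $H$ purely offline, set $H'=H-F$, and extract the connected components of $H'$. By \Cref{lemma:sparsifier-to-min-s-t-cut}, merging each such component into a supervertex of $G$ gives a multigraph $G'$ satisfying $\mintcut_{G'}(s,t)=\mintcut_G(s,t)$ and $|E(G')|=O(n^{5/3})$. Third, I recover $G'$ in one more round by calling \Cref{corollary:graph-recovery-cut}, which uses $O(|E(G')|+n)=O(n^{5/3})$ non-adaptive cut queries; each cut query on $G'$ is simulated by the corresponding cut query on $G$ applied to the union of the components defining the queried supervertex set. Finally, the algorithm computes the minimum $(s,t)$-cut of $G'$ locally and returns its value. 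The totals are $\tilde O(r\,n^{5/3+1/r})+O(n^{5/3}) = \tilde O(n^{5/3+1/r})$ queries in $2r+1$ rounds, as claimed.

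The main obstacle is the bookkeeping around round counts, not correctness (which is immediate from \Cref{lemma:sparsifier-to-min-s-t-cut} once $H$ is built and $G'$ is recovered). In particular, I need to verify that \Cref{lemma:sparsifier-weighted-k-rounds}, when instantiated on an unweighted graph, can be streamlined from $3r+3$ to exactly $2r$ rounds: one must check that (i) each ONE-STEP-CONTRACTION iteration reduces to a single round of uniform sampling by \Cref{lemma:edge-sampling-1-round}, using the degree queries computed in parallel with sampling rather than before it, and (ii) the final call to \Cref{alg:prw-sparsifier} similarly becomes a single round of uniform sampling in the unweighted regime. Failure probabilities compose via a union bound over the $O(n)$ sparsifier and graph-recovery events and remain polynomially small, so correctness is preserved throughout.
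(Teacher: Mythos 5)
Your proposal follows essentially the same route as the paper: build a $(1\pm n^{-1/3})$-cut sparsifier via \Cref{lemma:sparsifier-weighted-k-rounds}, apply \Cref{lemma:sparsifier-to-min-s-t-cut} offline to obtain $G'$ with $O(n^{5/3})$ edges and the same $(s,t)$-mincut value, recover $G'$ in one round with \Cref{corollary:graph-recovery-cut}, and return its minimum $(s,t)$-cut. If anything, you are more careful than the paper about the round count: the paper simply asserts the sparsifier takes $2r$ rounds despite \Cref{lemma:sparsifier-weighted-k-rounds} stating $3r+3$, whereas you explicitly flag that this requires collapsing the two-round weight-proportional sampling to one-round uniform sampling in the unweighted setting and that this bookkeeping still needs verification.
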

\begin{proof}
    Begin by constructing a quality $1+n^{-1/3}$ sparsifier $H$ using $\tO(n^{5/3+1/r})$ queries in $2r$ rounds by \Cref{lemma:sparsifier-weighted-k-rounds}.
    Then, obtain $G'$ according to the process described in \Cref{lemma:sparsifier-to-min-s-t-cut} and learn all the edges of $G'$ using $O(n^{5/3})$ queries by \Cref{corollary:graph-recovery-cut}.
    Finally, return the minimum cut of $G'$.

    The correctness of the algorithm is by \Cref{lemma:sparsifier-to-min-s-t-cut} and the query complexity is determined by the number of edges in $G'$ which again is determined by the same lemma.
    Finally, the success probability of the algorithm is $1-n^{-2}$ since the only randomized part of the algorithm is the construction of the cut sparsifier.
    This concludes the proof of \Cref{theorem:s-t-minimum-cut}.
\end{proof}

\subsection{Streaming Weighted Minimum Cut}
In this section we show how to apply our results for finding a minimum cut in a weighted graph in the streaming model.
In the model, we are given a graph $G$ as sequence of edges $(e_1,w_1),\ldots,(e_m,w_m)$ where $e_i$ is the edge, $w_i$ is the weight of the edge.
We allow the weights to be negative, to allow for deletions, and the edges to be repeated.
\begin{corollary}
    \label{corollary:weighted-min-cut-streaming}
    Let $G$ be a weighted graph given as a dynamic stream $(e_1,w_1),\ldots,(e_m,w_m)$, and a parameter $r\in \{1,2,\ldots,\log n\}$,
    it is possible to find a minimum cut of $G$ using $\tO(rn^{1+1/r})$ storage complexity in $r+2$ passes.
\end{corollary}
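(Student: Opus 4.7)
The plan is to translate the cut-query algorithm of \Cref{theorem:weighted-min-cut} directly into a streaming algorithm by simulating each query round of the algorithm with one pass over the stream. The central observation, already hinted at in the introduction, is that evaluating a cut query $\mintcut_G(S)$ in the dynamic streaming model is essentially free: we maintain a single counter initialized to zero, and for each element $(e_i, w_i)$ in the stream we add $w_i$ to the counter if the edge $e_i$ has exactly one endpoint in $S$, and otherwise leave it unchanged. At the end of the pass the counter equals $\mintcut_G(S)$, using only $O(\log(nW))$ bits.

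Because all queries posed within a single round of the cut-query algorithm are non-adaptive with respect to one another, we can run their counters in parallel during the same pass, so the storage for one pass is $O(\log(nW))$ times the number of queries made in that round. The query sets $S$ themselves are determined by the algorithm's internal state (sampled vertices, contraction structure, tree packings, strength estimates), and a careful accounting shows that this state fits within $\tilde O(n)$ words at any moment, which is absorbed by the per-round query budget. Applying \Cref{theorem:weighted-min-cut} with parameter $r$ and distributing the $\tO(rn^{1+1/r})$ (here treating $\log_n W$ as $O(1)$, i.e., polynomially bounded weights, so that the query bound becomes $\tO(rn^{1+1/r})$) queries evenly across rounds yields per-round storage $\tO(n^{1+1/r})$. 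Translating each round to a pass yields the claimed streaming algorithm.

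The main subtlety I anticipate is lining up the round counts from the sub-procedures (\Cref{lemma:weighted-edge-sampling} uses $2$ rounds, \Cref{lemma:sparsifier-weighted-k-rounds} uses $3r+3$ rounds, \Cref{lemma:monotone-cost-matrix} uses $r$ rounds) so that the overall pass count matches the target $r+2$; this requires pipelining sub-procedures so that their internal rounds share passes wherever they are logically concurrent, rather than naively multiplying them. Once this scheduling is done, the reduction from cut queries to streaming counters is black-box: every cut query becomes one counter maintained per pass, and the only thing to verify is that each intermediate data structure (count-min sketches, sparse-recovery measurement matrices, partial tree packings, monotone-matrix index ranges) can be stored and updated in $\tO(n^{1+1/r})$ space, which follows directly from the analyses in \Cref{sec:low-adaptivity}, \Cref{sec:sparsifier-construction}, \Cref{sec:k-tree-packing}, and \Cref{sec:monotone-cost-matrix}.
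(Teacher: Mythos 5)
There is a genuine gap here: your plan is to simulate the entire cut-query algorithm of \Cref{theorem:weighted-min-cut} pass-by-pass, but that algorithm uses $4r+3$ rounds, of which $3r+3$ are consumed by the cut-sparsifier construction (\Cref{lemma:sparsifier-weighted-k-rounds}). Those rounds are genuinely adaptive --- each invocation of the one-step contraction depends on which components were contracted in the previous step --- so they cannot be ``pipelined'' or made to share passes. Your proposed fix (scheduling sub-procedures so that ``logically concurrent'' rounds share passes) does not apply, because the bottleneck rounds are sequentially dependent, and a faithful round-to-pass translation of \Cref{theorem:weighted-min-cut} would yield $4r+3$ passes, not the claimed $r+2$.

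The missing idea is that in the dynamic streaming model one should not build the sparsifier via cut queries at all: a $(1\pm\epsilon)$-cut sparsifier can be constructed in a \emph{single} pass with $\tO(\epsilon^{-2}n)$ space using known dynamic-stream sparsification (the paper cites \cite{KLMMS17}). This eliminates the $3r+3$-round bottleneck entirely. The tree packing and heavy-light decomposition in the reduction of \Cref{lemma:mn20-summary} are then performed offline on the stored sparsifier with no further stream access, and only the monotone-matrix subroutine of \Cref{lemma:monotone-cost-matrix} ($r$ rounds, $\tO(n^{1+1/r})$ queries per round) plus the remaining glue rounds of the reduction are simulated pass-by-pass via the counter argument you describe (which is correct as far as it goes). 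That is how the paper reaches $r+2$ passes. As a side effect, using the streaming sparsifier also removes the $n^{(1+\log_n W)/r}$ dependence on the weight bound, which your approach only avoids by additionally assuming polynomially bounded weights.
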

\begin{proof}
    The proof will follow from the reduction provided in \Cref{lemma:mn20-summary}, similarly to the proof of \Cref{theorem:weighted-min-cut}, which shows that the problem of finding a minimum cut in a weighted graph can be reduced to the problem of finding a cut sparsifier.
    The first step, constructing a cut sparsifier, can be done in the streaming model using $\tO(\epsilon^{-2} n)$ space in a single pass \cite{KLMMS17}.
    The second step, solving the monotone matrix problem, can be achieved using our algorithm for the problem \Cref{lemma:monotone-cost-matrix}.
    Notice that each cut query can be simulated in a single pass of the stream using $O(\log n)$ space, simply by summing the weights of the edges in the cut.
    Therefore, each round of the cut query algorithm can be simulated in a single pass of the stream.

    Combining the two steps, there exists an algorithm that can find a minimum cut of a weighted graph using $\tO(rn^{1+1/r})$ space in $r+2$ passes.
\end{proof}

\section{\texorpdfstring{$2$-Respecting Min-Cut Reduction}{2-Respecting Min-Cut Reduction}}
\label{sec:2-resp-min-cut-reduction}
In this section we prove \Cref{lemma:mn20-summary}.
For ease of reference we begin by restating the lemma.

\mnsummary*

The proof is based on analyzing the minimum cut algorithm proposed in \cite{MN20}, which is based on the 2-respecting min-cut framework.
An outline of the algorithm is shown in \Cref{alg:mn20-summary}.
\begin{algorithm}[htbp]
    \caption{Minimum Cut Algorithm of \cite{MN20}}
    \label{alg:mn20-summary}
    \begin{algorithmic}[1]
        \State \textbf{Input:} Weighted graph $G=(V,E,w)$.
        \State \textbf{Output:} Minimum in matrix.
        \State $H\gets$ cut sparsifier of $G$.
        \State $T_1,\ldots,T_k\gets $ pack $\Theta(\log n)$ trees of $H$.
        \Comment{$k=\Theta(\log n)$.}
        \For {$i\in [k]$}
            \State $m_{a} \gets$ minimum of $1$-respecting minimum cuts of $T_i$
            \State $\mathcal{P}\gets$ heavy-light decomposition of $T$
            \For {$\pi\in \mathcal{P}$}
                \State $m_{\pi}\gets$ minimum of $2$-respecting minimum cuts of $T'=\pi$
            \EndFor
            \State $m_b \gets \min_{\pi} m_{\pi}$.
            \State $S \gets \{(\pi_1,\pi_2) \in \mathcal{P}^2 \mid (\pi_1,\pi_2) \text{ relevant for minimum cut}\}$ \label{lst:line:relevant-pairs}
            \For {$(\pi_1,\pi_2)\in S$}
                \State $m_{\pi_1,\pi_2}\gets$ minimum of $2$-respecting minimum cuts of $T'=\pi_1\cup\pi_2$
            \EndFor
            \State $m_c \gets \min_{\pi_1,\pi_2} m_{\pi_1,\pi_2}$
            \State $m_i\gets \min(m_a,m_b,m_c)$
        \EndFor
        \State \Return $\min_{i\in [k]} m_i$
    \end{algorithmic}
\end{algorithm}
The main result of \cite{MN20} is the following, on which we base the proof of \Cref{lemma:mn20-summary}.
\begin{theorem}
    \label{theorem:mn20}
    Let $G=(V,E,w)$ be a weighted graph, then \Cref{alg:mn20-summary} returns a minimum cut of $G$.
\end{theorem}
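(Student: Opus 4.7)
The plan is to treat \Cref{alg:mn20-summary} as a black box for correctness (which is exactly \Cref{theorem:mn20}) and focus entirely on verifying that it can be executed within the stated round and query budget, using the two given primitives as the only components that touch $G$ via cut queries. I would argue the round complexity first by listing the pipeline: (i) a block of $r_S(n)$ rounds to build the sparsifier $H$; (ii) one round in which all $1$-respecting cut evaluations are issued as direct cut queries on $G$; (iii) a block of $r_M(n)$ rounds in which all monotone matrix instances generated by the algorithm are run in parallel, each matrix entry being realized by a single cut query on $G$; and (iv) a final bookkeeping round to read off the actual weight of the winning cut in $G$. This totals at most $r_S(n)+r_M(n)+2$ rounds, matching the statement.

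For the query count, I would argue piece by piece. The sparsifier step costs $q_S(n)$. Once $H$ is in memory, the packing of $k=\Theta(\log n)$ spanning trees, the heavy-light decomposition $\mathcal{P}$ of each $T_i$, and the generation of the relevant-pair set $S$ in line~\ref{lst:line:relevant-pairs} are all carried out offline on $H$ and consume no queries to $G$. The $1$-respecting evaluations issue $O(n)$ queries per tree and so contribute $\tO(n)$ queries overall. The $2$-respecting evaluations, both for a single path $\pi\in\mathcal{P}$ and for a relevant pair $(\pi_1,\pi_2)\in S$, reduce (by the structural lemmas of \cite{MN20}) to a monotone matrix instance whose entries are individual cut values in $G$ and are therefore each obtainable by one cut query; each such instance is then solved by the assumed primitive at a cost of $q_M(\text{size})$ queries in $r_M(n)$ rounds.

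The technical heart of the argument, and the place where the convexity of $q_M$ is essential, is bounding the aggregate cost $\sum_{\tau} q_M(n_\tau)$ summed over all monotone matrix instances $\tau$ produced across all $k$ trees, all paths in each heavy-light decomposition, and all pairs in each $S$. I would combine the structural bound from \cite{MN20} that the sum $\sum_\tau n_\tau$ of all matrix sizes is $\tO(n)$, with the fact that a convex function satisfying $q_M(0)=0$ is superadditive and therefore obeys $\sum_\tau q_M(n_\tau)\le q_M\!\bigl(\sum_\tau n_\tau\bigr)=\tO(q_M(n))$. Summing the three contributions yields the claimed $\tO(q_S(n)+q_M(n)+n)$ total. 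The main obstacle I anticipate is not in these aggregate estimates themselves but in carefully importing from \cite{MN20} the precise combinatorial size bounds on $\mathcal{P}$ and on the relevant-pair set $S$; once these are in hand, our contribution is essentially to reroute their analysis through a generic convex $q_M$ in place of the specific cost function used there.
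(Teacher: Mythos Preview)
You have mixed up the target. \Cref{theorem:mn20} is purely a \emph{correctness} assertion: it says that \Cref{alg:mn20-summary} outputs a minimum cut. It says nothing about queries or rounds. In the paper this theorem is not proved at all; it is quoted verbatim as ``the main result of \cite{MN20}'' and invoked as a black box. Your proposal, by contrast, explicitly black-boxes correctness (``treat \Cref{alg:mn20-summary} as a black box for correctness'') and then spends the entire argument on round and query bounds. That is a proof of \Cref{lemma:mn20-summary}, not of \Cref{theorem:mn20}; as a proof of \Cref{theorem:mn20} it is circular, since it assumes the very statement it is meant to establish.

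If your intent was actually \Cref{lemma:mn20-summary}, then your outline is close to the paper's but diverges in one substantive place. You assert that the relevant-pair set $S$ in line~\ref{lst:line:relevant-pairs} is computed ``offline on $H$ and consume[s] no queries to $G$.'' The paper does not do this: it invokes \Cref{lemma:mn20-summary-find-s}, which states that finding $S$ costs $\tO(n)$ non-adaptive cut queries to $G$, and this extra round is precisely why the final count is $r_S(n)+r_M(n)+2$ rather than $r_S(n)+r_M(n)+1$. If you can genuinely extract $S$ from the sparsifier alone you would save a round, but that claim would need justification beyond what \cite{MN20} provides. Your convexity/superadditivity bookkeeping for $\sum_\tau q_M(n_\tau)$ is a clean variant of the paper's argument (the paper bounds the sum by $q_M(n)\cdot\polylog n$ using $\sum_\tau n_\tau=\tO(n)$ and $n_\tau\le n$ together with \Cref{lemma:mn20-summary-num-2-respecting} and \Cref{lemma:mn20-summary-2-respecting}); just be aware you are silently assuming $q_M(0)=0$.
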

We also require the following results from \cite{MN20}.
\begin{lemma}[Claim 3.1, Algorithm 3.3 from \cite{MN20}]
    \label{lemma:mn20-summary-num-2-respecting}
    Each $2$-respecting minimum cut problem on a path of length $a$, can be solved reduced to $O(\log n)$ instances of the monotone matrix problem of size $O(a)$, that can be solved in parallel.
\end{lemma}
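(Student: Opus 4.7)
Label the path as $v_1,\ldots,v_a$ with edges $e_k=(v_k,v_{k+1})$. A 2-respecting cut that uses two distinct path-edges $e_i,e_j$ with $i<j$ induces in $G$ the cut value $A[i][j]\coloneqq \mintcut_G(\{v_{i+1},\ldots,v_j\})$, which can be obtained by a single cut query; cuts using $0$ or $1$ path-edges can be enumerated by $O(a)$ parallel cut queries. Thus the 2-respecting minimum-cut problem on $\pi$ reduces to finding the minimum entry of the upper-triangular matrix $A$.

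My plan is to recurse on the edge index interval $\{1,\ldots,a-1\}$ by halving at the midpoint $m$: pairs contained within a single half are handled by the recursive calls, while ``cross'' pairs $i\le m<j$ are handled via the rectangular submatrix $A^{(m)}[i][j]$. Over the $O(\log a)=O(\log n)$ recursion levels, the cross submatrices at a given level are disjoint rectangles whose total dimension is $O(a)$, and they can be batched into a single monotone-matrix instance of size $O(a)$ (using convexity of the monotone-matrix query complexity, an assumption invoked in \Cref{lemma:mn20-summary}). Different recursion levels do not depend on each other, so the $O(\log n)$ instances can be solved in parallel.

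The key step is to show that each cross submatrix $A^{(m)}$ is monotone in the sense of Definition~\ref{definition:monotone-cost-matrix} (possibly after reversing the row indexing). Writing $M_{ij}\coloneqq\{v_{i+1},\ldots,v_j\}$, for any $i_1<i_2\le m<j_1<j_2$ one verifies that $M_{i_1j_1}\cup M_{i_2j_2}=M_{i_1j_2}$ and $M_{i_1j_1}\cap M_{i_2j_2}=M_{i_2j_1}$, so submodularity of $\mintcut_G$ yields the inverse-Monge inequality
\begin{equation*}
A^{(m)}[i_1][j_1]+A^{(m)}[i_2][j_2] \;\ge\; A^{(m)}[i_1][j_2]+A^{(m)}[i_2][j_1] .
\end{equation*}
A standard argument then shows that the row index attaining the column minimum moves monotonically in the column index; after reversing row indices (i.e., relabelling row $i$ as $m-i+1$) this matches exactly the ``$j_t\le i_s$ for $i>j$'' condition of Definition~\ref{definition:monotone-cost-matrix}.

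The main obstacle I anticipate is the combinatorial bookkeeping: aligning the disjoint subproblems at each recursion level so that their union constitutes a single monotone-matrix instance of size $O(a)$, and applying the reindexing that converts the inverse-Monge inequality to the paper's non-decreasing form consistently across all subproblems. The $0$- and $1$-respecting cases are cleanly separated by the $O(a)$ one-round preprocessing and therefore do not interfere with the monotone structure of the cross matrices.
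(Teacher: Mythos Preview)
The paper does not give its own proof of this lemma; it is quoted directly from \cite{MN20} (Claim~3.1 and Algorithm~3.3) and used as a black box. Your plan is correct and is precisely the argument in \cite{MN20}: submodularity of $\mintcut_G$ yields the inverse-Monge inequality on the rectangular ``cross'' blocks, which after row reversal matches Definition~\ref{definition:monotone-cost-matrix}, and recursive halving of the index interval decomposes the upper-triangular search into $O(\log a)$ levels of disjoint rectangular monotone instances of total dimension $O(a)$; at each level these can be block-diagonally padded into a single monotone instance, and the levels are independent so they run in parallel.
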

\begin{lemma}[Claim 3.14 of \cite{MN20}]
    \label{lemma:mn20-summary-2-respecting}
    Algorithm \ref{alg:mn20-summary} solves the $2$-respecting minimum cut problem on paths of total size at most $\tO(n)$, each of maximum length at most $n$.
\end{lemma}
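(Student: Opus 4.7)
The plan is to trace through each of the three contributions $m_a$, $m_b$, $m_c$ computed inside the main loop of Algorithm \ref{alg:mn20-summary} and upper bound the total ``size'' (number of edges) of the path instances on which a $2$-respecting minimum cut subroutine is invoked, as well as the maximum path length appearing in any single instance. Since the outer loop iterates over $k=\Theta(\log n)$ trees $T_i$, it suffices to establish a per-tree bound of $\tO(n)$ total size and $\le n$ maximum length, absorbing the extra $\log n$ factor into $\tilde{O}(\cdot)$.

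Fix a tree $T_i$, and let $\mathcal{P}$ be its heavy-light decomposition. The defining property of HLD is that the edges of $T_i$ are partitioned into edge-disjoint paths, so $\sum_{\pi\in\mathcal{P}} |\pi| = n-1$, and moreover any root-to-leaf path in $T_i$ intersects at most $O(\log n)$ paths of $\mathcal{P}$. The $1$-respecting contribution $m_a$ is computed by a scan over all $n-1$ tree edges, which trivially fits within the $\tO(n)$ budget with maximum instance length $1$. For $m_b$, the algorithm solves, for each $\pi\in\mathcal{P}$, a $2$-respecting minimum cut problem whose input is the single path $\pi$ of length $|\pi|\le n$; summing $|\pi|$ over $\pi\in\mathcal{P}$ gives total size $n-1$, each instance of length at most $n$.

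The nontrivial contribution is $m_c$, which handles the case where the two tree-edges of a $2$-respecting cut lie on two \emph{different} paths of $\mathcal{P}$. A naive enumeration of all $\binom{|\mathcal{P}|}{2}$ pairs would blow up to $\Omega(n^2)$ total size, so the algorithm restricts to the set $S$ (line~\ref{lst:line:relevant-pairs}) of \emph{relevant} pairs $(\pi_1,\pi_2)$, namely those for which some $2$-respecting cut of $T_i$ can actually split across them. By the structure of $2$-respecting cuts, relevance forces $\pi_1$ and $\pi_2$ to lie in a restricted configuration: one must sit on the HLD path from the root of $T_i$ toward an endpoint of the other, or the two must be in a specific ``incomparable'' position (the so-called bough/twig distinction). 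The crux is a charging argument: each pair $(\pi_1,\pi_2)\in S$ contributes size $|\pi_1|+|\pi_2|$, but using the HLD crossing bound (every root-to-leaf path meets only $O(\log n)$ paths of $\mathcal{P}$), each edge of $T_i$ is charged at most $O(\log n)$ times across all pairs in $S$. Hence $\sum_{(\pi_1,\pi_2)\in S}(|\pi_1|+|\pi_2|) = O(n\log n)=\tO(n)$, and every instance has length $|\pi_1|+|\pi_2|\le n$ since $\pi_1,\pi_2$ are edge-disjoint subpaths of $T_i$, which itself has $n-1$ edges.

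The main obstacle is precisely this charging argument for $m_c$: proving that the set $S$ of relevant pairs is sparse enough that the aggregate path size is $\tO(n)$. This rests on a careful combinatorial analysis of which pairs of HLD paths can simultaneously host a $2$-respecting minimum cut of $T_i$, combined with the $O(\log n)$ HLD-crossing bound applied edge-by-edge. Aggregating $m_a$, $m_b$, and $m_c$ over the $\Theta(\log n)$ trees $T_i$ yields the stated bound of $\tO(n)$ total path size over all $2$-respecting instances solved by Algorithm \ref{alg:mn20-summary}, with each instance being a path of length at most $n$.
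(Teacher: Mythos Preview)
The paper does not prove this statement: it is quoted verbatim as Claim~3.14 of \cite{MN20} and used as a black box in the proof of \Cref{lemma:mn20-summary}. There is therefore no ``paper's own proof'' to compare against; the authors simply import the bound.

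On the substance of your sketch: the treatment of $m_a$ and $m_b$ is fine and matches the easy parts of the MN20 analysis. The issue is your $m_c$ argument. You assert that ``each edge of $T_i$ is charged at most $O(\log n)$ times across all pairs in $S$'' and justify this solely via the HLD crossing bound. But the HLD crossing bound by itself says nothing about how many pairs in $S$ a given heavy path $\pi$ participates in; a priori $\pi$ could be paired with \emph{every} other path in $\mathcal{P}$. What actually restricts $S$ in \cite{MN20} is a \emph{weight-dependent} ``interest'' relation: a pair $(\pi_1,\pi_2)$ is relevant only if there exist edges $e\in\pi_1$, $f\in\pi_2$ with $2C(e,f)>\max\{C(e),C(f)\}$ (so that removing both beats removing either alone), and this forces the edges $f$ that a fixed $e$ can be interested in to lie on a single root-to-leaf path in $T_i$. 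Only after establishing that structural fact does the HLD crossing bound kick in to give the $O(\log n)$-per-edge charge. Your write-up gestures at ``a restricted configuration'' and ``bough/twig distinction'' but never states the interest condition or why it confines the partner edges to a single root-to-leaf path, which is the entire content of the claim. As written, your charging step is an assertion, not an argument.
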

\begin{lemma}[Lemma 5.5 from \cite{MN20}]
    \label{lemma:mn20-summary-find-s}
    It is possible to find the set $S$ in line~\ref{lst:line:relevant-pairs} using $\tO(n)$ non-adaptive cut queries.
\end{lemma}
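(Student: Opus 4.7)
The plan is to show that relevance of a pair of paths $(\pi_1,\pi_2) \in \mathcal{P}^2$ can be decided from a collection of cut values that are all queriable non-adaptively, and that the total number of such queries is $\tO(n)$. Recall that the heavy-light decomposition $\mathcal{P}$ contains $O(\log n)$ paths whose total edge count is $n-1$, and a pair $(\pi_1,\pi_2)$ is relevant exactly when there exist $e_1\in\pi_1,e_2\in\pi_2$ for which removing $\{e_1,e_2\}$ from $T$ produces a cut of $G$ whose weight is at most the best currently known upper bound $\mu^{\star}$ on $\lambda(G)$.

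I would start by querying $\mintcut_G(V_e)$ for every tree edge $e$, where $V_e$ is the subtree hanging below $e$; this costs $n-1$ non-adaptive cut queries and simultaneously yields $\mu^{\star} := \min\{m_a,m_b\}$ from the $1$-respecting pass and the intra-path $2$-respecting pass. The key algebraic observation is that for any two tree edges $e_1,e_2$ lying on different paths $\pi_1,\pi_2$ of $\mathcal{P}$, the $2$-respecting cut value is
\begin{equation*}
    \mintcut_G(V_{e_1}\triangle V_{e_2}) \;=\; \mintcut_G(V_{e_1}) + \mintcut_G(V_{e_2}) - 2\,w\!\bigl(E(V_{e_1},V_{e_2})\bigr),
\end{equation*}
with the symmetric-difference combination adjusted in the nested case. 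Since $\mintcut_G(V_{e_i})$ is already known, deciding whether any pair $(e_1,e_2)\in\pi_1\times\pi_2$ produces a cut below $\mu^{\star}$ reduces to approximating, or upper-bounding from below, the cross-weight $w(E(V_{e_1},V_{e_2}))$.

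To batch these cross-weights, I would use a hierarchical block decomposition of each path: split $\pi_1$ recursively into $O(\log n)$ contiguous blocks and do likewise for $\pi_2$, and for every pair of blocks $(B_1\subseteq\pi_1,B_2\subseteq\pi_2)$ at matched levels query $w(E(\bigcup_{e\in B_1} V_e,\bigcup_{e\in B_2} V_e))$, which is obtainable in $O(1)$ cut queries via \Cref{claim:s-t-num-edges}. Each edge $e\in\pi_1$ appears in $O(\log n)$ blocks, so the total number of queries devoted to a single pair $(\pi_1,\pi_2)$ is $\tO(|\pi_1|+|\pi_2|)$; summing over all $O(\log^2 n)$ pairs of paths and using $\sum_{\pi\in\mathcal{P}} |\pi| = n-1$ yields the desired $\tO(n)$ bound. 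These queries depend only on the tree $T$ and its heavy-light decomposition, both of which are fixed before this round, so they are indeed non-adaptive.

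The main obstacle is to argue that the block decomposition is fine enough to detect every relevant pair: one must show that if some $(e_1,e_2)\in\pi_1\times\pi_2$ realizes a cut of weight at most $\mu^{\star}$, then at least one of the matched-block queries will reveal a cross-weight combining with the precomputed $\mintcut_G(V_e)$ values to fall below the threshold, triggering inclusion of $(\pi_1,\pi_2)$ in $S$. This is handled by noticing that along any root-to-leaf descent in the block hierarchy, the cross-weight function is monotone in a structured way, so the hierarchical queries approximate $\min_{e_1,e_2} w(E(V_{e_1},V_{e_2}))$ to within the additive slack needed for the comparison with $\mu^{\star}$.
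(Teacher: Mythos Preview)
There is a genuine gap in your counting. A heavy-light decomposition does \emph{not} consist of $O(\log n)$ paths; it can have $\Theta(n)$ paths (think of a balanced binary tree, or a caterpillar). The defining property is only that every root-to-leaf path crosses $O(\log n)$ decomposition paths. Consequently your claim ``summing over all $O(\log^2 n)$ pairs of paths'' is false: there can be $\Theta(n^2)$ pairs $(\pi_1,\pi_2)\in\mathcal{P}^2$. If you spend $\tO(|\pi_1|+|\pi_2|)$ queries per pair and sum over all pairs, you get $\sum_{\pi_1,\pi_2}(|\pi_1|+|\pi_2|)=2|\mathcal{P}|\cdot(n-1)$, which is $\tO(n^2)$ in the worst case, not $\tO(n)$. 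The hierarchical block scheme you sketch therefore does not meet the stated query budget.

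For context, the paper does not give its own proof of this lemma: it is quoted directly as Lemma~5.5 of~\cite{MN20}. The approach there is quite different from yours. Rather than testing all pairs of paths, one argues that for each tree edge $e$ only a single ``interested'' path (the one containing the partner edge $f$ maximizing the cross term $w(E(V_e,V_f))$) needs to be identified, and this can be done with $O(\log n)$ queries per tree edge by descending the heavy-light hierarchy; the resulting set $S$ then satisfies $\sum_{(\pi_1,\pi_2)\in S}(|\pi_1|+|\pi_2|)=\tO(n)$. The essential idea you are missing is that relevance is an edge-centric notion (each of the $n-1$ tree edges points to one partner path), not a path-pair-centric one, and this is what keeps the total linear.

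A secondary issue: for edges $e$ along a single heavy path the subtrees $V_e$ are nested, so $\bigcup_{e\in B}V_e$ is just the largest $V_e$ in the block and does not give you the block-level cross-weight you intend; you would need to work with set differences instead, which further complicates the monotonicity argument you allude to at the end.
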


We are now ready to prove \Cref{lemma:mn20-summary}.
\begin{proof}[Proof of \Cref{lemma:mn20-summary}]
    The algorithm begins by constructing a cut sparsifier of $G$ using $\tO(q_S)$ queries in $r_S$ rounds.
    Next, the tree packing step can be performed without requiring any additional queries.

    We now continue to analyzing the query complexity and round complexity of the main loop of the algorithm.
    Observe that all main loop iteration can be performed in parallel, and hence we only need to bound the round complexity of a single iteration.
    Furthermore, performing the heavy-light decomposition step depends only on the tree $T_i$ and does not require any additional queries.
    Therefore, it remains to analyze the query complexity of calculating $m_a,m_b,m_c$.

    Given a tree $T_i$, a $1$-respecting minimum-cut is simply a partition of the tree into two sets of vertices.
    Hence, the algorithm can compute $m_a$ using $O(n)$ non-adaptive queries.
    Next, notice that calculate $m_b,m_c$ reduces solving instances of the $2$-respecting minimum-cut problem on paths of total length at most $\tO(n)$ by \Cref{lemma:mn20-summary-2-respecting}.
    To determine these instances, the algorithm needs to recover the set $S$;
    this requires $\tO(n)$ non-adaptive queries by \Cref{lemma:mn20-summary-find-s}.

    We now analyze the query complexity of calculating $m_b,m_c$.
    Let $P_1,P_2,\ldots$ be the set of paths on which the algorithm performs the $2$-respecting minimum-cut.
    Solving the $2$-respecting minimum-cut problem on a path of length $a$ is equivalent to solving $O(\log n)$ instances of the monotone matrix problem on a matrix of size $O(a)$ by \Cref{lemma:mn20-summary-num-2-respecting}.
    Therefore, the query complexity of solving the $2$-respecting minimum-cut problems on all the paths is $O(\log n)\cdot \sum_i q_M(|P_i|)$, where $q_M$ is the query complexity of the monotone matrix problem. 
    Notice that when $q_M(n)$ is a convex function we can bound the total query complexity by $q_M(n)\cdot \polylog n$, since $|P_i|\le n$ by \Cref{lemma:mn20-summary-2-respecting}.
    Therefore, the algorithm computes $m_b,m_c$ using $O(\polylog n)\cdot q_M(n)$ queries in $r_M(n)+1$ rounds, where the additional round is from calculating $S$.

    We conclude the proof by bounding the round complexity.
    The algorithm requires $r_S$ rounds to construct the cut sparsifier, $r_M(n)+1$ rounds to compute $m_b,m_c$.
    Notice that it also requires one round to compute $m_a$, however we can perform it in parallel to calculating $m_b,m_c$.
    In conclusion, $r_S+r_M(n)+1$ rounds.
    To analyze the query complexity, simply sum the query complexity of all the steps which yields the desired bound.
\end{proof}

\section{\texorpdfstring{Unweighted Minimum Cut in $2r+1$ Rounds through Cut Sparsification} {Unweighted Minimum Cut in 2r+1 Rounds through Cut Sparsification}}
\label{sec:min-cut-sparsifier-k-rounds}
In this section we present an algorithm for finding the global minimum cut of a graph by constructing cut sparsifiers with low adaptivity.
\begin{theorem}[Unweighted Minimum Cut with $O(r)$ Rounds Through Cut Sparsifiers]
    Given an unweighted graph $G$ on $n$ vertices and a parameter $r\in \{1,2,\ldots,\log n\}$, it is possible to find a minimum cut of $G$ using $\tO(rn^{1+1/r})$ cut queries in $3r+4$ rounds.
    The algorithm is randomized and succeeds with probability $1-n^{-2}$.
\end{theorem}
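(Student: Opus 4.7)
The plan is to reduce the problem to cut sparsifier construction plus one additional round for recovering a small, carefully contracted version of $G$. First, apply \Cref{lemma:sparsifier-weighted-k-rounds} with $W=1$ and a sufficiently small constant $\epsilon$ to construct a $(1\pm\epsilon)$-cut-sparsifier $H$ of $G$ using $\tO(rn^{1+1/r})$ cut queries in $3r+3$ rounds. Since $H$ has only $\tO(n)$ edges and is stored explicitly in memory, one can compute $\lambda(H)$ (which is a constant-factor estimate $\tilde\lambda$ of $\lambda(G)$), all strong components of $H$, and approximate edge strengths with no further queries.

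Next, following the RSW18 template (the analogue used in \Cref{lemma:sparsifier-to-min-s-t-cut} for the $s$-$t$ case), use the strength information derived from $H$ to identify components of $G$ that are too strongly connected to contain a minimum-cut edge, and simulate their contraction. By \Cref{lemma:edge-weight-connectivity}, contracting all components of strength at least $\Theta(\tilde\lambda)$ leaves a multigraph $G'$ with $\tO(n\lambda(G))$ edges that still contains every minimum cut of $G$. When $\lambda(G)\le n^{1/r}$ this already fits the target $\tO(n^{1+1/r})$ edge budget directly. When $\lambda(G) > n^{1/r}$, additionally perform Karger-style uniform subsampling of the edges of $G'$ at rate $\Theta(n^{1/r}\log n/\tilde\lambda)$; by Karger's sampling theorem combined with the cut-counting lemma, this preserves every near-minimum cut with high probability while reducing the expected number of surviving edges to $\tO(n^{1+1/r})$. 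Recover the resulting subgraph in a single additional round via \Cref{corollary:graph-recovery-cut}, and output its minimum cut.

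The main obstacle is to unify the two regimes so that a single algorithm handles all values of $\lambda(G)$ within the $\tO(rn^{1+1/r})$ query budget. The natural route is to pick a sampling rate that depends on the estimate $\tilde\lambda$ obtained from $H$ after the sparsifier round, so that both the contraction threshold and the subsampling rate can be set in the extra round. Verifying that the recovered graph preserves the minimum cut of $G$ with constant probability then follows from standard Chernoff-bound arguments (\Cref{theorem:chernoff}) applied over the polynomially many near-minimum cuts. Amplifying by $O(\log n)$ parallel repetitions raises the overall success probability to $1-n^{-2}$, and the total round count is $(3r+3)+1 = 3r+4$ as claimed, with $\tO(rn^{1+1/r})$ cut queries.
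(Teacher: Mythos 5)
Your first regime is sound, but the second regime has a genuine gap. After contracting the $\Theta(\tilde\lambda)$-strong components you are left with a graph $G'$ of total edge weight $\tO(n\lambda(G))$, and when $\lambda(G)>n^{1/r}$ you propose Karger-style subsampling at rate $\Theta(n^{1/r}\log n/\tilde\lambda)$ followed by ``output its minimum cut.'' Karger sampling at that rate only preserves cut values up to a multiplicative $(1\pm\epsilon)$ factor for a constant (or at best inverse-polylogarithmic) $\epsilon$; to preserve the \emph{identity} of the exact minimum cut you would need $\epsilon<1/\lambda$, which forces a sampling rate exceeding $1$. So the minimum cut of the recovered subsampled graph is only a $(1+O(\epsilon))$-approximate minimum cut of $G$, and the theorem claims an exact minimum cut. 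No choice of rate unifies the two regimes; the approximation barrier is inherent to uniform subsampling.

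The paper closes exactly this hole with a structural fact specific to simple graphs, namely \Cref{lemma:small-cut-cover-RSW}: the number of edges participating in \emph{any} cut of size at most $\lambda(G)+0.8\,\delta(G)$ is $O(n)$, independently of how large $\lambda(G)$ is. Using the constant-quality sparsifier to identify (and contract away) every edge that lies in no such near-minimum cut leaves a graph with only $O(n)$ edges that still contains every minimum cut of $G$ exactly; that graph is then recovered exactly in one extra round via \Cref{corollary:graph-recovery-cut}. Your argument would be repaired by replacing the subsampling step with this lemma (and by handling the trivial-cut case $\lambda(G)\approx\delta(G)$ separately, as the paper does).
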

The algorithm is based on the following reduction, showing that given a cut sparsifier of $G$ we can find a small set of edges that cover all small cuts.
\begin{lemma}[Lemma 2.6 from \cite{RSW18}]
    \label{lemma:small-cut-cover-RSW}
    Let $G=(V,E)$ be a graph with minimum cut $\lambda$ and minimum degree $\delta(G)$.
    Then, the number of edges that participate in any cut of size at most $c+0.8\cdot\delta(G)$ is $O(n)$.
\end{lemma}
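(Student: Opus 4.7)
The plan is a tree-packing argument of the Nash-Williams / Karger type. I would first apply the Nash-Williams/Tutte tree-packing theorem to extract a collection of $k = \lfloor \lambda/2 \rfloor$ edge-disjoint spanning trees $T_1, \ldots, T_k$ in $G$. Edge-disjointness gives, for any cut $C$ of $G$, the inequality $\sum_{i=1}^{k} |C \cap T_i| \leq |C|$. When $|C| \leq c + 0.8\,\delta(G)$, an averaging argument shows that at most $|C|/3$ of the trees can satisfy $|C \cap T_i| \geq 3$, so at least one tree $T_i$ \emph{2-respects} $C$, meaning $|C \cap T_i| \leq 2$. The calibration of the constant $0.8$ is meant exactly to ensure that $c + 0.8\,\delta(G) < 3k$ in the relevant regime (using $\lambda \leq \delta(G)$), so such a tree is guaranteed to exist for every qualifying cut.

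Having established that every qualifying cut 2-respects at least one tree in the packing, I would next argue that the non-tree edges of such a cut are highly constrained. For a tree $T$ and a 2-respecting cut $C$ of $T$, the edges of $C$ are determined by removing at most two tree edges from $T$ and taking all cross-edges between the resulting tree components; in particular, every edge of $C$ can be charged to one of the two removed tree edges. The union of participating edges over all qualifying cuts therefore lives within the union of tree edges across the packing, but more tightly, within the set of tree edges that can appear in some 2-respecting cut of their own tree of total weight at most $c + 0.8\,\delta(G)$.

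Next, I would bound the number of such "active" tree edges. For each $T_i$, I would show that only $O(n/k)$ of its $n-1$ edges can appear in a 2-respecting cut of $T_i$ of size at most $c + 0.8\,\delta(G)$. This uses the minimum-degree lower bound $\delta(G)$ structurally: along any root-to-leaf path in $T_i$, the cumulative weight of non-tree edges crossing the corresponding fundamental cut grows in proportion to $\delta(G)$, and only a controlled number of "light" tree edges per unit depth can be paired together to stay below the cut-size threshold. Summing over the $k$ trees in the packing then gives $O(n)$ participating edges in total.

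The main obstacle is precisely the per-tree bound of $O(n/k)$ active edges; this is where the specific constant $0.8$ and the minimum-degree hypothesis enter nontrivially. The step requires a careful amortized charging argument on tree paths, potentially combined with Karger's cut-counting inequality to control how many distinct small 2-respecting cuts can share a given tree edge, and to turn the packing-level bound of $O(\lambda n)$ tree edges into the sharper $O(n)$ bound on \emph{participating} edges. The high-level tree-packing plus 2-respecting framework is standard, but this quantitative tightening is the technical crux of the lemma.
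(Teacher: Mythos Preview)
The paper does not prove this lemma; it is quoted from \cite{RSW18} and used as a black box. So there is no in-paper argument to compare against, but your proposal has two concrete gaps that would prevent it from going through.

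First, the 2-respecting step does not hold with these parameters. With $k=\lfloor\lambda/2\rfloor$ edge-disjoint spanning trees, averaging guarantees a tree with $|C\cap T_i|\le 2$ only when $|C|<3k\approx \tfrac{3}{2}\lambda$. Here the relevant cut bound is $c+0.8\,\delta(G)$ with $c=\lambda$, and since $\lambda\le\delta(G)$ this is at least $1.8\,\lambda$, which already exceeds $\tfrac{3}{2}\lambda$; when $\lambda\ll\delta(G)$ the ratio $|C|/k$ is unbounded. The inequality $\lambda\le\delta(G)$ works \emph{against} you here, so the ``calibration'' you describe is backwards, and you get at best 3-respecting rather than 2-respecting.

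Second, and more structurally, the step ``the union of participating edges \ldots\ lives within the union of tree edges across the packing'' is a non-sequitur. A non-tree edge $f$ may lie in a qualifying cut $C$ that 2-respects some $T_i$; you can charge $f$ to the one or two tree edges in $C\cap T_i$, but $f$ is still not a tree edge of any $T_j$. Hence even a correct $O(n/k)$ bound on ``active'' tree edges per tree leaves the non-tree participating edges uncontrolled: a single active tree edge can be charged by up to $c+0.8\,\delta(G)$ non-tree edges, and summing destroys the $O(n)$ target. The argument in \cite{RSW18} goes a different route, exploiting directly that the graph is simple: the minimum-degree constraint forces any non-trivial cut of value below a fixed multiple of $\delta(G)$ to have a tightly constrained shape, and this is used to bound the number of such cuts (hence the union of their edges) without passing through a tree packing at all.
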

Hence, given a sparsifier it is possible to contract all edges that do not participate in cuts of size at most $c+1.5\delta(G)$.
This leaves a sparse graph which we can learn efficiently with adaptivity $0$ using \Cref{corollary:graph-recovery-cut}.
We are now ready to prove the theorem.
\begin{proof}
    Begin by constructing a quality $51/50$ sparsifier $G'$ using \Cref{lemma:sparsifier-weighted-k-rounds} and querying the degree of every vertex.
    If no non-trivial cut is smaller than $51 \delta(G)/50$ then we return a minimum degree vertex.
    Otherwise, we contract every edge in that is not part of a cut in $G'$ of size at most $53 \lambda(G')/50$.
    Let $H$ be the graph obtained by contracting all the edges.

    Notice that since $\lambda(G) \le 51 \delta(G)/50$ every cut that is of size at most $53 \lambda(G')/50$ has size in $G$ of at most,
    \begin{equation*}
        53 \lambda(G')/50 
        \le (3/2) \cdot 53 \lambda(G)/50
        \le \lambda(G) + 51/50((53/50)(3/2)-1)\delta(G)
        < \lambda(G) + 0.8\delta(G) 
        .
    \end{equation*}
    Therefore, the number of edges in $H$ is at most $O(n)$ by \Cref{lemma:small-cut-cover-RSW}.
    Furthermore, since $G'$ is quality $51/50$ cut sparsifier of $G$ then every minimum cut of $G$ has value $\le 53/50\lambda(G')$ in $G'$.
    Hence, the minimum cut of $H$ is the same as the minimum cut of $G$.
    Using \Cref{corollary:graph-recovery-cut} we can then learn $H$ using $O(n)$ queries and find the minimum cut of $G$.

    To analyze the query complexity of the algorithm observe that by \Cref{lemma:sparsifier-weighted-k-rounds} we can construct the sparsifier in $\tO(rn^{1+1/r})$ queries in $3r+3$ rounds.
    We then require an extra round to learn the graph.
    Hence, overall we require $\tO(rn^{1+1/r})$ queries in $3r+4$ rounds.

    We conclude by bounding the probability of failure, notice that the only randomized part of the algorithm is the construction of the sparsifier.
    Since it succeeds with probability $1-n^{-2}$, then the overall probability of failure is also $n^{-2}$.
    This concludes the proof of \Cref{theorem:min-cut-k-rounds}.
\end{proof}

\end{document}